\newtheorem{theorem}{Theorem}[section]
\newtheorem{lemma}[theorem]{Lemma}
\newtheorem{corollary}[theorem]{Corollary}
\renewenvironment{proof}[1][\proofname]{\par
  \vspace{-\topsep}
  \pushQED{\qed}%
  \normalfont
  \topsep3pt \partopsep3pt 
  \trivlist
  \item[\hskip\labelsep
        \itshape
    #1\@addpunct{.}]\ignorespaces
}{%
  \popQED\endtrivlist\@endpefalse
  \addvspace{6pt plus 6pt} 
}
\newcommand{\setmargins}[0]{\vspace*{3pt}\setlength{\itemsep}{16pt}\setlength{\parsep}{0pt}\setlength{\parskip}{-10pt}}
\newcommand{\comm}[2]{\ensuremath{\mathrm{comm}(#1,#2)}}
\newcommand{\zet}[2]{\ensuremath{\Omega_{#1,#2}}}
\title{On sets of commuting and anticommuting Paulis}
\author{Rahul Sarkar\footnote{Institute for Computational and Mathematical Engineering, Stanford University, Stanford, CA, USA}\ \footnote{This work was done during an internship at the IBM T.J. Watson Research Center, Yorktown Heights, NY, USA} and Ewout van den Berg\footnote{IBM T.J. Watson Research Center, Yorktown Heights, NY, USA}}
\date{\today}
\begin{document}

\maketitle

\abstract{
In this work we study the structure and cardinality of maximal sets of commuting and anticommuting Paulis in the setting of the abelian Pauli group. We provide necessary and sufficient conditions for anticommuting sets to be maximal, and present an efficient algorithm for generating anticommuting sets of maximum size. As a theoretical tool, we introduce commutativity maps, and study properties of maps associated with elements in the cosets with respect to anticommuting minimal generating sets. We also derive expressions for the number of distinct sets of commuting and anticommuting abelian Paulis of a given size.
}

\section{Introduction}

In this work we study properties of sets of Pauli operators such that
the elements either all pairwise commute or all pairwise
anticommute. Sets of mutually commuting Paulis arise in the theory of
quantum error correction, for instance in stabilizer
theory~\cite{gottesman1997phd}. Anticommuting Paulis arise in the
mapping of Majorana operators to qubits in fermionic quantum
computation~\cite{bravyi2002fermionic}, as well as in the design of
space-time codes for wireless
communication~\cite{calderbank2001orthogonal}.

An $n$\textit{-Pauli} operator $P$ is formed as the Kronecker product
$\bigotimes_{i=1}^{n}T_i$ of $n$ terms $T_i$, where each term $T_i$ is
either the two-by-two identity matrix $\sigma_i$, or one of the three
Pauli matrices $\sigma_x$, $\sigma_y$, and $\sigma_z$. Pauli operators
have the property that any two operators, $P$ and $Q$, either commute
($PQ=QP$) or anticommute ($PQ=-QP$). Pauli operators can be
represented as strings $\{i,x,y,z\}^n$ and commutativity between two
operators is conveniently determined by counting the number of
positions in which the corresponding string elements differ and
neither element is $i$. If the total count is even, the operators
commute, otherwise they anticommute.  Given two Pauli operators $P$
and $Q$, and arbitrary non-zero coefficients
$\alpha,\beta\in\mathbb{C}$, it is easily verified that the
commutativity of $\alpha P$ and $\beta Q$ is the same as that of $P$
and $Q$. In Section~\ref{Sec:Group} we define the notion of the
abelian Pauli group, thereby allowing us to ignore such coefficients,
which may arise when multiplying Pauli operators. We then study sets
of mutually commuting Paulis in Section~\ref{Sec:Commuting}, and sets
of mutually anticommuting Paulis in Section~\ref{Sec:Anticommuting}.
We study the number of distinct maximally commuting and anticommuting
sets in Section~\ref{Sec:Counting}. We conclude the paper with a
discussion in Section~\ref{Sec:Conclusions}.

\section{Group structure}\label{Sec:Group}

In this subsection we define the \textit{abelian Pauli} group, which
forms the foundation for the remainder of the paper. We can define the
elements of the $n$-\textit{Pauli group} $\mathcal{P}_n$ as all
possible products of $n$-Pauli operators. It is easily checked that
$\mathcal{P}_n$ is a non-abelian group of order $4^{n+1}$. The set
$K = \{I, -I, iI, -iI\}$ is a normal abelian subgroup of
$\mathcal{P}_n$, and we define the \textit{abelian $n$-Pauli group} as
the quotient group $\mathcal{P}_n / K$.  The associated canonical
quotient map will be denoted by $\pi$, i.e.,
$\pi : \mathcal{P}_n \rightarrow \mathcal{P}_n / K$, which is
surjective and is given by $\pi(g) = gK$. We will sometimes denote
$\pi(g)$ by the equivalence class $[g]$, under the quotient
map. $\mathcal{P}_n / K$ is an abelian group of order $4^{n}$, and the
order of each element of the group, other than $I$, is $2$.
Given $\mathcal{H}\subseteq\mathcal{P}_n/K$ and $P\in\mathcal{P}_n/K$
we define multiplication of the element $P$ with the set $\mathcal{H}$
as $P \ast \mathcal{H} = \{PQ : Q \in \mathcal{H} \}$. For two sets
$\mathcal{H}_1,\mathcal{H}_2 \subseteq\mathcal{P}_n/K$ we define
$\mathcal{H}_1\ast\mathcal{H}_2 = \{P_1P_2 : P_1\in\mathcal{H}_1,
P_2\in\mathcal{H}_2\}$.
Multiplications with empty sets give empty sets.  We frequently need
to take the product of all elements in a set $\mathcal{H}$, and write
$\prod\mathcal{H} := \prod_{Q \in \mathcal{H}} Q$. We define the
product of the empty set to be the identity element $I$. We denote
subsets by $\subseteq$ and proper subsets by $\subset$.

\subsection{Generators}
Let $\mathcal{H}$ be a subset of $\mathcal{P}_n/K$. A set
$\mathcal{G}\subseteq\mathcal{H}$ is a \textit{generating set} of
$\mathcal{H}$ whenever any element in $\mathcal{H}$ can be expressed
as a product of the elements in $\mathcal{G}$.  For any
$\mathcal{G} \subseteq \mathcal{P}_n / K$, we denote by
$\langle\mathcal{G}\rangle$ the \textit{generated set} of
$\mathcal{G}$; that is, all elements that can be generated by products
of the elements in $\mathcal{G}$, and thus any subset is a generating
set of a possibly larger subset of $\mathcal{P}_n / K$. The set
$\mathcal{G}$ is called a \textit{minimal generating set}, if no
proper subset of $\mathcal{G}$ generates
$\langle\mathcal{G}\rangle$. We say that the elements in minimal
generating sets are \textit{independent}.

We begin by proving some elementary properties of generating sets of
$\mathcal{P}_n / K$, in particular minimal generating sets, which are
used subsequently in this paper. The first lemma shows that generated
sets always form a subgroup of $\mathcal{P}_n / K$, and also
characterizes the sizes of minimal generating sets in relation to the
sizes of the sets generated by them.

\begin{lemma}\label{lem-generator-abelian-pauli1}
The abelian Pauli group $\mathcal{P}_n / K$ satisfies the following properties.

\begin{enumerate}[(a)]
\item If $\mathcal{G} \subseteq \mathcal{P}_n / K$ is non-empty, then
  $\langle \mathcal{G} \rangle$ is a subgroup of $\mathcal{P}_n / K$.

\item If $\mathcal{S}$ is a subgroup of $\mathcal{P}_n / K$, then
  $|\mathcal{S}| = 2^\ell$, for some $0 \leq \ell \leq
  2n$.
  $\mathcal{G}$ is a generating set of $\mathcal{S}$ iff
  $\langle \mathcal{G} \rangle = \mathcal{S}$. For minimal generating
  sets $\mathcal{G}$ it holds that $I \in \mathcal{G}$ iff
  $\mathcal{S} = \{I\}$. If $\mathcal{S}\neq \{I\}$ then a minimal
  generating set $\mathcal{G}$ of $\mathcal{S}$ always exists, and
  satisfies $|\mathcal{G}| = \ell$, and $\prod \mathcal{H} \neq I$ for
  all non-empty $\mathcal{H} \subseteq \mathcal{G}$.

\item If $\mathcal{G} \subseteq \mathcal{P}_n / K$ is a minimal
  generating set, then $|\mathcal{G}| \leq 2n$. Moreover if
  $|\mathcal{G}| \geq 2$, and $\mathcal{G}' \subset \mathcal{G}$, then
  $P \in (\mathcal{G} \setminus \mathcal{G}')$ implies
  $P \notin \langle \mathcal{G}' \rangle$.
\end{enumerate}
\end{lemma}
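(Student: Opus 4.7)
The organizing observation is that $\mathcal{P}_n/K$ is abelian and every non-identity element has order $2$, so products may be reordered, repeated factors cancel in pairs, and every element of $\langle \mathcal{G} \rangle$ reduces to a product over some subset of $\mathcal{G}$. Part (a) is then a direct verification of the subgroup axioms on these subset-products: closure follows because concatenating two subset-products and cancelling repeats yields another subset-product; the identity lies in $\langle\mathcal{G}\rangle$ because $g\cdot g = I$ for any $g \in \mathcal{G}$; and inverses are present because every element is its own inverse.

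For part (b), I would first invoke Lagrange's theorem in $\mathcal{P}_n/K$, whose order is $4^n = 2^{2n}$, to conclude $|\mathcal{S}| = 2^\ell$ with $0 \leq \ell \leq 2n$. The equivalence between being a generating set and $\langle \mathcal{G}\rangle = \mathcal{S}$ is definitional modulo (a). The technical core is the claim that any minimal generating set $\mathcal{G}$ of a non-trivial $\mathcal{S}$ satisfies $\prod\mathcal{H} \neq I$ for every non-empty $\mathcal{H}\subseteq\mathcal{G}$. I would argue by contrapositive: if $\prod\mathcal{H}=I$ for some non-empty $\mathcal{H}$, then for any $g\in\mathcal{H}$ we get $g = \prod(\mathcal{H}\setminus\{g\})$ upon multiplying by $g$ and using $g^2=I$, so $\mathcal{G}\setminus\{g\}$ still generates $\mathcal{S}$, contradicting minimality. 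Given this non-degeneracy, two distinct subsets of $\mathcal{G}$ producing the same element would give a non-empty symmetric-difference subset with product $I$; hence the $2^{|\mathcal{G}|}$ subset products are pairwise distinct, and since they cover $\mathcal{S}$, we obtain $|\mathcal{G}|=\ell$. Existence of such a $\mathcal{G}$ when $\mathcal{S}\neq\{I\}$ follows inductively: start from any $g_1 \neq I$ in $\mathcal{S}$ and keep adjoining elements of $\mathcal{S}$ lying outside the current generated subgroup; the process terminates in at most $2n$ steps because the subgroup size strictly doubles at each step. Finally, for the $I\in\mathcal{G}$ statement: if $I\in\mathcal{G}$ in a minimal generating set, then $I$ contributes nothing to any product, so $\mathcal{G}\setminus\{I\}$ already generates $\langle\mathcal{G}\rangle$, forcing $\mathcal{G}=\{I\}$ by minimality and hence $\mathcal{S}=\{I\}$; the converse reflects the convention that the minimal generator of the trivial subgroup is $\{I\}$.

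Part (c) follows as a short corollary of (b). The bound $|\mathcal{G}|\leq 2n$ is immediate from $|\mathcal{G}|=\ell$ and the range of $\ell$ in (b). For the exchange property, suppose for contradiction $P \in \mathcal{G}\setminus\mathcal{G}'$ lies in $\langle\mathcal{G}'\rangle$, and write $P = \prod\mathcal{H}$ with $\mathcal{H}\subseteq\mathcal{G}'$; since $P\notin\mathcal{G}'$ we have $P\notin\mathcal{H}$, so $\mathcal{H}\cup\{P\}$ is a non-empty subset of $\mathcal{G}$ with $\prod(\mathcal{H}\cup\{P\}) = P \cdot P = I$, contradicting the non-degeneracy established in (b). The main obstacle throughout is (b), specifically the combined bookkeeping tying minimality to distinctness of subset products and the counting $|\mathcal{G}|=\log_2|\mathcal{S}|$; once that is in place, (a) and (c) are routine.
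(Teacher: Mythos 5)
Your proof is correct and follows essentially the same route as the paper: Lagrange's theorem for the order, the observation that $\prod\mathcal{H}\neq I$ for non-empty subsets of a minimal generating set, distinctness of subset products via symmetric differences to get $|\mathcal{G}|=\ell$, and a greedy doubling construction for existence. The only cosmetic difference is in (c), where you reduce the exchange property to the non-degeneracy of subset products from (b), whereas the paper contradicts minimality directly; both are valid.
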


\begin{proof}

\textit{(a)} Take any $P \in \mathcal{G}$ and notice that $P^2 = I$,
  and so $I \in \langle \mathcal{G} \rangle$. Now if
  $P,Q \in \langle \mathcal{G} \rangle$, then both can be expressed as
  products of elements in $\mathcal{G}$, and hence $PQ$ can also be
  expressed as products of elements in $\mathcal{G}$. This shows that
  $\langle \mathcal{G} \rangle$ is a subgroup of $\mathcal{P}_n / K$,
  as the other group axioms hold automatically.

\textit{(b)} The order of $\mathcal{P}_n / K$ is $4^n$. Hence
Lagrange's theorem~\cite{ROT1999a} implies that if $\mathcal{S}$ is a
subgroup of $\mathcal{P}_n / K$ then the order of $\mathcal{S}$ must
divide $4^n$, and so $|\mathcal{S}| = 2^\ell$, for some
$0 \leq \ell \leq 2n$. If $\langle\mathcal{G}\rangle = S$, then
$\mathcal{G}$ is clearly a generating set for $S$. For the converse,
let $\mathcal{G}$ be a generating set for $S$. By definition of
generating sets we have $\mathcal{G} \subseteq \mathcal{S}$, and as
$\mathcal{S}$ is a subgroup it follows that
$\langle \mathcal{G} \rangle \subseteq \mathcal{S}$. Since
$\mathcal{G}$ generates $S$ we also have
$\mathcal{S} \subseteq \langle \mathcal{G} \rangle$, and hence
$\langle \mathcal{G} \rangle = \mathcal{S}$.
If $\mathcal{S} = \{I\}$ then the minimal generating set is
$\mathcal{G}=\{I\}$. Given a minimal generating set
$\mathcal{G}\neq\{I\}$ it holds that
$\langle \mathcal{G} \rangle = \langle \mathcal{G} \setminus \{I\}
\rangle$,
since any other element can be used to generate $I$. We now show that
for a generating set $\mathcal{G}$ of $\mathcal{S} \neq \{I\}$, we
must have that $\prod \mathcal{H} \neq I$ for all non-empty subsets
$\mathcal{H} \subseteq \mathcal{G}$.  The condition holds when
$\mathcal{H}$ has size one because $I\notin \mathcal{H}$. Otherwise,
suppose $\prod\mathcal{H}=I$, then for any $P \in \mathcal{H}$ it
holds that $P = \prod (\mathcal{H} \setminus \{P\})$ and therefore
$\langle \mathcal{G} \rangle = \langle \mathcal{G} \setminus \{P\}
\rangle$, which contradicts minimality of $\mathcal{G}$.

The minimal generating set of $\mathcal{S}\neq\{I\}$ always exists; we
start with generator set $\mathcal{G} = \{Q\}$ for any
$Q\in\mathcal{S}\setminus\{I\}$, and repeatedly update it with an
element $P\in(\mathcal{S}\setminus\langle\mathcal{G}\rangle)$ until
$\mathcal{G}$ generates $\mathcal{S}$. So it only remains to show that
$| \mathcal{G} | = \ell$, which is now equivalent to showing that
$|\langle \mathcal{G} \rangle| = 2^{|\mathcal{G}|}$ because
$\langle \mathcal{G} \rangle = \mathcal{S}$. But this is immediate
from the fact that
$\langle \mathcal{G} \rangle = \{ \prod \mathcal{H} : \mathcal{H}
\subseteq \mathcal{G}\}$
if we can show that non-empty distinct subsets
$\mathcal{H}_1, \mathcal{H}_2 \subseteq \mathcal{G}$ give rise to
distinct elements $\prod \mathcal{H}_1$, and $\prod
\mathcal{H}_2$.
Suppose for the sake of contradiction that
$\prod \mathcal{H}_1 = \prod \mathcal{H}_2$. This would imply that
$I = (\prod \mathcal{H}_1) (\prod \mathcal{H}_2) = \prod
((\mathcal{H}_1 \cup \mathcal{H}_2) \setminus (\mathcal{H}_1 \cap
\mathcal{H}_2))$,
which contradicts the fact that no subset of $\mathcal{G}$ multiplies
to $I$.

\textit{(c)} Suppose that $\mathcal{G}$ is a minimal generating set,
and $|\mathcal{G}| \geq 2n+1$. Then by (b),
$|\langle \mathcal{G} \rangle| \geq 2^{2n+1} > 4^n$. This is a
contradiction as $\mathcal{P}_n / K$ is a generating set for itself,
and its order is $4^n$.
If $|\mathcal{G}| \geq 2$, then
$\langle \mathcal{G} \setminus \{I\} \rangle = \langle \mathcal{G}
\rangle$,
and as $\mathcal{G}$ is a minimal generating set we must have
$I \notin \mathcal{G}$. Now for the sake of contradiction assume that
for $\mathcal{G}' \subset \mathcal{G}$ there exist a
$P \in \mathcal{G} \setminus \mathcal{G}'$ such that
$P \in \langle \mathcal{G}' \rangle$. Then $P$ can be expressed as a
product of elements in $\mathcal{G'}$, and thus
$\langle \mathcal{G} \rangle = \langle \mathcal{G} \setminus \{P\}
\rangle$,
which contradicts that $\mathcal{G}$ is a minimal generating
set.\vspace*{-10pt}

\end{proof}

The next lemma characterizes what happens when we take the product of
all the elements of a generated set. This lemma is used often in this
paper; for instance, it will immediately imply that a maximally
commuting subgroup, defined later, multiplies to $I$.

\begin{lemma}
\label{lem-generator-abelian-pauli2}
Let $\mathcal{G} \subseteq \mathcal{P}_n / K$ be a generating set. Then 
\begin{equation}
\prod \langle \mathcal{G} \rangle =
\begin{cases}
Q & \;\; \text{if } \mathcal{G} = \{Q\}, \\
I & \;\; \text{otherwise}.
\end{cases}
\label{eq:prod-generated-set}
\end{equation}

Thus if $\mathcal{S}$ is a subgroup of $\mathcal{P}_n / K$, and $|\mathcal{S}| \neq 2$, then $\prod \mathcal{S} = I$.
\end{lemma}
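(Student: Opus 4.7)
The plan is to reduce $\prod\langle\mathcal{G}\rangle$ to a product over a minimal generating set of $\mathcal{S}:=\langle\mathcal{G}\rangle$ and then evaluate it by a combinatorial count. By Lemma~\ref{lem-generator-abelian-pauli1}(a)--(b), $\mathcal{S}$ is a subgroup of $\mathcal{P}_n/K$ of order $2^\ell$ for some $0\le\ell\le 2n$, and when $\mathcal{S}\neq\{I\}$ it admits a minimal generating set $\mathcal{G}'=\{g_1,\dots,g_\ell\}$ with the property that distinct subsets of $\mathcal{G}'$ produce distinct products in $\mathcal{S}$. Hence the map $A\mapsto\prod_{g\in A}g$ is a bijection from the power set $2^{\mathcal{G}'}$ onto $\mathcal{S}$ (the empty subset mapping to $I$), and I can rewrite
\[
\prod\mathcal{S}\;=\;\prod_{A\subseteq\mathcal{G}'}\prod_{g\in A}g\;=\;\prod_{i=1}^{\ell}g_i^{N_i},
\]
where $N_i=\bigl|\{A\subseteq\mathcal{G}':g_i\in A\}\bigr|$ is obtained by fixing $g_i\in A$ and freely choosing membership for the remaining $\ell-1$ generators, so $N_i=2^{\ell-1}$.

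A case split on $\ell$ then finishes the first assertion. For $\ell\ge 2$ the exponent $2^{\ell-1}$ is even, and since every non-identity element of $\mathcal{P}_n/K$ satisfies $g_i^2=I$, each factor $g_i^{N_i}$ collapses to $I$, yielding $\prod\mathcal{S}=I$. For $\ell=0$ one has $\mathcal{S}=\{I\}$ and $\prod\mathcal{S}=I$ trivially. For $\ell=1$ we get $\mathcal{S}=\{I,Q\}$ with $Q\neq I$ and $\prod\mathcal{S}=Q$, which matches the first branch of the lemma precisely when $\mathcal{G}=\{Q\}$.

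Finally, the ``Thus'' clause follows at once: for any subgroup $\mathcal{S}$ with $|\mathcal{S}|\neq 2$, choose any minimal generating set $\mathcal{G}$ for it (or $\mathcal{G}=\{I\}$ if $\mathcal{S}=\{I\}$); the hypothesis rules out the $\ell=1$ scenario, so the ``otherwise'' branch applies and $\prod\mathcal{S}=I$. I expect the bijective enumeration together with the count $N_i=2^{\ell-1}$ to do all the real work; the only subtlety is lining up the $\ell=1$ case with the exceptional branch $\mathcal{G}=\{Q\}$, which is exactly where the uniqueness-of-products clause in Lemma~\ref{lem-generator-abelian-pauli1}(b) is needed.
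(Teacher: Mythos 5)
Your proof is correct, but it takes a different route from the paper's. The paper proves the general case by induction on a minimal generating set $\mathcal{G}'=\{P_1,\dots,P_\ell\}$: it splits $\langle\mathcal{G}_{k+1}'\rangle$ into the two cosets $\langle\mathcal{G}_k'\rangle$ and $P_{k+1}\ast\langle\mathcal{G}_k'\rangle$, so that the new factor contributed at each step is $P_{k+1}^{\vert\langle\mathcal{G}_k'\rangle\vert}=P_{k+1}^{2^k}=I$. You instead do a single global double count: using the bijection $A\mapsto\prod_{g\in A}g$ from $2^{\mathcal{G}'}$ onto $\mathcal{S}$ (which is exactly the uniqueness-of-products fact established inside the proof of Lemma~\ref{lem-generator-abelian-pauli1}(b)) and the commutativity of $\mathcal{P}_n/K$, you collect each generator $g_i$ with multiplicity $N_i=2^{\ell-1}$, which is even precisely when $\ell\ge 2$. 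Both arguments ultimately exploit the same parity phenomenon, but yours replaces the induction with one closed-form count and handles $\ell=0,1,\ge 2$ uniformly, which is arguably cleaner; the paper's version needs the base case $\ell=2$ spelled out and an induction hypothesis. Two small remarks: the rearrangement into $\prod_i g_i^{N_i}$ silently uses that the group is abelian (true here, but worth saying), and note that the uniqueness-of-products clause is what you need for the \emph{entire} bijective enumeration, not only for the $\ell=1$ case. Finally, your observation that the $\ell=1$ case matches the stated first branch only ``precisely when $\mathcal{G}=\{Q\}$'' points at a genuine edge case in the lemma's statement (e.g.\ $\mathcal{G}=\{I,Q\}$ generates $\{I,Q\}$ yet falls in the ``otherwise'' branch); the paper's own proof skips this case as well, so it is not a defect of your argument relative to the paper's.
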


\begin{proof}
If $\mathcal{G} = \{Q\}$, and $Q \neq I$, then
$\langle \mathcal{G} \rangle = \{I, Q\}$ as $Q^2 = I$, and the
statement is true. If $\mathcal{G} = \{I\}$, then
$\langle \mathcal{G} \rangle = \{I\}$ and the statement is also
true.
For the general case, let $\mathcal{G}'= \{P_i\}_{i=1}^{\ell}$ be a
minimal generating set of $\langle \mathcal{G} \rangle$ with
cardinality $\ell \geq 2$. By Lemma
\ref{lem-generator-abelian-pauli1} (b), $I \notin \mathcal{G}'$, and
therefore $P_i\neq I$ for all $i\in[\ell]$. Denote
$\mathcal{G}_k' = \bigcup_{i=1}^k\{P_i\}$, then
$\langle\mathcal{G}_2'\rangle = \{I, P_1, P_2, P_1P_2\}$, where
$P_1 P_2$ is distinct from $I,P_1,P_2$, and the product is therefore
$I$. By induction, suppose the result holds for some
$2\leq k < \ell$, then by the fact that
$\langle \mathcal{G}_k\rangle$ is an abelian subgroup, and
$\vert\mathcal{G}_k\vert$ is even, we have
\[
\prod\langle\mathcal{G}_{k+1}\rangle = \prod\langle\mathcal{G}_{k}\rangle\cdot\prod P_{k+1}\ast\langle \mathcal{G}_{k}\rangle =
P_{k+1}^{\vert\mathcal{G}_k\vert}\prod\langle\mathcal{G}_{k}\rangle = P_{k+1}^{\vert\mathcal{G}_k\vert} = I.
\]

If $|\mathcal{S}| = 1$, then $\mathcal{S} = \{I\}$ implying
$\prod \mathcal{S} = I$. By Lemma
\ref{lem-generator-abelian-pauli1}(b), $|\mathcal{S}|$ is a power of
$2$, so assume that $|\mathcal{S}| \geq 4$. As $\mathcal{S}$ is a
generating set for itself, the result above implies that
$\prod \mathcal{S} = I$.
\end{proof}

\subsection{Commutativity} 
Each element of $\mathcal{P}_n / K$ is an equivalence class containing
exactly four Pauli operators. In a slight deviation to standard
terminology, we say that two elements $P,Q \in \mathcal{P}_n/K$
\emph{commute} (\emph{anticommute}) whenever any chosen representative
of $P$ commutes (anticommutes) with any chosen representative of
$Q$. It is easily verified that this is a well defined notion, that
does not depend on the choice of the representatives.  Throughout the
remainder of the paper we exclusively use the terms `commute' or
`anticommute' to refer to this notion, rather than that of the group
operation on $\mathcal{P}_n / K$. With this convention, we say that a
subset $\mathcal{H}\subseteq\mathcal{P}_n/K$ is \textit{commuting
  (anticommuting)}, if no two distinct elements $P,Q\in\mathcal{H}$
anticommute (commute). Given any $P\in\mathcal{P}_n/K$, we say that
$P$ commutes with $\mathcal{H}$ if it commutes with all elements in
$\mathcal{H}$, and likewise for anticommutes.

Given $P,Q\in\mathcal{P}_n/K$ we define the commutativity function $\comm{P}{Q}$ such that
\[
\comm{P}{Q} = \begin{cases}
\phantom{-}1 & \mbox{if $P$ and $Q$ commute,}\\
-1 & \mbox{otherwise.}\end{cases}
\]
For any set $\mathcal{H} \subseteq \mathcal{P}_n / K$ and element
$P \in \mathcal{P}_n / K$, we define the \textit{commutativity map} of
$P$ with respect to $\mathcal{H}$ as
$\zet{P}{\mathcal{H}} : \mathcal{H} \rightarrow \{1, -1\}$, such that
$\zet{P}{\mathcal{H}}(Q) = \comm{P}{Q}$ for all $Q\in\mathcal{H}$. It
is clear that if $|\mathcal{H}| = k$, then there are a maximum of
$2^{k}$ distinct commutativity maps. We will say that the
commutativity map is \textit{all commuting (all anticommuting)}, if
$P$ commutes (anticommutes) with all elements in $\mathcal{H}$.

We now provide an important lemma that states that, given a minimal
generating set $\mathcal{G}$, each commutativity map with respect to
$\mathcal{G}$ is equally likely. This key fact is used in several
results proved later.

\begin{lemma}\label{lem-pattern-generators}
Let $\mathcal{G}\subseteq \mathcal{P}_n/K$ be a minimal generating
set with $\mathcal{G}\neq\{I\}$ and $\vert\mathcal{G}\vert=k$. Then
each of the $2^k$ possible commutativity maps with respect to
$\mathcal{G}$ is generated by $4^n/2^k$ distinct elements
$P\in\mathcal{P}_n/K$.
\end{lemma}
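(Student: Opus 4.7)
The plan is to interpret the assignment $P \mapsto \zet{P}{\mathcal{G}}$ as a group homomorphism from $\mathcal{P}_n/K$ into the group $\{+1,-1\}^k$ (with componentwise multiplication), prove it is surjective, and then conclude by counting the fibers.

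First, I would verify the homomorphism property. Writing $\mathcal{G}=\{g_1,\dots,g_k\}$, define
\[
\phi : \mathcal{P}_n/K \to \{+1,-1\}^k, \qquad \phi(P) = \bigl(\comm{P}{g_1},\dots,\comm{P}{g_k}\bigr).
\]
A short calculation on representatives in $\mathcal{P}_n$ shows that $\comm{P_1 P_2}{Q} = \comm{P_1}{Q}\cdot\comm{P_2}{Q}$, because moving $P_1 P_2$ past $Q$ introduces the product of the signs picked up by moving each factor individually; this is well defined on $\mathcal{P}_n/K$ since the definition of commutativity is representative-independent. Hence $\phi$ is a homomorphism, the commutativity maps in the image form a subgroup of $\{+1,-1\}^k$, and each attained map has exactly the same number of preimages.

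The heart of the proof is surjectivity. The image of $\phi$ is a subgroup of $\{+1,-1\}^k$, so if it were proper there would exist a non-empty index set $S\subseteq\{1,\dots,k\}$ such that $\prod_{i\in S}\comm{P}{g_i}=1$ for every $P\in\mathcal{P}_n/K$. By the homomorphism identity applied in the second argument (which follows by the same bilinearity argument as above), this says $\comm{P}{\prod_{i\in S} g_i}=1$ for all $P$; that is, $Q:=\prod_{i\in S} g_i$ commutes with every element of $\mathcal{P}_n/K$. A direct check on single-qubit Paulis shows that the only $[Q]\in\mathcal{P}_n/K$ commuting with everything is $[I]$ (if $Q$ has any non-identity tensor factor $\sigma_a$ at position $j$, take $P$ with $\sigma_b\neq\sigma_a$ at position $j$ and identity elsewhere). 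Hence $\prod_{i\in S} g_i = I$ for a non-empty $S\subseteq\mathcal{G}$, contradicting Lemma~\ref{lem-generator-abelian-pauli1}(b) applied to the minimal generating set $\mathcal{G}\neq\{I\}$. Therefore $\phi$ is surjective.

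With $\phi$ a surjective homomorphism between finite groups of orders $4^n$ and $2^k$, the first isomorphism theorem gives $|\ker\phi|=4^n/2^k$, and every fiber has this common size, which is precisely the claim. The main obstacle is the surjectivity step; the key insight is to translate a missing commutativity pattern into a non-trivial product of generators lying in the center of $\mathcal{P}_n/K$, and then invoke minimality via Lemma~\ref{lem-generator-abelian-pauli1}(b) to reach a contradiction.
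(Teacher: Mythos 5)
Your proof is correct, and it takes a genuinely different route from the paper's. You treat $P \mapsto \zet{P}{\mathcal{G}}$ as a group homomorphism $\phi$ into $(\{+1,-1\}^k,\cdot)$, establish surjectivity by duality for elementary abelian $2$-groups (a proper image would be annihilated by a nontrivial character $x\mapsto\prod_{i\in S}x_i$, forcing $\prod_{i\in S}g_i$ to commute with all of $\mathcal{P}_n/K$ and hence equal $I$, contradicting Lemma~\ref{lem-generator-abelian-pauli1}(b)), and then count fibers as cosets of $\ker\phi$. The paper instead extends $\mathcal{G}$ to a minimal generating set $\mathcal{G}'$ of all of $\mathcal{P}_n/K$, shows $P\mapsto\zet{P}{\mathcal{G}'}$ is injective (hence bijective onto the $4^n$ maps by counting), and then removes the extra generators one at a time, halving the number of distinct maps and doubling their multiplicities at each step. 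Both arguments hinge on the same nondegeneracy fact --- only $I$ commutes with every element --- but your version buys a cleaner, more structural statement (the fibers are literally cosets of the kernel, which also explains \emph{why} they are equinumerous) at the cost of invoking character duality for $(\mathbb{Z}/2)^k$, whereas the paper's collapsing argument is more elementary and stays entirely within the combinatorics of generating sets. The one step you state without proof, that every proper subgroup of $\{+1,-1\}^k$ is killed by some nontrivial character $\chi_S$, is standard but deserves a sentence of justification in a self-contained write-up.
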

\begin{proof}
Let $\mathcal{G}'$ be a minimal generating set for $\mathcal{P}_n/K$
such that $\mathcal{G}\subseteq\mathcal{G}'$, which exists by the
proof of Lemma~\ref{lem-generator-abelian-pauli1}(b). Given any
distinct elements $P,Q\in\mathcal{P}_n/K$, then the commutativity
maps with respect to $\mathcal{G}'$ must differ, otherwise $PQ$
commutes with all elements in $\mathcal{G}'$, and therefore with
$\langle \mathcal{G}'\rangle = \mathcal{P}_n/K$. But that would mean
that $PQ=I$, and therefore that $P=Q$. It follows that
$\zet{P}{\mathcal{G}'} \neq \zet{Q}{\mathcal{G}'}$ whenever
$P\neq Q$, and shows that there are $4^{n}$ distinct maps
$\zet{\bullet}{\mathcal{G}'}$. We can iteratively remove elements
from $\mathcal{G}'$ to arrive at $\mathcal{G}$. Each time we remove
an element we collapse maps that differ only in the commutativity
with the removed element, which means that the number of different
maps is halved, but their occurrence is doubled. The result then
follows directly.
\end{proof}

\subsection{Decomposition}

 We can decompose any set $\mathcal{S}\subseteq\mathcal{P}_n/K$, with $n \geq 2$, as
\begin{equation}\label{Eq:DecomposeXYZ}
\mathcal{S} =  (\sigma_i \otimes \mathcal{C}_i) \cup (\sigma_x \otimes
\mathcal{C}_x) \cup (\sigma_y \otimes \mathcal{C}_y) \cup (\sigma_z
\otimes \mathcal{C}_z),
\end{equation}
with possibly empty sets
$\mathcal{C}_{\ell} \subseteq \mathcal{P}_{n-1}/K$ for
$\ell\in\{i,x,y,z\}$. In the above we use the convention that
$\sigma_\ell\otimes \mathcal{C} = \{\sigma_\ell \otimes P : P
\in\mathcal{C}\}$,
where we define $\sigma_{\ell}\otimes P$ to be the equivalence class
$[\sigma_{\ell}\otimes A] \in \mathcal{P}_n / K$ for any chosen
representative $A \in P$, the notion being well defined and
independent of the choice of the representative $A$. In many cases we
are not concerned with the exact labels of the sets and instead work
with the decomposition
\begin{equation}\label{Eq:DecomposeUVW}
\mathcal{S} =  (\sigma_i \otimes \mathcal{C}_i) \cup (\sigma_u \otimes
\mathcal{C}_u) \cup (\sigma_v \otimes \mathcal{C}_v) \cup (\sigma_w
\otimes \mathcal{C}_w),
\end{equation}
where $(u,v,w)$ is an arbitrary permutation of $(x,y,z)$ that
satisfies the condition that $\mathcal{C}_u = \emptyset$ implies
$\mathcal{C}_v = \emptyset$, and $\mathcal{C}_v = \emptyset$ implies
$\mathcal{C}_w = \emptyset$.

\section{Sets of commuting Paulis}\label{Sec:Commuting}

In this section we study the structure and cardinality of maximally
commuting sets of Paulis. One of the basic properties of these sets is
that they from subgroups, which stated in the following lemma.

\begin{lemma}\label{lem-subgroup-abelian-pauli}
If $\mathcal{S}\subseteq\mathcal{P}_n/K$ is maximally commuting,
then $\mathcal{S}$ is a subgroup of $\mathcal{P}_n/K$.
\end{lemma}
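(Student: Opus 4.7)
The plan is to verify the subgroup axioms directly, exploiting two structural facts: every element of $\mathcal{P}_n/K$ is self-inverse (since each non-identity element has order $2$), and the notion of commutation on $\mathcal{P}_n/K$ is well-defined and behaves exactly like operator commutation for any chosen representatives. This reduces the task to showing that $I \in \mathcal{S}$ and that $\mathcal{S}$ is closed under the group operation.

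First I would show that $I \in \mathcal{S}$. Since $I$ commutes with every element of $\mathcal{P}_n/K$, the set $\mathcal{S} \cup \{I\}$ is still a commuting set, and by maximality of $\mathcal{S}$ we must have $I \in \mathcal{S}$.

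Next I would establish closure. Take any $P, Q \in \mathcal{S}$ and let $R \in \mathcal{S}$ be arbitrary. Choosing representatives and using that $PR = RP$ and $QR = RQ$ (up to the subgroup $K$, which is irrelevant for commutativity on $\mathcal{P}_n/K$), one computes
\[
(PQ)R = P(QR) = P(RQ) = (PR)Q = (RP)Q = R(PQ),
\]
so $PQ$ commutes with every element of $\mathcal{S}$. If $PQ \notin \mathcal{S}$, then $\mathcal{S} \cup \{PQ\}$ would be a strictly larger commuting set, contradicting maximality; hence $PQ \in \mathcal{S}$. Combined with the self-inverse property, this gives all the subgroup axioms.

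I do not anticipate a serious obstacle here; the only mild care needed is to confirm that the ``commutes with'' relation used in the definition of maximality genuinely reflects the algebraic identity $(PQ)R = R(PQ)$ in $\mathcal{P}_n/K$, which follows immediately from the fact, noted in the paper, that commutativity is a well-defined relation on equivalence classes.
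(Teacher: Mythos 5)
Your proposal is correct and follows essentially the same route as the paper's proof: $I\in\mathcal{S}$ by maximality, and for $P,Q\in\mathcal{S}$ the product $PQ$ commutes with every element of $\mathcal{S}$, so maximality forces $PQ\in\mathcal{S}$, with self-inverseness supplying the remaining axioms. You merely spell out the representative-level computation that the paper leaves implicit.
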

\begin{proof}
Since $I$ commutes with all elements in $\mathcal{P}_n/K$, it
follows that $I\in\mathcal{S}$ by maximality. If
$P, Q \in \mathcal{C}$ are distinct elements, then $PQ$ commutes
with all elements in $\mathcal{S}$, and therefore by maximality
$PQ \in \mathcal{C}$. Hence $\mathcal{S}$ is a subgroup of
$\mathcal{P}_n/K$.
\end{proof}

From this result it immediately follows using
Lemma~\ref{lem-generator-abelian-pauli1}(b), that
$\vert\mathcal{S}\vert$ is a power of two. From
Lemma~\ref{lem-generator-abelian-pauli2}, it further follows that
$\prod\mathcal{S} = I$ whenever $\ell \geq 2$. The next lemma
elaborates on the structure of maximally commuting subsets of abelian
Paulis.

\begin{lemma}{(Commuting structure lemma)}
\label{Lemma:GroupsCommuting}
Let $\mathcal{S} \subseteq \mathcal{P}_n / K$ be maximally commuting
with $n\geq 2$ and decomposition of the
form~\eqref{Eq:DecomposeUVW}. Then $I\in\mathcal{C}_i$, and the
following are true:

\begin{enumerate}[(a)]
\item For $\ell\in\{i,u,v,w\}$ the elements within
  $\mathcal{C}_{\ell}$ commute with each other, as well as with all
  elements in $\mathcal{C}_i$. The elements between any pair of sets
  $\mathcal{C}_u$, $\mathcal{C}_v$, and $\mathcal{C}_w$ anticommute.

\item If $\mathcal{C}_v = \mathcal{C}_w = \emptyset$, then
  $\mathcal{C}_i = \mathcal{C}_u$, and $\mathcal{C}_i$ is a maximally
  commuting set.

\item If $\mathcal{C}_u, \mathcal{C}_v \neq\emptyset$, the sets
  $\mathcal{C}_i$, $\mathcal{C}_u$, $\mathcal{C}_v$, and
  $\mathcal{C}_w$ satisfy the following properties:
\begin{enumerate}[1.]
\item for any $P\in\mathcal{C}_i$ we have $P*\mathcal{C}_i = \mathcal{C}_i*\mathcal{C}_i = \mathcal{C}_i$,
\item for any $P\in\mathcal{C}_u$ we have $P*\mathcal{C}_u = \mathcal{C}_u*\mathcal{C}_u = \mathcal{C}_i$,
\item for any $P\in\mathcal{C}_i$ and any $Q\in\mathcal{C}_u$ we have
  $P *\mathcal{C}_u = Q*\mathcal{C}_i = \mathcal{C}_i*\mathcal{C}_u =
  \mathcal{C}_u$,
\item for any $P\in\mathcal{C}_u$ and any $Q\in\mathcal{C}_v$ we have
  $P*\mathcal{C}_v = Q*\mathcal{C}_u = \mathcal{C}_u*\mathcal{C}_v =
  \mathcal{C}_w$,
\item
  $\vert \mathcal{C}_i\vert = \vert\mathcal{C}_u\vert =
  \vert\mathcal{C}_v\vert = \vert\mathcal{C}_w\vert$,
  and the sets are non-empty and disjoint
\item sets $\mathcal{C}_i$, $(\mathcal{C}_i \cup \mathcal{C}_u)$,
  $(\mathcal{C}_i \cup \mathcal{C}_v)$, and
  $(\mathcal{C}_i \cup \mathcal{C}_w)$ are subgroups of
  $\mathcal{P}_{n-1} / K$.
\end{enumerate}
\end{enumerate}
\end{lemma}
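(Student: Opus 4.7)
The proof rests on two basic tools. First, by Lemma~\ref{lem-subgroup-abelian-pauli}, $\mathcal{S}$ is a subgroup of $\mathcal{P}_n/K$, so $I\in\mathcal{S}$ and hence $I\in\mathcal{C}_i$; moreover $\mathcal{S}$ is closed under multiplication. Second, the tensor identity $(\sigma_\ell\otimes P)(\sigma_m\otimes Q)=(\sigma_\ell\sigma_m)\otimes PQ$ makes the commutativity of two decomposed elements split as the product $\comm{\sigma_\ell}{\sigma_m}\cdot\comm{P}{Q}$. Since $\sigma_i$ commutes with every single-qubit Pauli while distinct $\sigma_u,\sigma_v\in\{\sigma_x,\sigma_y,\sigma_z\}$ anticommute, part (a) then drops out just by reading off the signs: within any $\mathcal{C}_\ell$ and between $\mathcal{C}_i$ and any $\mathcal{C}_\ell$ the first factor is $+1$, forcing $\comm{P}{Q}=+1$, whereas between distinct $\mathcal{C}_u,\mathcal{C}_v,\mathcal{C}_w$ the first factor is $-1$, forcing $\comm{P}{Q}=-1$.

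For (b), I first observe that $\mathcal{C}_i$ is a subgroup of $\mathcal{P}_{n-1}/K$: it contains $I$, and closure follows from $(\sigma_i\otimes P)(\sigma_i\otimes Q)=\sigma_i\otimes PQ\in\mathcal{S}$. The degenerate case $\mathcal{C}_u=\emptyset$ is ruled out by maximality, since then $\sigma_u\otimes I$ would commute with $\mathcal{S}=\sigma_i\otimes\mathcal{C}_i$ without lying in it. Picking $Q_0\in\mathcal{C}_u$, part (a) says $Q_0$ commutes with all of $\mathcal{C}_i\cup\mathcal{C}_u$, so $\sigma_i\otimes Q_0$ commutes with $\mathcal{S}$, and maximality forces $\sigma_i\otimes Q_0\in\mathcal{S}$, i.e.\ $Q_0\in\mathcal{C}_i$. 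Multiplying $\sigma_u\otimes Q_0$ against $\sigma_i\otimes\mathcal{C}_i$ and $\sigma_u\otimes\mathcal{C}_u$ and reading off the first tensor factor yields $Q_0\mathcal{C}_i\subseteq\mathcal{C}_u$ and $Q_0\mathcal{C}_u\subseteq\mathcal{C}_i$; combined with $Q_0\mathcal{C}_i=\mathcal{C}_i$ (since $Q_0\in\mathcal{C}_i$ and $\mathcal{C}_i$ is a group) and $\mathcal{C}_u=Q_0(Q_0\mathcal{C}_u)\subseteq Q_0\mathcal{C}_i=\mathcal{C}_i$, this pins down $\mathcal{C}_i=\mathcal{C}_u$. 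Maximality of $\mathcal{C}_i$ in $\mathcal{P}_{n-1}/K$ is the same idea in reverse: any $R$ commuting with $\mathcal{C}_i$ makes $\sigma_i\otimes R$ commute with $\mathcal{S}$, which by maximality forces $R\in\mathcal{C}_i$.

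For (c), property 1 is the $\mathcal{C}_i$-subgroup observation again. For properties 2--4, I compute $(\sigma_\ell\otimes P)(\sigma_m\otimes Q)=\sigma_{\ell m}\otimes PQ\in\mathcal{S}$ using the abelian identities $\sigma_i\sigma_\ell=\sigma_\ell$, $\sigma_\ell\sigma_\ell=\sigma_i$ and $\sigma_u\sigma_v=\sigma_w$ in $\mathcal{P}_1/K$, yielding $\mathcal{C}_\ell\mathcal{C}_m\subseteq\mathcal{C}_{\ell m}$. The reverse inclusions all follow from a uniform $P^2=I$ pinching trick: for instance, from $Q\mathcal{C}_u\subseteq\mathcal{C}_i$ and $Q\mathcal{C}_i\subseteq\mathcal{C}_u$ with $Q\in\mathcal{C}_u$, one gets $\mathcal{C}_u=Q(Q\mathcal{C}_u)\subseteq Q\mathcal{C}_i\subseteq\mathcal{C}_u$, forcing both $Q\mathcal{C}_i=\mathcal{C}_u$ and $Q\mathcal{C}_u=\mathcal{C}_i$; taking unions over the left-hand factor gives the two-set versions. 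These bijections produce $|\mathcal{C}_i|=|\mathcal{C}_u|=|\mathcal{C}_v|$; the remaining piece of property 5, $\mathcal{C}_w\neq\emptyset$, is immediate from $\emptyset\neq\mathcal{C}_u\mathcal{C}_v\subseteq\mathcal{C}_w$, after which $|\mathcal{C}_w|=|\mathcal{C}_i|$ by the same bijection, and disjointness is inherited from the decomposition~\eqref{Eq:DecomposeUVW}. Property 6 is then routine: each candidate $\mathcal{C}_i\cup\mathcal{C}_\ell$ contains $I$, the product identities of properties 1--3 give closure, and inverses come for free since every element squares to $I$. The main obstacle throughout is establishing the two-sided set equalities, all handled uniformly by this pinching trick, together with the small bookkeeping step of deriving $\mathcal{C}_w\neq\emptyset$ before invoking any statement about all four cosets.
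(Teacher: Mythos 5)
Your argument is essentially the paper's, with the minor (and legitimate) variation that you route the closure relations $\mathcal{C}_\ell \ast \mathcal{C}_m \subseteq \mathcal{C}_{\ell m}$ through Lemma~\ref{lem-subgroup-abelian-pauli} rather than re-deriving membership from maximality case by case; the ``pinching trick'' $P^2=I$ is exactly how the paper upgrades these inclusions to equalities and to equal cardinalities. Parts (a), (b), and properties 1--4 and 6 of (c) are fine as you have written them.

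There is, however, one genuine gap: in property 5 you assert that disjointness of $\mathcal{C}_i,\mathcal{C}_u,\mathcal{C}_v,\mathcal{C}_w$ ``is inherited from the decomposition~\eqref{Eq:DecomposeUVW}.'' It is not. The decomposition only makes the four pieces $\sigma_\ell\otimes\mathcal{C}_\ell$ of $\mathcal{S}$ disjoint, because they carry different first tensor factors; the sets $\mathcal{C}_\ell$ themselves live in $\mathcal{P}_{n-1}/K$ and can perfectly well overlap --- indeed in part (b) of this very lemma you prove $\mathcal{C}_i=\mathcal{C}_u$. In case (c) disjointness must be derived from the commutation relations of part (a), and for some pairs it genuinely uses the non-emptiness of a third set. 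For a pair such as $\mathcal{C}_u,\mathcal{C}_v$ the argument is direct: if $R\in\mathcal{C}_u\cap\mathcal{C}_v$ then $\sigma_u\otimes R$ and $\sigma_v\otimes R$ both lie in $\mathcal{S}$ yet anticommute (the first factors anticommute while $\comm{R}{R}=1$), contradicting that $\mathcal{S}$ is commuting. But for $\mathcal{C}_i\cap\mathcal{C}_u$ the two lifted elements $\sigma_i\otimes R$ and $\sigma_u\otimes R$ \emph{do} commute, so no contradiction arises at that level; instead you must pick some $S\in\mathcal{C}_v\neq\emptyset$ and observe that by (a) it would have to commute with $R$ (viewed in $\mathcal{C}_i$) and anticommute with $R$ (viewed in $\mathcal{C}_u$) simultaneously. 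This is exactly how the paper closes property 5; with that substitution your proof is complete.
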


\begin{proof}
  If $I \not\in \mathcal{C}_i$ we can add it, so for maximally
  commuting sets we have $I\in\mathcal{C}_i$.

\textit{(a)} This follows directly from commutativity of the elements in $\mathcal{S}$.

\textit{(b)} If $\mathcal{C}_v$ and $\mathcal{C}_w$ are empty, we can
add any element of $\mathcal{C}_i$ to $\mathcal{C}_u$ and vice versa,
and for maximal sets they must therefore be equal. The set
$\mathcal{C}_i$ must also be maximally commuting, otherwise there
exists an $R \in (\mathcal{P}_{n-1} / K) \setminus \mathcal{C}_i$ that
commutes with $\mathcal{C}_i$ and could therefore be added to both
$\mathcal{C}_i$ and $\mathcal{C}_u$, which contradicts maximality of
$\mathcal{S}$.

\textit{(c)} Now assume that $\mathcal{C}_u$ and $\mathcal{C}_v$ are
non-empty.

\begin{enumerate}

\item Given any $P,Q \in \mathcal{C}_i$, $PQ$ commutes with all
  elements in the sets $\mathcal{C}_i$, $\mathcal{C}_u$,
  $\mathcal{C}_v$, and $\mathcal{C}_w$, and by maximality must
  therefore be an element of $\mathcal{C}_i$. For a fixed $P$, the
  products $PQ$ differ for all $Q \in \mathcal{C}_i$, and
  $P*\mathcal{C}_i$ must therefore coincide with $\mathcal{C}_i$.

\item Given any $P,Q\in\mathcal{C}_u$, it can be verified that $PQ$
  commutes with the elements in all the sets. By maximality we must
  therefore have that $PQ \in \mathcal{C}_i$ and
  $P*\mathcal{C}_u\subseteq\mathcal{C}_i$. Again for a fixed $P$, the
  products $PQ$ differ for all $Q \in \mathcal{C}_u$, and so
  $\vert P*\mathcal{C}_u\vert = \vert\mathcal{C}_u\vert \leq
  \vert\mathcal{C}_i\vert$.

\item Given $P\in\mathcal{C}_i$ and $Q\in\mathcal{C}_u$, $PQ$ commutes
  with all elements in $\mathcal{C}_i$ and $\mathcal{C}_u$, and
  anticommutes with all elements in $\mathcal{C}_v$ and
  $\mathcal{C}_w$. By maximality we must therefore have that
  $PQ \in \mathcal{C}_u$. Using arguments similar to the proof of
  properties 1--2, we thus conclude that
  $P*\mathcal{C}_u = \mathcal{C}_u$,
  $Q*\mathcal{C}_i \subseteq\mathcal{C}_u$, and
  $\vert Q*\mathcal{C}_i\vert = \vert\mathcal{C}_i\vert \leq
  \vert\mathcal{C}_u\vert$.

\item Given $P \in \mathcal{C}_u$ and $Q \in \mathcal{C}_v$, it can be
  verified that $PQ$ anticommutes with all elements in $\mathcal{C}_u$
  and $\mathcal{C}_v$, and commutes with all elements in
  $\mathcal{C}_i$ and $\mathcal{C}_w$. It again follows from
  maximality that $PQ \in \mathcal{C}_w$, and thus
  $P*\mathcal{C}_v \subseteq \mathcal{C}_w$,
  $Q*\mathcal{C}_u \subseteq\mathcal{C}_w$,
  $\vert P*\mathcal{C}_v \vert = \vert \mathcal{C}_v \vert$, and
  $\vert Q*\mathcal{C}_u \vert = \vert \mathcal{C}_u \vert$.

\item From the cardinality relations in the proof of properties 2--3
  it follows that $\vert\mathcal{C}_i\vert =
  \vert\mathcal{C}_u\vert$.
  Since the choice of $u$, $v$, and $w$ was arbitrary it follows that
  $\vert\mathcal{C}_u\vert = \vert\mathcal{C}_v\vert =
  \vert\mathcal{C}_w\vert$.
  Given that the cardinality of all four sets are equal it follows
  that the $\subseteq$ and $\leq$ relations in items 2--4 can be
  replaced with equality. As $\mathcal{C}_i$ contains $I$, we conclude
  that all the four subsets are non-empty. We know from (a) that all
  elements in $\mathcal{C}_w$ commute with $\mathcal{C}_i$ and
  anticommute with $\mathcal{C}_u$, and it must therefore hold that
  $\mathcal{C}_i\cap\mathcal{C}_u = \emptyset$. A similar argument
  applies to $\mathcal{C}_u \cap\mathcal{C}_v$, and all other pairs of
  sets.

\item Property $2$ shows that $\mathcal{C}_i$ is closed under the
  group operation of $\mathcal{P}_{n-1} / K$, while properties 2--4
  show that $\mathcal{C}_i \cup \mathcal{C}_u$,
  $\mathcal{C}_i \cup \mathcal{C}_v$, and
  $\mathcal{C}_i \cup \mathcal{C}_w$ are also closed under the same
  group operation.
\end{enumerate}\vspace*{-16pt}
\end{proof}

For the cardinality of maximally commuting sets in $\mathcal{P}_n/K$
we have the following result:

\begin{theorem}\label{Theorem:maximally-commuting-maximum}
A commuting set $\mathcal{S}\subseteq\mathcal{P}_n/K$ is maximally
commuting iff $\vert\mathcal{S}\vert = 2^n$.
\end{theorem}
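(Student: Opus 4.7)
The plan is to use the subgroup structure of maximally commuting sets together with the counting provided by Lemma~\ref{lem-pattern-generators}. The key observation is that a maximally commuting set must equal its own centralizer in $\mathcal{P}_n/K$, and the centralizer of a subgroup can be counted exactly via commutativity maps over a minimal generating set.

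For the forward direction, I would proceed as follows. Suppose $\mathcal{S}$ is maximally commuting. Note that $\mathcal{S}=\{I\}$ is never maximal for $n\geq 1$, since any non-identity Pauli commutes with $I$ and could be adjoined; so we may assume $\mathcal{S}\neq\{I\}$. By Lemma~\ref{lem-subgroup-abelian-pauli}, $\mathcal{S}$ is a subgroup of $\mathcal{P}_n/K$, so by Lemma~\ref{lem-generator-abelian-pauli1}(b) it admits a minimal generating set $\mathcal{G}$ of some cardinality $k$ with $|\mathcal{S}|=2^k$. Since $\mathcal{G}$ generates $\mathcal{S}$, an element $P\in\mathcal{P}_n/K$ commutes with every element of $\mathcal{S}$ iff it commutes with every element of $\mathcal{G}$, i.e., iff $\zet{P}{\mathcal{G}}$ is the all-commuting map; this uses the trivial observation that if $P$ commutes with $Q_1$ and $Q_2$ then $P$ commutes with $Q_1Q_2$. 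By maximality, any such $P$ must already lie in $\mathcal{S}$, so the centralizer of $\mathcal{S}$ coincides with $\mathcal{S}$. Lemma~\ref{lem-pattern-generators} gives the size of this centralizer as $4^n/2^k$, so $2^k = 4^n/2^k$, yielding $k=n$ and $|\mathcal{S}|=2^n$.

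For the reverse direction, suppose $\mathcal{S}$ is commuting with $|\mathcal{S}|=2^n$. Since $\mathcal{P}_n/K$ is finite we may greedily adjoin elements that commute with the current set to obtain a maximally commuting $\mathcal{S}'\supseteq\mathcal{S}$. By the forward direction $|\mathcal{S}'|=2^n=|\mathcal{S}|$, forcing $\mathcal{S}=\mathcal{S}'$, so $\mathcal{S}$ is itself maximally commuting.

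The argument is essentially forced once one identifies ``centralizer equals self'' with ``realizes the all-commuting map''; the only real step is invoking Lemma~\ref{lem-pattern-generators} correctly, which requires the mild preliminary check that $\mathcal{S}\neq\{I\}$ so that $\mathcal{G}\neq\{I\}$. There is no genuine obstacle beyond this bookkeeping, and the structure lemma (Lemma~\ref{Lemma:GroupsCommuting}) is not needed in this proof, although it would give an alternative inductive argument on $n$ by splitting into the cases $\mathcal{C}_v=\mathcal{C}_w=\emptyset$ and $\mathcal{C}_u,\mathcal{C}_v,\mathcal{C}_w\neq\emptyset$.
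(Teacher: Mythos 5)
Your proof is correct and follows essentially the same route as the paper: both pass to a minimal generating set $\mathcal{G}$ of the subgroup $\mathcal{S}$ and invoke Lemma~\ref{lem-pattern-generators} to count the elements realizing the all-commuting map. The only cosmetic difference is that you package this as the single equation $2^k = 4^n/2^k$ via the centralizer identity (and handle the reverse direction by greedy extension), whereas the paper rules out $k>n$ and $k<n$ by two separate contradictions.
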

\begin{proof}
Let $\mathcal{G}$ be a minimal generator set for
$\mathcal{S}$. Suppose by contradiction that
$k := \vert\mathcal{G}\vert> n$, then it follows from
Lemma~\ref{lem-pattern-generators} and the fact that elements
commute with themselves, that each commutativity map with
$\mathcal{G}$ is generated by $4^n / 2^k < 2^n$ elements. For all
$Q\in\langle\mathcal{G}\rangle$, the commutativity map with respect
to $\mathcal{G}$ is the all-commuting map, but this gives a
contradiction, since
$\vert\langle\mathcal{G}\rangle\vert = 2^k > 2^n$. Similarly,
suppose that $\vert\mathcal{G}\vert < n$. In this case it follows
from Lemma~\ref{lem-pattern-generators} that there must exist a
$P \in (\mathcal{P}_n/K) \setminus\langle\mathcal{G}\rangle$ that
commutes with all elements in $\mathcal{G}$, and therefore with all
elements in $\langle\mathcal{G}\rangle$. It follows that $P$ could
be added to $\mathcal{S}$, thus contradicting maximality.
\end{proof}

A slight strengthening of the commuting structure lemma is now possible.

\begin{corollary}
\label{cor-maximally-commuting-structure-strengthening}
Let $\mathcal{S}\subseteq\mathcal{P}_n/K$ be a maximally commuting set
with $n \geq 2$ and decomposition~\eqref{Eq:DecomposeUVW} with
$\mathcal{C}_w\neq\emptyset$. Then
$\vert \mathcal{C}_i\vert = \vert\mathcal{C}_u\vert =
\vert\mathcal{C}_v\vert = \vert\mathcal{C}_w\vert = 2^{n-2}$.
In addition, $(\mathcal{C}_i \cup \mathcal{C}_u)$,
$(\mathcal{C}_i \cup \mathcal{C}_v)$, and
$(\mathcal{C}_i \cup \mathcal{C}_w)$ are maximally commuting subgroups
of $\mathcal{P}_{n-1} / K$.
\end{corollary}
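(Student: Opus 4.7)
The plan is to reduce everything to the case~(c) of Lemma~\ref{Lemma:GroupsCommuting} and then apply Theorem~\ref{Theorem:maximally-commuting-maximum} both to $\mathcal{S}$ itself (inside $\mathcal{P}_n/K$) and to the three candidate subgroups (inside $\mathcal{P}_{n-1}/K$). First I would observe that by the ordering convention in the decomposition~\eqref{Eq:DecomposeUVW}, the hypothesis $\mathcal{C}_w\neq\emptyset$ forces $\mathcal{C}_v\neq\emptyset$ and hence also $\mathcal{C}_u\neq\emptyset$; in particular, we are in case~(c) of the commuting structure lemma.

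Next I would establish the cardinality claim. Property~5 of Lemma~\ref{Lemma:GroupsCommuting}(c) gives $|\mathcal{C}_i|=|\mathcal{C}_u|=|\mathcal{C}_v|=|\mathcal{C}_w|$, and since the four sets are pairwise disjoint, the decomposition~\eqref{Eq:DecomposeUVW} yields $|\mathcal{S}|=4|\mathcal{C}_i|$. By Theorem~\ref{Theorem:maximally-commuting-maximum} applied to the maximally commuting set $\mathcal{S}\subseteq\mathcal{P}_n/K$, we have $|\mathcal{S}|=2^n$, and therefore $|\mathcal{C}_i|=|\mathcal{C}_u|=|\mathcal{C}_v|=|\mathcal{C}_w|=2^{n-2}$.

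For the second claim, I would first invoke property~6 of Lemma~\ref{Lemma:GroupsCommuting}(c), which already tells us that $\mathcal{C}_i\cup\mathcal{C}_u$, $\mathcal{C}_i\cup\mathcal{C}_v$, and $\mathcal{C}_i\cup\mathcal{C}_w$ are subgroups of $\mathcal{P}_{n-1}/K$. Each of these subgroups has cardinality $2\cdot 2^{n-2}=2^{n-1}$ by the cardinality result just proved. To upgrade subgroup to \emph{maximally commuting} subgroup, I would note that by Lemma~\ref{Lemma:GroupsCommuting}(a) the elements within $\mathcal{C}_i$ pairwise commute, the elements within each of $\mathcal{C}_u,\mathcal{C}_v,\mathcal{C}_w$ pairwise commute, and the elements of $\mathcal{C}_i$ commute with those of $\mathcal{C}_u$, $\mathcal{C}_v$, and $\mathcal{C}_w$. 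Therefore each of the three unions $\mathcal{C}_i\cup\mathcal{C}_\ell$ (for $\ell\in\{u,v,w\}$) is a commuting subset of $\mathcal{P}_{n-1}/K$ of size $2^{n-1}$, and Theorem~\ref{Theorem:maximally-commuting-maximum} applied in $\mathcal{P}_{n-1}/K$ now certifies that each is maximally commuting.

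There is essentially no obstacle here: the corollary is a bookkeeping consequence of Lemma~\ref{Lemma:GroupsCommuting} combined with the tight cardinality characterization from Theorem~\ref{Theorem:maximally-commuting-maximum}. The only point requiring care is to make sure that the hypothesis $\mathcal{C}_w\neq\emptyset$ really does place us in case~(c) of the structure lemma, which is exactly what the convention following~\eqref{Eq:DecomposeUVW} guarantees.
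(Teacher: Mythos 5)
Your proof is correct and follows essentially the same route as the paper: apply Theorem~\ref{Theorem:maximally-commuting-maximum} to get $|\mathcal{S}|=2^n$, use the equal cardinalities from Lemma~\ref{Lemma:GroupsCommuting}(c) to conclude each $\mathcal{C}_\ell$ has size $2^{n-2}$, and then observe each union $\mathcal{C}_i\cup\mathcal{C}_\ell$ is a commuting set of size $2^{n-1}$, hence maximal by the same theorem applied in $\mathcal{P}_{n-1}/K$. The extra care you take in verifying that $\mathcal{C}_w\neq\emptyset$ places the decomposition in case~(c) is a welcome addition but does not change the argument.
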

\begin{proof}
By Theorem \ref{Theorem:maximally-commuting-maximum},
$\vert \mathcal{S} \vert = 2^n$. Since each of the four sets
$\mathcal{C}$ have equal size by
Lemma~\ref{Lemma:GroupsCommuting}(c), it follows that each must have
size $2^n/4 = 2^{n-2}$. The set
$\mathcal{H} := \mathcal{C}_i\cup\mathcal{C}_{\ell}$ is commuting
for any $\ell\in\{u,v,w\}$. From property 5 of
Lemma~\ref{Lemma:GroupsCommuting}(c) we know that
$\mathcal{C}_i\cap\mathcal{C}_{\ell} = \emptyset$, and it therefore
follows that $\vert\mathcal{H}\vert = 2^{n-1}$, which is maximal by
Theorem~\ref{Theorem:maximally-commuting-maximum}.
\end{proof}

The next two lemmas provide a converse to Lemma \ref{Lemma:GroupsCommuting}.

\begin{lemma}
\label{lem-maximally-commuting-converse1}
Let $\mathcal{S} \subseteq \mathcal{P}_{n-1} / K$ be maximally
commuting. Then the set
$\mathcal{S}' = (\sigma_i \otimes \mathcal{S}) \; \cup \;
(\sigma_{\ell} \otimes \mathcal{S})$
is a maximally commuting subgroup of $\mathcal{P}_{n}/K$, for all
$\ell \in \{x,y,z\}$.
\end{lemma}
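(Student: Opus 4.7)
The plan is to verify three properties of $\mathcal{S}'$ in sequence: pairwise commutativity, closure under the group operation (together with containing the identity and the self-inverse property of every element, giving the subgroup structure), and cardinality equal to $2^n$, whence maximality follows from Theorem~\ref{Theorem:maximally-commuting-maximum}. The essential inputs are that $\mathcal{S}$ is itself a subgroup of $\mathcal{P}_{n-1}/K$ by Lemma~\ref{lem-subgroup-abelian-pauli} and has $|\mathcal{S}| = 2^{n-1}$ by Theorem~\ref{Theorem:maximally-commuting-maximum}. Fix an arbitrary $\ell \in \{x,y,z\}$.

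For commutativity I would use the standard tensor criterion: $\sigma_a \otimes P$ and $\sigma_b \otimes Q$ commute iff the pairs $(\sigma_a,\sigma_b)$ and $(P,Q)$ have the same commutativity character. Every pair of elements of $\mathcal{S}'$ has first factors drawn from $\{\sigma_i,\sigma_\ell\}$, and in each of the three sub-cases the first factors commute ($\sigma_i$ with itself, $\sigma_\ell$ with itself, and $\sigma_i$ with $\sigma_\ell$); the second factors commute because $P,Q\in\mathcal{S}$ and $\mathcal{S}$ is commuting. Hence $\mathcal{S}'$ is pairwise commuting.

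For the subgroup property I would check closure case by case, according to whether each factor lies in $\sigma_i \otimes \mathcal{S}$ or $\sigma_\ell \otimes \mathcal{S}$. Using $\sigma_\ell^2 = \sigma_i$ and $\sigma_i \sigma_\ell = \sigma_\ell$ in $\mathcal{P}_n/K$, together with closure of $\mathcal{S}$ under multiplication, each of the four product types lands back in $\mathcal{S}'$. Since $\sigma_i \otimes I \in \mathcal{S}'$ is the identity and every element of $\mathcal{P}_n/K$ is its own inverse, $\mathcal{S}'$ is a subgroup.

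For maximality I would compute $|\mathcal{S}'|$. The two subsets $\sigma_i \otimes \mathcal{S}$ and $\sigma_\ell \otimes \mathcal{S}$ are disjoint, since an equality $[\sigma_i \otimes A] = [\sigma_\ell \otimes B]$ in the quotient would force $\sigma_i \otimes A = c\,(\sigma_\ell \otimes B)$ for some $c \in \{\pm 1,\pm i\}$, which is impossible because the first tensor factors $\sigma_i$ and $\sigma_\ell$ are not scalar multiples of each other as matrices. Hence $|\mathcal{S}'| = 2|\mathcal{S}| = 2^n$, and Theorem~\ref{Theorem:maximally-commuting-maximum} forces $\mathcal{S}'$ to be maximally commuting. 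The only step beyond a mechanical case check is this disjointness argument, but it is a quick consequence of the definition of the quotient by $K$, so I do not anticipate any real obstacle.
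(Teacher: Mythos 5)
Your proof is correct, but it takes a genuinely different route from the paper's. You establish that $\mathcal{S}'$ is a commuting subgroup of cardinality $2\cdot 2^{n-1}=2^n$ (using the disjointness of $\sigma_i\otimes\mathcal{S}$ and $\sigma_\ell\otimes\mathcal{S}$ and Theorem~\ref{Theorem:maximally-commuting-maximum} applied to $\mathcal{S}$ in $\mathcal{P}_{n-1}/K$) and then invoke the cardinality characterization of maximality from that same theorem. The paper instead argues by contradiction: if $\mathcal{S}'$ were not maximal, some $\sigma_j\otimes R$ could be added; for $j\in\{i,\ell\}$ this would contradict maximality of $\mathcal{S}$, and for the other two choices of $j$ the element $R$ would have to simultaneously commute and anticommute with every element of $\mathcal{S}$, which is impossible. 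The paper's argument is purely structural and never needs the counting result, nor does it address the subgroup claim explicitly (that is delegated to Lemma~\ref{lem-subgroup-abelian-pauli} once maximality is known); your version buys a cleaner, non-contradiction proof at the cost of relying on the quantitative Theorem~\ref{Theorem:maximally-commuting-maximum}, which is legitimate here since that theorem precedes this lemma. All three of your steps (the tensor commutativity criterion, the closure check using $\sigma_\ell^2=\sigma_i$ in the quotient, and the disjointness argument via distinct first string characters) are sound.
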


\begin{proof}
Without loss of generality, assume that $\ell = x$. Next, for the
sake of contradiction, suppose that $\mathcal{S}'$ is not maximally
commuting. As $\mathcal{S}$ is maximally commuting, there exists no
element of the form $\sigma_i \otimes R$, or $\sigma_x \otimes R$,
with $R \notin \mathcal{S}$, such that
$\mathcal{S}' \cup (\sigma_i \otimes R)$ is still commuting. So
there must exist an element of the form $\sigma_j \otimes R$,
$j \in \{y,z\}$, and $R \in \mathcal{P}_{n-1} / K$, such that
$\mathcal{S}' \cup \{\sigma_j \otimes R\}$ is mutually
commuting. But then $R$ must commute and anticommute simultaneously
with each element in $\mathcal{S}$, which is a contradiction.
\end{proof}

\begin{lemma}
\label{lem-maximally-commuting-converse2}
Suppose we have four subsets $\mathcal{C}_i$, $\mathcal{C}_x$,
$\mathcal{C}_y$, and $\mathcal{C}_z$ of $\mathcal{P}_{n-1} / K$, that
satisfy the property in Lemma \ref{Lemma:GroupsCommuting}(a), and also
satisfy
$\vert \mathcal{C}_i\vert = \vert\mathcal{C}_x\vert =
\vert\mathcal{C}_y\vert = \vert\mathcal{C}_z\vert = 2^{n-2}$.
Then the set
$\mathcal{S} = (\sigma_i \otimes \mathcal{C}_i) \cup (\sigma_u \otimes
\mathcal{C}_x) \cup (\sigma_v \otimes \mathcal{C}_y) \cup (\sigma_w
\otimes \mathcal{C}_z)$
is a maximally commuting subgroup of $\mathcal{P}_n / K$, for all
permutations $(u,v,w)$ of $(x,y,z)$. In particular this implies that
$\mathcal{C}_i$, $\mathcal{C}_x$, $\mathcal{C}_y$, and $\mathcal{C}_z$
also satisfy properties 1--6 of Lemma \ref{Lemma:GroupsCommuting}(c).
\end{lemma}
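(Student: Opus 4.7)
The plan is to first prove that $\mathcal{S}$ is a commuting set, then invoke Theorem~\ref{Theorem:maximally-commuting-maximum} on its cardinality to get maximality, appeal to Lemma~\ref{lem-subgroup-abelian-pauli} for the subgroup property, and finally apply Lemma~\ref{Lemma:GroupsCommuting}(c) to $\mathcal{S}$ to get properties 1--6 for free. Fix an arbitrary permutation $(u,v,w)$ of $(x,y,z)$ and write $\mathcal{S} = \bigcup_{\ell\in\{i,u,v,w\}} \sigma_{\ell}\otimes\mathcal{C}_{f(\ell)}$ using the obvious relabeling $f$.

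The commutativity check splits into a handful of cases on which block two elements lie in. Inside $\sigma_i\otimes\mathcal{C}_i$ and across $(\sigma_i\otimes\mathcal{C}_i)$ with any other block, the $\sigma_i$ factor is central and the $\mathcal{C}_\ell$ factors commute by hypothesis~(a), so the two elements commute. Inside a single $\sigma_\ell\otimes\mathcal{C}_{f(\ell)}$ block with $\ell\in\{u,v,w\}$, the $\sigma_\ell$ factors are equal and the second factors commute within $\mathcal{C}_{f(\ell)}$ by hypothesis~(a). The only nontrivial case is two elements lying in distinct blocks $\sigma_\ell\otimes\mathcal{C}_{f(\ell)}$ and $\sigma_{\ell'}\otimes\mathcal{C}_{f(\ell')}$ with $\ell,\ell'\in\{u,v,w\}$ distinct; here $\sigma_\ell$ and $\sigma_{\ell'}$ anticommute, and $\mathcal{C}_{f(\ell)}$ and $\mathcal{C}_{f(\ell')}$ anticommute between them by hypothesis~(a), so the two minus signs cancel in the tensor product and the elements commute.

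Since the four pieces are disjoint (they sit in different $\sigma_\ell$-blocks, which is well defined on $\mathcal{P}_n/K$) and each has size $2^{n-2}$, we get $|\mathcal{S}| = 4\cdot 2^{n-2} = 2^n$. Theorem~\ref{Theorem:maximally-commuting-maximum} then implies that $\mathcal{S}$ is maximally commuting, and Lemma~\ref{lem-subgroup-abelian-pauli} yields that $\mathcal{S}$ is a subgroup of $\mathcal{P}_n/K$. In particular $I = \sigma_i\otimes I \in \mathcal{S}$ so $I\in\mathcal{C}_i$, matching the structure assumed in Lemma~\ref{Lemma:GroupsCommuting}. Because the cardinality argument and the commutativity check are permutation-invariant, this conclusion holds for every permutation $(u,v,w)$ of $(x,y,z)$.

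For the ``in particular'' clause, note that $\mathcal{C}_x$ and $\mathcal{C}_y$ are non-empty (they have cardinality $2^{n-2}\geq 1$), so the hypothesis $\mathcal{C}_u,\mathcal{C}_v\neq\emptyset$ of Lemma~\ref{Lemma:GroupsCommuting}(c) is satisfied once we rename the labels according to any permutation. Applying Lemma~\ref{Lemma:GroupsCommuting}(c) to the now-known maximally commuting set $\mathcal{S}$ with its decomposition~\eqref{Eq:DecomposeXYZ} gives properties 1--6 for $\mathcal{C}_i,\mathcal{C}_x,\mathcal{C}_y,\mathcal{C}_z$. The whole argument is essentially bookkeeping; there is no real obstacle beyond being careful that the case analysis covers every type of pair and that all claims are permutation-independent.
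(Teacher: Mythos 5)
Your proposal is correct and follows essentially the same route as the paper: verify commutativity, count $\vert\mathcal{S}\vert = 2^n$, invoke Theorem~\ref{Theorem:maximally-commuting-maximum} for maximality, and then read off properties 1--6 from Lemma~\ref{Lemma:GroupsCommuting}(c). The paper compresses the commutativity verification to ``easily checked,'' whereas you spell out the case analysis (including the sign cancellation between distinct non-identity blocks) and explicitly cite Lemma~\ref{lem-subgroup-abelian-pauli} for the subgroup property; these are faithful elaborations, not a different argument.
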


\begin{proof}
It is easily checked that $\mathcal{S} \subseteq \mathcal{P}_n / K$
is a commuting set. We also have that $|\mathcal{S}| = 2^n$, and
hence by Theorem \ref{Theorem:maximally-commuting-maximum} it is a
maximally commuting subgroup. Hence the sets $\mathcal{C}_i$,
$\mathcal{C}_x$, $\mathcal{C}_y$, and $\mathcal{C}_z$ satisfy all
the properties 1--6 of Lemma \ref{Lemma:GroupsCommuting}(c).
\end{proof}


\section{Sets of anticommuting Paulis}\label{Sec:Anticommuting}

In this section we study the structure of sets of maximally
anticommuting abelian Paulis. After clarifying the basic structure in
Section~\ref{Sec:AnticommutingBasics}, we consider possible sizes of
these sets and properties of sets that attain the maximum size in
Sections~\ref{Sec:AnticommutingSize}
and~\ref{Sec:AnticommutingMaximum}. We provide an efficient algorithm
for creating various types of maximally anticommuting sets in
Section~\ref{Sec:AnticommutingGeneration}.

\subsection{Structure of maximally anticommuting sets}\label{Sec:AnticommutingBasics}

We start with a number of basic facts on sets of anticommuting abelian Paulis. 

\begin{theorem}\label{Thm:AnticommutingBasics}
Let $\mathcal{G} = \{P_1,\ldots,P_k\}$ be a set of anticommuting Paulis, then

\begin{enumerate}[(a)]
\setmargins
\item if $k$ is even, then $Q = \prod\mathcal{G}$ anticommutes with $\mathcal{G}$,
and $\mathcal{G}\cup \{Q\}$ is maximally anticommuting,\label{Thm:AnticommutingBasics-Q}

\item $\mathcal{G}$ is maximal implies that $k$ is
  odd,\label{Thm:AnticommutingBasics-n-odd}
\item $\prod\mathcal{G} = I$ implies that $\mathcal{G}$ is maximal and
  $k$ is odd, \label{Thm:AnticommutingBasics-prod-I}
\item for any proper non-empty subset
  $\mathcal{H} \subset \mathcal{G}$ it holds that
  $\prod\mathcal{H} \neq I$,\label{Thm:AnticommutingBasics-R}
\item $\prod \mathcal{G} \neq I$ implies that $\mathcal{G}$ is a
  minimal generating set for a subgroup of order
  $2^k$, \label{Thm:AnticommutingBasics-minimal-generating}
\item $\prod \mathcal{G} = I$ implies that $\mathcal{G}$ is a
  generating set for a subgroup of order
  $2^{k-1}$. \label{Thm:AnticommutingBasics-minimal-generating-prod-I}
\end{enumerate}
\end{theorem}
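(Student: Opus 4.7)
The plan rests on a single sign-counting identity, and then all six parts follow in the derivation order $(d),(c),(a),(b),(e),(f)$. I would first establish the identity: for any anticommuting set $\mathcal{H}$ of size $j$ with representatives $\tilde{P}_{i_1},\ldots,\tilde{P}_{i_j}\in\mathcal{P}_n$ and $\tilde{Q}:=\tilde{P}_{i_1}\cdots\tilde{P}_{i_j}$, sliding any $\tilde{P}_{i_\ell}$ past $\tilde{Q}$ from the left (which picks up a sign $(-1)^{\ell-1}$) and from the right (which picks up $(-1)^{j-\ell}$) and comparing yields $\tilde{P}_{i_\ell}\tilde{Q} = (-1)^{j-1}\tilde{Q}\tilde{P}_{i_\ell}$. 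Hence $Q := \prod \mathcal{H}$ anticommutes with every element of $\mathcal{H}$ when $j$ is even and commutes with each when $j$ is odd. The same slide shows that any $R$ anticommuting with all of $\mathcal{H}$ satisfies $\tilde{R}\tilde{Q} = (-1)^{j}\tilde{Q}\tilde{R}$.

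For (d), take a proper non-empty $\mathcal{H}\subsetneq\mathcal{G}$ of size $j$ and assume $\prod\mathcal{H} = I$, so that any representative $\tilde{R}$ of $\prod\mathcal{H}$ lies in $K$ and is central in $\mathcal{P}_n$. The internal relation then forces $(-1)^{j-1}=1$, so $j$ is odd, while picking any $P_{i_0}\in\mathcal{G}\setminus\mathcal{H}$ and using the external relation forces $(-1)^{j}=1$, so $j$ is even, a contradiction. For (c), the internal relation applied to $\mathcal{G}$ itself, with $\tilde{Q}\in K$ central by $\prod\mathcal{G}=I$, forces $k$ odd; any hypothetical $R$ extending $\mathcal{G}$ would, via the external relation, force $k$ even, so no such $R$ exists and $\mathcal{G}$ is maximal. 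For (a), $k$ even makes $Q=\prod\mathcal{G}$ anticommute with every element of $\mathcal{G}$; since no element anticommutes with itself, $Q\neq P_i$ for any $i$, so $\mathcal{G}\cup\{Q\}$ is anticommuting of size $k+1$ with product $Q^2=I$, and is therefore maximal by (c). Then (b) is the contrapositive of the first half of (a): if $k$ were even, (a) exhibits $\mathcal{G}\cup\{Q\}$ as a strict anticommuting extension of $\mathcal{G}$.

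For (e), combining (d) with the hypothesis $\prod\mathcal{G}\neq I$ yields $\prod\mathcal{H}\neq I$ for every non-empty $\mathcal{H}\subseteq\mathcal{G}$, so the counting argument inside the proof of Lemma~\ref{lem-generator-abelian-pauli1}(b) shows that distinct subsets of $\mathcal{G}$ produce distinct products, giving $|\langle\mathcal{G}\rangle|=2^k$ and hence minimality (removing any element generates at most $2^{k-1}$ products). For (f), pick any $P\in\mathcal{G}$ and set $\mathcal{G}':=\mathcal{G}\setminus\{P\}$; since $\prod\mathcal{G}=I$ and $P^2=I$, we have $P=\prod\mathcal{G}'$, so $\langle\mathcal{G}\rangle=\langle\mathcal{G}'\rangle$. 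Non-empty subsets of $\mathcal{G}'$ are proper non-empty subsets of $\mathcal{G}$, and $\prod\mathcal{G}'=P\neq I$ (since $I$ cannot lie in an anticommuting set of size $\geq 2$), so the (e) argument applied to $\mathcal{G}'$ gives $|\langle\mathcal{G}'\rangle|=2^{k-1}$. The edge case $k=1$ reduces to $\mathcal{G}=\{I\}$ and $\langle\mathcal{G}\rangle=\{I\}$ of order $2^0$.

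The main obstacle is bookkeeping signs across the quotient. Since $-I\in K$, the relation $PQ=-QP$ is not an equation in $\mathcal{P}_n/K$, so the identity $\tilde{P}_{i_\ell}\tilde{Q}=(-1)^{j-1}\tilde{Q}\tilde{P}_{i_\ell}$ must be maintained at the level of representatives in $\mathcal{P}_n$, even though $Q$ itself is independent of ordering in the quotient. The other subtle point is (d), which crucially combines an internal parity count (within $\mathcal{H}$) with an external one (using an element of $\mathcal{G}\setminus\mathcal{H}$); either direction alone would only pin down one parity of $j$ and not yield a contradiction.
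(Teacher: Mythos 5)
Your proposal is correct and rests on the same core argument as the paper: counting the $\pm 1$ signs picked up when sliding a representative Pauli past the product of an anticommuting set, then exploiting the parity of the set size (together with centrality of $K$ when the product is $I$). The only difference is organizational — you prove (d) and (c) directly from the internal/external sign identities and then deduce the maximality claim in (a) from (c), whereas the paper proves (a) first and derives (c) and (d) from it — but the mathematical content is the same.
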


\begin{proof}
\textit{(\ref{Thm:AnticommutingBasics-Q})} To determine
commutativity we can take any Pauli operator represented by the
equivalence classes, which we shall indicate by a bar. For any
$i\in[k]$ it takes $k$ pairwise matrix swaps to convert
$\bar{P}_i\bar{Q}$ to $\bar{Q}\bar{P}_i$. Each swap with a term
other than $\bar{P}_i$ leads to a multiplication by $-1$ due to
anticommutativity. Given that only one of the $k$ swaps is with
$\bar{P}_i$ itself, it follows from $k-1$ is odd, that
$\bar{P}_i\bar{Q} = -\bar{Q}\bar{P}_i$, and therefore that
$\bar{P}_i$ and $\bar{Q}$ anticommute. The result that $Q$
anticommutes with all $\mathcal{G}$ follows by observing that both
$i$ and the operator representation was arbitrary. Now, suppose
there exists a
$P \in (\mathcal{P}_n/K \setminus \langle\mathcal{G}\rangle)$ that
anticommutes with $\mathcal{G}$, then any corresponding operator
$\bar{P}$ anticommutes with all even $k$ terms that comprise
$\bar{Q}$, and therefore commutes with $Q$. It follows that
$\mathcal{G}\cup\{Q\}$ cannot be extended, and is therefore
maximally anticommuting.

\textit{(\ref{Thm:AnticommutingBasics-n-odd})} It follows from (a),
that if $k$ is even we can add $Q$ to $\mathcal{G}$, which means
that $\mathcal{G}$ is not maximal.

\textit{(\ref{Thm:AnticommutingBasics-prod-I})} If
$\mathcal{G}=\{I\}$ then the result is clear, and we therefore
assume that $k \geq 2$. Suppose $k$ is even, then we have from (a)
that $\prod \mathcal{G}$ anticommutes with every element in
$\mathcal{G}$, the product could therefore not have been $I$, and it
follows that $k$ must be odd. Define
$\mathcal{G}'=\{P_1,\ldots,P_{k-1}\}$, then
$P_{k} = Q = \prod\mathcal{G}'$. Applying (a) to $\mathcal{G}'$ then
shows that $\mathcal{G} = \mathcal{G}'\cup\{Q\}$ is maximally
anticommuting.

\textit{(\ref{Thm:AnticommutingBasics-R})} Suppose that
$\prod\mathcal{H}=I$. Then from the previous result it follows that
$\mathcal{H}$ is maximal. Consequently,
$\mathcal{G}\setminus\mathcal{H}$ cannot anticommute with
$\mathcal{H}$, thereby contradicting the fact that $\mathcal{G}$ is
anticommuting.

\textit{(\ref{Thm:AnticommutingBasics-minimal-generating})} For sake
of contradiction suppose that $\mathcal{G}$ is not a minimal
generating set (note that $I \not\in \mathcal{G}$ since it is
anticommuting). Then there exists $i\in[k]$, such that
$\langle \mathcal{G} \rangle = \langle \mathcal{G} \setminus \{P_i\}
\rangle$.
But this implies that $P_i = \prod \mathcal{H}$ for some
$\mathcal{H} \subseteq \mathcal{G} \setminus \{P_i\}$, and thus
$P_i (\prod \mathcal{H}) = I$. If
$\mathcal{H} \subset \mathcal{G} \setminus \{P_i\}$ is a proper
subset, then we get a contradiction by
(\ref{Thm:AnticommutingBasics-R}), while if
$\mathcal{H} = \mathcal{G} \setminus \{P_i\}$ we also get a
contradiction because that would imply $\prod \mathcal{G} =
I$.
Finally $|\langle \mathcal{S} \rangle| = 2^k$ by Lemma
\ref{lem-generator-abelian-pauli1}(b).

\textit{(\ref{Thm:AnticommutingBasics-minimal-generating-prod-I})}
If $\mathcal{G} = \{I\}$ the statement is true. Otherwise
$I \notin \mathcal{G}$, and $|\mathcal{G}| \geq 2$. In this case
$I \neq P_1 = \prod (\mathcal{G} \setminus \{P_1\})$, and so
$\langle \mathcal{G} \rangle = \langle \mathcal{G} \setminus \{P_1\}
\rangle$,
and by (\ref{Thm:AnticommutingBasics-minimal-generating}),
$\mathcal{G} \setminus \{P_1\}$ is a minimal generating set for the
subgroup $\langle \mathcal{G} \setminus \{P_1\} \rangle$ of order
$2^{k-1}$.
\end{proof}

The above result also shows that if $\mathcal{G} \neq \{I\}$ is an
anticommuting set with $\prod\mathcal{G}=I$, then we can create a
minimum generating set by removing any single element. Next we prove
an important structure theorem for any maximally anticommuting subset
of $\mathcal{P}_n / K$.

\begin{theorem}{(Anticommuting structure theorem)}
\label{thm-anticommuting-structure}
Let $\mathcal{G} \subseteq \mathcal{P}_n / K$ be maximally
anticommuting with decomposition~\eqref{Eq:DecomposeUVW}. Then the
following statements are true.

\begin{enumerate}[(i)]
\setmargins
\item The elements within each of the sets anticommute, and elements
  in $\mathcal{C}_i$ anticommute with $\mathcal{C}_\ell$ for
  $\ell\in\{u,v,w\}$. Elements between $\mathcal{C}_u$,
  $\mathcal{C}_v$, and $\mathcal{C}_w$ commute.

\item Decomposition \eqref{Eq:DecomposeUVW} has exactly one of the
  following forms:\vspace*{3pt}
\begin{center}
\begin{tabular}{llp{8.0cm}}
\hline
\multicolumn{2}{l}{Non-empty sets} & Properties\\
\hline
(a)& \!\!\!$\mathcal{C}_i$ & $\mathcal{C}_i$ is maximally anticommuting and $\vert\mathcal{G}\vert < 2n$.
\\
(b) & \!\!\!$\mathcal{C}_i$, $\mathcal{C}_u$ & $\vert\mathcal{C}_i\vert$ is odd and $\vert\mathcal{C}_u\vert$ is even, $\mathcal{C}_i\cup \mathcal{C}_u$ is maximally anticommuting,
$\vert\mathcal{G}\vert < 2n$.
\\
(c) & \!\!\!$\mathcal{C}_i$, $\mathcal{C}_u$, $\mathcal{C}_v$ & $\vert\mathcal{C}_i\vert$ is odd and $\vert\mathcal{C}_u\vert$ and $\vert\mathcal{C}_v\vert$ are even.
\\
(d) & \!\!\!$\mathcal{C}_u$, $\mathcal{C}_v$, $\mathcal{C}_w$ & $\vert\mathcal{C}_\ell\vert$ is odd for all $\ell\in\{u,v,w\}$.
\\
(e) & \!\!\!all & $\vert\mathcal{C}_{u}\vert$, $\vert\mathcal{C}_v\vert$, and $\vert\mathcal{C}_w\vert$ are either all odd (even), and $\vert\mathcal{C}_i\vert$ is even (odd).\\
\hline
\end{tabular}
\end{center}

\item The sets $\mathcal{C}_i$, $\mathcal{C}_\ell$ are disjoint and
  $\vert\mathcal{C}_i\cup\mathcal{C}_\ell\vert$ is odd for all
  $\ell \in \{u,v,w\}$. The sets $\mathcal{C}_a$, $\mathcal{C}_b$ are
  disjoint whenever $|\mathcal{C}_a| > 1$ or $|\mathcal{C}_b| > 1$,
  for every distinct $a, b \in \{u,v,w\}$.
\end{enumerate}
\end{theorem}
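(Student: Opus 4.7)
Part (i) is a direct computation on tensor products. For any $\sigma_a,\sigma_b\in\{\sigma_i,\sigma_x,\sigma_y,\sigma_z\}$ and any representatives of $P,Q\in\mathcal{P}_{n-1}/K$, the identity $\comm{\sigma_a\otimes P}{\sigma_b\otimes Q} = \comm{\sigma_a}{\sigma_b}\cdot\comm{P}{Q}$ holds. Since every pair of distinct elements of $\mathcal{G}$ anticommutes, this product is forced to be $-1$. Within a single block, or between $\mathcal{C}_i$ and any $\mathcal{C}_\ell$, the leading-factor contribution is $+1$, so the trailing factors in $\mathcal{P}_{n-1}/K$ must anticommute; between distinct $\mathcal{C}_a,\mathcal{C}_b$ with $a,b\in\{u,v,w\}$ the leading contribution is $-1$, so the trailing factors must commute.

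For part (ii), I would enumerate the configurations of non-empty blocks compatible with the convention on $(u,v,w)$: exactly seven arise, namely the five cases (a)--(e) in the theorem together with ``$\mathcal{C}_u$ only'' and ``$\mathcal{C}_u,\mathcal{C}_v$ only''. The last two are ruled out by exhibiting the explicit extensions $\sigma_v\otimes I$ and $\sigma_w\otimes I$ respectively, which by (i) anticommute with every element of $\mathcal{G}$ and are manifestly not in $\mathcal{G}$, contradicting maximality. The parity assertions in each remaining case are then derived by a combination of lifting arguments and candidate-extension arguments. In (a) any $R\in(\mathcal{P}_{n-1}/K)\setminus\mathcal{C}_i$ anticommuting with $\mathcal{C}_i$ lifts to an extension $\sigma_i\otimes R$, so $\mathcal{C}_i$ is maximally anticommuting in $\mathcal{P}_{n-1}/K$ and Theorem~\ref{Thm:AnticommutingBasics}(\ref{Thm:AnticommutingBasics-n-odd}) supplies the parity. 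In (b), (c), and (d) the individual parities of the $|\mathcal{C}_\ell|$ are pinned by well-chosen witnesses of the form $\sigma_c\otimes\prod\mathcal{C}_\ell$: writing out the pairings against each block via (i) shows that such a witness extends $\mathcal{G}$ unless $|\mathcal{C}_\ell|$ has the stated parity. Once the parities of the $\mathcal{C}_\ell$ are known, the parity of $|\mathcal{C}_i|$ is forced by Theorem~\ref{Thm:AnticommutingBasics}(\ref{Thm:AnticommutingBasics-n-odd}), which makes $|\mathcal{G}|$ odd. Case (e) requires a similar but more intricate witness, for example $\sigma_v\otimes(\prod\mathcal{C}_u\cdot\prod\mathcal{C}_v)$ and its cyclic variants, which collectively force $|\mathcal{C}_u|$, $|\mathcal{C}_v|$, and $|\mathcal{C}_w|$ to share a common parity; the overall oddness of $|\mathcal{G}|$ then pins $|\mathcal{C}_i|$ to the opposite parity. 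The strict bounds $|\mathcal{G}|<2n$ in (a) and (b) come by induction on $n$, from the fact that $\mathcal{C}_i$ (respectively $\mathcal{C}_i\cup\mathcal{C}_u$) is a maximally anticommuting set inside the smaller group $\mathcal{P}_{n-1}/K$.

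Part (iii) splits into disjointness and the parity claim. Disjointness of $\mathcal{C}_i$ and $\mathcal{C}_\ell$ is immediate: if $P$ lay in both, then $\sigma_i\otimes P$ and $\sigma_\ell\otimes P$ would be distinct elements of $\mathcal{G}$ whose representatives satisfy $(\sigma_i\otimes P)(\sigma_\ell\otimes P) = \sigma_\ell\otimes I = (\sigma_\ell\otimes P)(\sigma_i\otimes P)$, so they commute, contradicting the anticommuting hypothesis. For distinct $a,b\in\{u,v,w\}$ with $|\mathcal{C}_a|>1$, suppose $P\in\mathcal{C}_a\cap\mathcal{C}_b$ and pick $P'\in\mathcal{C}_a\setminus\{P\}$; by (i), the pair $P,P'$ would have to both anticommute (as two elements of $\mathcal{C}_a$) and commute (as an element of $\mathcal{C}_a$ paired with an element of $\mathcal{C}_b$), a contradiction. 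Oddness of $|\mathcal{C}_i\cup\mathcal{C}_\ell|$ then follows by summing the disjoint block sizes determined in (ii): in every case (a)--(e) and for every $\ell\in\{u,v,w\}$, the non-empty summands $|\mathcal{C}_i|$ and $|\mathcal{C}_\ell|$ add up to an odd number.

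The main obstacle is the parity bookkeeping in part (ii), and particularly case (e), where all four blocks interact in every candidate-extension computation and multiple intertwined witnesses must be combined to pin the parities simultaneously. Each individual check is elementary via (i), but the branching across cases (a)--(e), together with the need to pick the right witness in each branch, demands careful organization to ensure that no combination of block parities is missed.
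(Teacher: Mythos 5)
Your argument follows the paper's proof essentially step for step: part (i) via the tensor-product commutativity identity, part (ii) by ruling out the two extra configurations (``$\mathcal{C}_u$ only'' and ``$\mathcal{C}_u,\mathcal{C}_v$ only'') and pinning the parities through witnesses whose addition would contradict maximality, and part (iii) by the same disjointness arguments combined with the parity table from (ii). The only deviations are cosmetic choices of witness --- e.g.\ you use $\sigma_w\otimes I$ where the paper uses $\sigma_\ell\otimes\prod\mathcal{C}_\ell$ to kill the two-block case, and $\sigma_v\otimes(\prod\mathcal{C}_u\cdot\prod\mathcal{C}_v)$ with its cyclic variants in case (e) where the paper uses $\sigma_i\otimes P$ or $\sigma_u\otimes P$ with $P=\prod(\mathcal{C}_i\cup\mathcal{C}_u)$ --- and these alternative witnesses all check out.
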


\begin{proof}
\textit{(i)} The commutativity relations are easily verified.

\textit{(ii)} The decomposition has exactly one of the given forms
(a)--(e). By contradiction, if only $\mathcal{C}_u$ is non-empty we
can add $\sigma_z \otimes \sigma_i$. Likewise, if only $\mathcal{C}_u$
and $\mathcal{C}_v$ are non-empty, then one of the sets, say $\ell$,
has even size. It follows that we can add $\sigma_\ell \otimes P$ with
$P=\prod\mathcal{C}_\ell$.

In cases (a) and (b) note that we can omit the first element of all
Paulis without affecting the commutativity. It follows that
$\mathcal{C}_i \cup \mathcal{C}_u$ is a set of maximally anticommuting
$(n-1)$-Paulis, and we must therefore have that
$\vert \mathcal{G} \vert \leq 2(n-1)+1 < 2n$.

In cases (b) and (c), suppose that $\vert\mathcal{C}_i\vert$ is
even. Then $P=\prod\mathcal{C}_i$ anticommutes with $\mathcal{C}_i$
and commutes with the other sets, and we can therefore add
$\sigma_w\otimes P$ to $\mathcal{G}$. It follows that
$\vert\mathcal{C}_i\vert$ must be odd. For (c), we show that
$\vert\mathcal{C}_u\vert$ must be even. Since $\vert\mathcal{G}\vert$
is odd, it follows that $\vert\mathcal{C}_v\vert$ is also
even. Suppose by contradiction that $\vert\mathcal{C}_u\vert$ is
odd. Then $P=\prod(\mathcal{C}_i\cup\mathcal{C}_u)$ anticommutes with
$\mathcal{C}_i$, $\mathcal{C}_u$, and $\mathcal{C}_v$. It follows that
we can add $\sigma_i\otimes P$.

In case (d), suppose that, without loss of generality,
$\vert\mathcal{C}_u\vert$ is even. Then $P=\prod\mathcal{C}_u$
anticommutes with $\mathcal{C}_u$, but commutes with both
$\mathcal{C}_v$ and $\mathcal{C}_w$, since $\mathcal{C}_v$ and
$\mathcal{C}_w$ commute with all matrices in $\mathcal{C}_u$. It
follows that we can add $\sigma_u\otimes P$, which contradicts
maximality. Since the choice of $\mathcal{C}_u$ was arbitrary it
follows that all sets must have odd cardinality.

In case (e), suppose that
$\vert\mathcal{C}_i\vert + \vert\mathcal{C}_u\vert$ is even. Then we
can form $P=\prod(\mathcal{C}_i\cup\mathcal{C}_u)$, which anticommutes
with $\mathcal{C}_i$ and $\mathcal{C}_u$ and either commutes or
anticommutes with both $\mathcal{C}_v$ and $\mathcal{C}_w$. If it
commutes with both we can add $\sigma_u\otimes P$, otherwise we can
add $\sigma_i\otimes P$. It follows that
$\vert\mathcal{C}_i\vert + \vert\mathcal{C}_\ell\vert$ is odd for all
$\ell\in\{u,v,w\}$. If $\vert\mathcal{C}_i\vert$ is even, then
$\vert\mathcal{C}_{u}\vert$, $\vert\mathcal{C}_v\vert$, and
$\vert\mathcal{C}_w\vert$ are all odd, and vice versa.

\textit{(iii)} First suppose that
$P \in \mathcal{C}_i \cap \mathcal{C}_\ell \neq \emptyset$, for
$\ell \in \{u,v,w\}$. Then $\sigma_i \otimes P$ and
$\sigma_\ell \otimes P$ are both in $\mathcal{G}$ but commute, which
is a contradiction. Now suppose, without loss of generality, that
$|\mathcal{C}_u| > 1$, and
$Q \in \mathcal{C}_u \cap \mathcal{C}_v \neq \emptyset$. Then there
exists $R \in \mathcal{C}_u$, different from $Q$, which anticommutes
with $Q$. But then $\sigma_u \otimes R$ commutes with
$\sigma_v \otimes Q$, which is again a contradiction.  The fact that
$\vert\mathcal{C}_i\cup\mathcal{C}_{\ell}\vert$ is odd follows
directly from the decompositions in (ii).
\end{proof}

The following observation now follows, which is the most important result of this section.

\begin{corollary}
\label{lem-anticommuting-prod}
An anticommuting subset $\mathcal{G} \subseteq \mathcal{P}_n / K$ is
maximally anticommuting iff $\prod\mathcal{G} = I$.
\end{corollary}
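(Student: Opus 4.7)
The plan is to prove the two directions separately. The ``if'' direction is already contained in Theorem~\ref{Thm:AnticommutingBasics}(\ref{Thm:AnticommutingBasics-prod-I}): if $\prod\mathcal{G}=I$, then $\mathcal{G}$ is maximally anticommuting. Only the ``only if'' direction requires work, and my plan is to prove it by contradiction using Lemma~\ref{lem-pattern-generators}.

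First I would dispatch the degenerate case. If $\mathcal{G}=\{I\}$, then $\prod\mathcal{G}=I$ and we are done. Otherwise, since $I$ commutes with every element of $\mathcal{P}_n/K$, any anticommuting set containing $I$ together with at least one other element would fail to be anticommuting, so I may assume $I\notin\mathcal{G}$.

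Next I would argue as follows. Suppose $\mathcal{G}$ is maximally anticommuting with $k=|\mathcal{G}|$, and assume for contradiction that $\prod\mathcal{G}\neq I$. Then Theorem~\ref{Thm:AnticommutingBasics}(\ref{Thm:AnticommutingBasics-minimal-generating}) implies that $\mathcal{G}$ is a minimal generating set for a subgroup of order $2^k$, so Lemma~\ref{lem-pattern-generators} applies and tells us that each of the $2^k$ possible commutativity maps with respect to $\mathcal{G}$ is realised by exactly $4^n/2^k\geq 1$ distinct elements of $\mathcal{P}_n/K$. In particular, there exists at least one element $P$ whose map $\zet{P}{\mathcal{G}}$ is all-anticommuting. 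However, every $P_i\in\mathcal{G}$ commutes with itself, so $\zet{P_i}{\mathcal{G}}(P_i)=+1$, and hence no element of $\mathcal{G}$ realises the all-anticommuting map. Therefore $P\notin\mathcal{G}$, and $\mathcal{G}\cup\{P\}$ is a strictly larger anticommuting set, contradicting maximality of $\mathcal{G}$. Hence $\prod\mathcal{G}=I$, as required.

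I do not expect a significant obstacle: the argument rests entirely on combining Lemma~\ref{lem-pattern-generators} with Theorem~\ref{Thm:AnticommutingBasics}(\ref{Thm:AnticommutingBasics-minimal-generating}), and the bound $k\leq 2n$ from Lemma~\ref{lem-generator-abelian-pauli1}(c) is what makes $4^n/2^k\geq 1$ hold. The only care needed is to verify that the hypotheses of Lemma~\ref{lem-pattern-generators} apply, namely that $\mathcal{G}$ is a minimal generating set with $\mathcal{G}\neq\{I\}$; both hold once we rule out the trivial case and assume $\prod\mathcal{G}\neq I$.
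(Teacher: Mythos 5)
Your proof is correct, but it takes a genuinely different route from the paper. The paper proves the ``only if'' direction via the anticommuting structure theorem (Theorem~\ref{thm-anticommuting-structure}(ii)): it fixes an arbitrary tensor position, decomposes $\mathcal{G}$ as in~\eqref{Eq:DecomposeUVW}, and checks in each of the five cases (a)--(e) that the parity constraints on $\vert\mathcal{C}_i\vert,\vert\mathcal{C}_u\vert,\vert\mathcal{C}_v\vert,\vert\mathcal{C}_w\vert$ force the product of the first factors to be $\sigma_i$; arbitrariness of the position then gives $\prod\mathcal{G}=I$. You instead combine Theorem~\ref{Thm:AnticommutingBasics}(\ref{Thm:AnticommutingBasics-minimal-generating}) with Lemma~\ref{lem-pattern-generators}: if $\prod\mathcal{G}\neq I$ then $\mathcal{G}$ is a minimal generating set, so the all-anticommuting map with respect to $\mathcal{G}$ is realised by $4^n/2^k\geq 1$ elements, none of which lie in $\mathcal{G}$, contradicting maximality. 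Your argument is shorter, avoids the five-way case analysis entirely, and in effect anticipates the extension result the paper only proves later (Lemma~\ref{lem-extend-anticommuting-sets} shows precisely that non-maximal anticommuting sets can be extended, by a closely related coset argument). What the paper's route buys is the finer positional information --- that the product is $\sigma_i$ in every tensor factor as a consequence of the parity structure --- which motivates and exercises the structure theorem. Your hypothesis-checking (ruling out $\mathcal{G}=\{I\}$ and verifying $I\notin\mathcal{G}$ before invoking Lemma~\ref{lem-pattern-generators}) is exactly the care that is needed, so there is no gap.
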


\begin{proof}
The ``if'' part was already proved in
Theorem~\ref{Thm:AnticommutingBasics}(\ref{Thm:AnticommutingBasics-prod-I}). For
the other direction, assume that
$\mathcal{G} \subseteq \mathcal{P}_n / K$ is maximally
anticommuting. Without loss of generality, choose any term index of
the underlying Pauli operators and permute the term order such that
the selected index is the first one. It suffices to show that the
product of all the elements in $\mathcal{G}$ can be written as
$\sigma_i \otimes P$, since the result then holds for all terms due
to the fact that the selected index was arbitrary. To complete the
proof, consider the decomposition in
\eqref{Eq:DecomposeUVW}. Theorem~\ref{thm-anticommuting-structure}(ii)
guarantees that only one of the cases (a)--(e) can occur, and in
each case the product of the first term is $\sigma_i$, as desired.
\end{proof}

Another interesting corollary is the following:
\begin{corollary}\label{cor-product-three-replace}
Let $\mathcal{G} \subseteq \mathcal{P}_n / K$ be maximally
anticommuting, and suppose $|\mathcal{G}| = 2m+1 \geq 3$. Define
$\mathcal{H}_m = \mathcal{G}$, and for $0 < k \leq m$, recursively
define
$\mathcal{H}_{k-1} = (\mathcal{H}_{k} \setminus \mathcal{J}_{k})
\cup (\prod \mathcal{J}_{k})$,
for any $\mathcal{J}_k \subseteq \mathcal{H}_{k}$ with
$|\mathcal{J}_k| = 3$. Then $\mathcal{H}_k$ is maximally
anticommuting for all $0 \leq k \leq m$, and
$\mathcal{H}_0 = \{I\}$.
\end{corollary}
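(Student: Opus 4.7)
The plan is to proceed by downward induction on $k$, from $k=m$ down to $k=0$, showing that each $\mathcal{H}_k$ is maximally anticommuting. The base case $k=m$ is immediate, since $\mathcal{H}_m = \mathcal{G}$ by definition. For the inductive step I assume that $\mathcal{H}_k$ is maximally anticommuting for some $1 \leq k \leq m$, write $\mathcal{J}_k = \{P_1,P_2,P_3\} \subseteq \mathcal{H}_k$, and set $Q = P_1P_2P_3$, so that $\mathcal{H}_{k-1} = (\mathcal{H}_k \setminus \mathcal{J}_k) \cup \{Q\}$.

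The first step is to verify that $\mathcal{H}_{k-1}$ is anticommuting. Elements of $\mathcal{H}_k \setminus \mathcal{J}_k$ still pairwise anticommute by the inductive hypothesis, so the only new relations to check are that $Q$ anticommutes with every $P \in \mathcal{H}_k \setminus \mathcal{J}_k$. Lifting to concrete Pauli representatives and moving $\bar P$ past the three factors of $\bar{Q} = \bar P_1\bar P_2\bar P_3$ picks up three sign flips, giving $\bar P\bar Q = (-1)^3 \bar Q\bar P = -\bar Q\bar P$. This mirrors the parity swap in Theorem~\ref{Thm:AnticommutingBasics}(\ref{Thm:AnticommutingBasics-Q}), except that here the odd number of factors makes $Q$ anticommute, rather than commute, with $P$.

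The second step is to upgrade anticommuting to maximally anticommuting using Corollary~\ref{lem-anticommuting-prod}. Since $\mathcal{P}_n/K$ is abelian, products may be reordered freely, and
\[
\prod \mathcal{H}_{k-1} \;=\; \Bigl(\prod (\mathcal{H}_k \setminus \mathcal{J}_k)\Bigr)\cdot Q \;=\; \Bigl(\prod (\mathcal{H}_k \setminus \mathcal{J}_k)\Bigr)\cdot \prod \mathcal{J}_k \;=\; \prod \mathcal{H}_k \;=\; I,
\]
where the last equality applies Corollary~\ref{lem-anticommuting-prod} to the inductive hypothesis. Applying Corollary~\ref{lem-anticommuting-prod} in the reverse direction then gives that $\mathcal{H}_{k-1}$ is maximally anticommuting, completing the inductive step.

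To conclude $\mathcal{H}_0 = \{I\}$, I would first note that $Q$ cannot coincide with any $P \in \mathcal{H}_k \setminus \mathcal{J}_k$, because such a $P$ would have to anticommute with itself, which no element of $\mathcal{P}_n/K$ does; hence $|\mathcal{H}_{k-1}| = |\mathcal{H}_k| - 2$ at every stage, so $|\mathcal{H}_0| = 1$. If $\mathcal{H}_0 = \{R\}$, then by Corollary~\ref{lem-anticommuting-prod} we have $R = \prod \mathcal{H}_0 = I$, i.e., $\mathcal{H}_0 = \{I\}$. The only substantive content is the parity swap in the first step; everything else is bookkeeping built on Corollary~\ref{lem-anticommuting-prod} and the commutativity of the abelian Pauli group.
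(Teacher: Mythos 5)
Your proof is correct and follows essentially the same route as the paper: downward induction using Corollary~\ref{lem-anticommuting-prod} in both directions, i.e., $\prod\mathcal{H}_k=I$ implies $\prod\mathcal{H}_{k-1}=I$, which upgrades anticommuting to maximally anticommuting. You merely fill in two details the paper leaves implicit --- the three-sign-flip check that $\prod\mathcal{J}_k$ anticommutes with the remaining elements, and the observation that the cardinality drops by exactly two at each step --- both of which are accurate.
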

\begin{proof}
We proceed by induction. $\mathcal{H}_m$ is maximally anticommuting
by definition. Now assume that $\mathcal{H}_k$ is maximally
anticommuting for some positive $k \leq m$. Then by
Corollary~\ref{lem-anticommuting-prod}, we have
$\prod \mathcal{H}_k = I$, and so by construction
$\prod \mathcal{H}_{k-1} = \prod \mathcal{H}_k = I$. Moreover
$\mathcal{H}_{k-1}$ is anticommuting, and using
Corollary~\ref{lem-anticommuting-prod} again, we conclude that
$\mathcal{H}_{k-1}$ is maximally anticommuting. This completes the
induction step. Note that the final set satisfies
$\mathcal{H}_0 = \{\prod\mathcal{H}_{n}\} = \{I\}$.
\end{proof}

\subsection{Size of maximally anticommuting sets}\label{Sec:AnticommutingSize}

For $1$-Paulis we find that $\{\sigma_i\}$ and
$\{\sigma_x,\sigma_y,\sigma_z\}$ are maximally anticommuting sets. We
can hierarchically generate sets of higher-dimensional anticommuting
matrices from existing sets. For example, given a set $\mathcal{G}_n$
of maximally anticommuting $n$-Paulis, we can
\begin{enumerate}
\item Set $\mathcal{G}_{n+1} = (\sigma_x \otimes \mathcal{G}_{n}) \cup
  (\sigma_y \otimes I) \cup (\sigma_z \otimes I)$.
\item Set
  $\mathcal{G}_{2n+1} = (\sigma_x \otimes I\otimes \mathcal{G}_{n})
  \cup (\sigma_y \otimes \mathcal{G}_{n} \otimes I)\cup (\sigma_z
  \otimes I \otimes I)$.
\item Given an odd number of maximal sets $\mathcal{S}_i$ of equal
  cardinality, then taking Kronecker products of corresponding
  elements in these sets gives a maximally anticommuting set, since
  the new elements anticommute and multiply to $I$.
\end{enumerate}

\noindent
Repeated application of the first construction with the initial set
$\mathcal{G}_1 = \{\sigma_x,\sigma_y,\sigma_z\}$, gives a set
$\mathcal{G}_n$ with $\vert\mathcal{G}_n\vert = 2n+1$. The following
results clarify possible sizes of maximally anticommuting sets, and
show that the above $\mathcal{G}_n$ attains the maximum size for sets
of anticommuting elements in $\mathcal{P}_n/K$.

\begin{lemma}\label{lem-max-anticommuting-paulis}
If $\mathcal{G} \subseteq \mathcal{P}_n / K$ is anticommuting, then $\vert\mathcal{G}\vert \leq 2n+1$.
\end{lemma}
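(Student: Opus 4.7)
The plan is to derive this as a short corollary of Theorem~\ref{Thm:AnticommutingBasics} combined with Lagrange's theorem. The idea is to use the subgroup $\langle\mathcal{G}\rangle$ generated by $\mathcal{G}$ inside the ambient group $\mathcal{P}_n/K$, whose order is $4^n = 2^{2n}$, and exploit the fact that the preceding theorem already tells us the order of this subgroup in terms of $k := |\mathcal{G}|$ once we know whether $\prod\mathcal{G}$ is $I$ or not.

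I would split into two cases. In the case $\prod\mathcal{G} \neq I$, part~(\ref{Thm:AnticommutingBasics-minimal-generating}) of Theorem~\ref{Thm:AnticommutingBasics} identifies $\mathcal{G}$ as a minimal generating set for a subgroup of order $2^k$; Lagrange then forces $2^k \mid 2^{2n}$, giving $k \leq 2n$. In the case $\prod\mathcal{G} = I$, part~(\ref{Thm:AnticommutingBasics-minimal-generating-prod-I}) yields a generated subgroup of order $2^{k-1}$, and the same divisibility gives $k - 1 \leq 2n$, that is, $k \leq 2n+1$. The trivial edge case $\mathcal{G} = \emptyset$ is subsumed since $0 \leq 2n+1$. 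Taking the worse of the two bounds gives $|\mathcal{G}| \leq 2n+1$ unconditionally.

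There is essentially no obstacle here: all of the structural heavy lifting was already performed in Theorem~\ref{Thm:AnticommutingBasics}, and this lemma is really just a one-line repackaging of its implications together with Lagrange's theorem. As a side benefit, the same argument makes transparent when the bound is tight: the maximum $|\mathcal{G}| = 2n+1$ is attained precisely when $\prod\mathcal{G} = I$ and $\langle\mathcal{G}\rangle$ is all of $\mathcal{P}_n/K$, which is consistent with Corollary~\ref{lem-anticommuting-prod} and the explicit construction $\mathcal{G}_n$ described just before the lemma.
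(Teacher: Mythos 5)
Your proposal is correct and follows essentially the same route as the paper: both arguments reduce to Theorem~\ref{Thm:AnticommutingBasics}(\ref{Thm:AnticommutingBasics-minimal-generating}) and (\ref{Thm:AnticommutingBasics-minimal-generating-prod-I}) to pin the order of $\langle\mathcal{G}\rangle$ at $2^{|\mathcal{G}|}$ or $2^{|\mathcal{G}|-1}$, and then compare against $|\mathcal{P}_n/K| = 4^n$ (the paper phrases this as a contradiction for $|\mathcal{G}| > 2n+1$, you phrase it as a direct divisibility bound, but the content is identical).
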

\begin{proof}
  We note that this lemma is well-known; it follows for example from
  Proposition 9 in \cite{hrubes2016families}, and is also proved in
  ~\cite{bonet2019nearly}. Here we give an elegant and simpler proof
  of this fact. For the sake of contradiction, suppose
  $|\mathcal{G}| > 2n+1$. If $\prod \mathcal{G} \neq I$, then
  $\mathcal{G}$ is a minimal generating set, while if
  $\prod \mathcal{G} = I$, we can exclude any element
  $P \in \mathcal{G}$ and then $\mathcal{G} \setminus \{P\}$ is a
  minimal generating set by Theorem
  \ref{Thm:AnticommutingBasics}(\ref{Thm:AnticommutingBasics-minimal-generating}). In
  either case we have a minimal generating set of cardinality at least
  $2n+1$. By Lemma \ref{lem-generator-abelian-pauli1}(b), the set
  generates a subgroup of $\mathcal{P}_n / K$ of order at least
  $2^{2n+1} > 4^n$, which is a contradiction.
\end{proof}

\begin{corollary}\label{cor-max-anticommuting-paulis-exist}
  For every odd integer $\ell$ up to and including $2n+1$, there
  exists a maximally anticommuting subset of $\mathcal{P}_n / K$ of
  cardinality $\ell$.
\end{corollary}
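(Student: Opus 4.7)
The plan is to combine two results that are already in hand: the explicit hierarchical construction producing a maximally anticommuting set of the largest size, and Corollary~\ref{cor-product-three-replace} which lets us reduce the size of any maximally anticommuting set by $2$ at a time while preserving maximality.

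First, I would invoke the hierarchical construction described right before Lemma~\ref{lem-max-anticommuting-paulis}: starting from $\mathcal{G}_1 = \{\sigma_x,\sigma_y,\sigma_z\}$, which is easily checked to be maximally anticommuting in $\mathcal{P}_1/K$, and iterating
\[
\mathcal{G}_{k+1} = (\sigma_x \otimes \mathcal{G}_k) \cup \{\sigma_y \otimes I\} \cup \{\sigma_z \otimes I\},
\]
produces a maximally anticommuting set $\mathcal{G}_n \subseteq \mathcal{P}_n/K$ with $|\mathcal{G}_n| = 2n+1$. (Maximality of each $\mathcal{G}_{k+1}$ given maximality of $\mathcal{G}_k$ can be verified directly, or by noting that the new elements anticommute pairwise and multiply to $I$, so Corollary~\ref{lem-anticommuting-prod} applies.) This supplies the $\ell = 2n+1$ case of the corollary.

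Second, I would feed $\mathcal{G}_n$ into Corollary~\ref{cor-product-three-replace} with $m = n$. That corollary produces a sequence $\mathcal{H}_n = \mathcal{G}_n, \mathcal{H}_{n-1}, \ldots, \mathcal{H}_0 = \{I\}$ of maximally anticommuting sets, where each $\mathcal{H}_{k-1}$ is obtained from $\mathcal{H}_k$ by removing three elements and adding back their product, so that $|\mathcal{H}_k| = 2k+1$. Thus as $k$ ranges over $0,1,\ldots,n$ we obtain maximally anticommuting sets of every odd cardinality from $1$ up to $2n+1$, which is exactly the claim.

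There is no real obstacle here: the two building blocks have already been established, and the only work is to observe that their sizes interlock correctly. The one minor check is that the initial set $\mathcal{G}_1$, and more generally the hierarchical construction, is indeed maximally anticommuting at each level; but this follows immediately from Corollary~\ref{lem-anticommuting-prod} since the elements anticommute pairwise and their product is $I$.
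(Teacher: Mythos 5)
Your proposal is correct and follows the same route as the paper: take the maximally anticommuting set of size $2n+1$ supplied by the hierarchical construction at the start of Section~\ref{Sec:AnticommutingSize}, then apply Corollary~\ref{cor-product-three-replace} to obtain maximally anticommuting sets of every odd cardinality down to $1$. The extra verification you sketch (maximality of each $\mathcal{G}_k$ via Corollary~\ref{lem-anticommuting-prod}) is a welcome detail but does not change the argument.
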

\begin{proof}
  We know from the example at the beginning of this section that
  maximally anticommuting subsets of size $2n+1$ exist in
  $\mathcal{P}_n/K$, so take any such set $\mathcal{G}$. The result
  then follows by applying the construction in
  Corollary~\ref{cor-product-three-replace}.
\end{proof}

\subsection{Anticommuting sets of maximum size}\label{Sec:AnticommutingMaximum}

In the next theorem, we clarify the structure of maximally
anticommuting subsets of $\mathcal{P}_n / K$ that attain the maximum
size.

\begin{theorem}\label{thm-maximum-anticommuting}
Given a maximally anticommuting set $\mathcal{G} \subseteq \mathcal{P}_n/K$ of size $2n+1$, with decomposition~\eqref{Eq:DecomposeXYZ}. Then
\begin{enumerate}[(a)]
\setmargins
\item $\prod (\mathcal{C}_i\cup\mathcal{C}_{\ell}) = I$ for $\ell\in\{x,y,z\}$.
\item $\mathcal{C}_i\cup\mathcal{C}_{\ell}$ is a maximally anticommuting set for $\ell\in\{x,y,z\}$.
\item Sets $\mathcal{C}_x$, $\mathcal{C}_y$, and $\mathcal{C}_z$ are non-empty.
\item $\prod \mathcal{C}_i = \prod \mathcal{C}_x = \prod \mathcal{C}_y = \prod \mathcal{C}_z$. Moreover, $\prod \mathcal{C}_x = \prod \mathcal{C}_y = \prod \mathcal{C}_z = I$ iff $\mathcal{C}_i = \emptyset$. 
\item $\prod\mathcal{C}_{\ell}\prod\mathcal{C}_{m} = I$ for all $\ell,m\in\{i,x,y,z\}$.
\end{enumerate}

\end{theorem}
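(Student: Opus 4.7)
The plan is to leverage that $\mathcal{G}$, of maximum size $2n+1$, generates the full group $\mathcal{P}_n/K$, and then to extract all six pairwise product relations by analyzing how each of $\sigma_x\otimes I$, $\sigma_y\otimes I$, and $\sigma_z\otimes I$ is written as a product of elements of $\mathcal{G}$. Write $A:=\prod\mathcal{C}_i$, $B:=\prod\mathcal{C}_x$, $C:=\prod\mathcal{C}_y$, $D:=\prod\mathcal{C}_z$ in $\mathcal{P}_{n-1}/K$.

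First, Corollary~\ref{lem-anticommuting-prod} gives $\prod\mathcal{G}=I$, and Theorem~\ref{Thm:AnticommutingBasics}(\ref{Thm:AnticommutingBasics-minimal-generating-prod-I}) then implies that $\mathcal{G}$ generates a subgroup of order $2^{2n}=|\mathcal{P}_n/K|$, hence $\langle\mathcal{G}\rangle=\mathcal{P}_n/K$ and the only non-trivial relation among the elements of $\mathcal{G}$ is $\prod\mathcal{G}=I$. In particular every element of $\mathcal{P}_n/K$ has exactly two representations as $\prod\mathcal{H}$ for $\mathcal{H}\subseteq\mathcal{G}$, namely $\mathcal{H}$ and its complement $\mathcal{G}\setminus\mathcal{H}$.

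Fix $\ell=x$: because $\sigma_x\otimes I$ commutes with $\sigma_a\otimes Q$ iff $\sigma_x$ commutes with $\sigma_a$, it commutes with every element of $(\sigma_i\otimes\mathcal{C}_i)\cup(\sigma_x\otimes\mathcal{C}_x)$ and anticommutes with every element of $(\sigma_y\otimes\mathcal{C}_y)\cup(\sigma_z\otimes\mathcal{C}_z)$. On the other hand, since $\mathcal{G}$ is pairwise anticommuting, for any $\mathcal{H}\subseteq\mathcal{G}$ and $Q\in\mathcal{G}$ the element $\prod\mathcal{H}$ commutes with $Q$ precisely when $|\mathcal{H}|$ is even and $Q\notin\mathcal{H}$, or $|\mathcal{H}|$ is odd and $Q\in\mathcal{H}$. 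Matching these two descriptions pins down $\mathcal{H}$ (which exists since $\mathcal{G}$ generates $\mathcal{P}_n/K$) as one of the two complementary candidates $\mathcal{H}_1:=(\sigma_i\otimes\mathcal{C}_i)\cup(\sigma_x\otimes\mathcal{C}_x)$ and $\mathcal{H}_2:=(\sigma_y\otimes\mathcal{C}_y)\cup(\sigma_z\otimes\mathcal{C}_z)$; since complements in $\mathcal{G}$ have equal products, both must yield $\sigma_x\otimes I$. Expanding in tensor form gives
\[
\prod\mathcal{H}_1=\sigma_x^{|\mathcal{C}_x|}\otimes AB,\qquad \prod\mathcal{H}_2=\sigma_y^{|\mathcal{C}_y|}\sigma_z^{|\mathcal{C}_z|}\otimes CD,
\]
and equating each to $\sigma_x\otimes I$ forces $|\mathcal{C}_x|$, $|\mathcal{C}_y|$, $|\mathcal{C}_z|$ all odd (so that $\sigma_y\sigma_z=\sigma_x$) and $AB=CD=I$.

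Repeating the argument for $\sigma_y\otimes I$ and $\sigma_z\otimes I$ yields $AC=BD=I$ and $AD=BC=I$, so $A=B=C=D$ and all pairwise products of $\{A,B,C,D\}$ equal $I$, proving (e) and the first half of (d). Part (c) is immediate since each $|\mathcal{C}_\ell|$ for $\ell\in\{x,y,z\}$ is odd and hence positive; (a) follows from $\prod(\mathcal{C}_i\cup\mathcal{C}_\ell)=A\cdot\prod\mathcal{C}_\ell=A^2=I$; and (b) follows from (a) by Corollary~\ref{lem-anticommuting-prod} (using that $\mathcal{C}_i\cup\mathcal{C}_\ell$ is anticommuting by Theorem~\ref{thm-anticommuting-structure}(i)). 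For the remaining half of (d), observe that $|\mathcal{C}_i|=(2n+1)-|\mathcal{C}_x|-|\mathcal{C}_y|-|\mathcal{C}_z|$ is even, so if $\mathcal{C}_i\neq\emptyset$ it is an anticommuting set of even positive size and hence not maximally anticommuting in $\mathcal{P}_{n-1}/K$, giving $A\neq I$ by Corollary~\ref{lem-anticommuting-prod}. The main subtlety lies in recognizing that \emph{both} complementary candidates $\mathcal{H}_1$ and $\mathcal{H}_2$ must simultaneously represent $\sigma_\ell\otimes I$: this is what fixes the three odd parities and an identity like $AB=CD=I$ from a single choice of $\ell$, and the symmetry over $\ell\in\{x,y,z\}$ then immediately yields all pairwise identities.
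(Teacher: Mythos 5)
Your proof is correct, but it takes a genuinely different route from the paper's. The paper first establishes part (a) directly: it considers $\mathcal{T}=\mathcal{C}_i\cup\mathcal{C}_u$, handles the case $\vert\mathcal{T}\vert=2n-1$ by maximality, and otherwise assumes $\prod\mathcal{T}\neq I$, extends $\mathcal{T}$ to a minimal generating set $\mathcal{G}'$ of $\mathcal{P}_{n-1}/K$ omitting three elements of $\mathcal{C}_v\cup\mathcal{C}_w$, and derives a contradiction from Lemma~\ref{lem-pattern-generators} because two omitted elements of the same $\mathcal{C}_\ell$ would share a commutativity map with respect to $\mathcal{G}'$; parts (b)--(e) are then deduced from (a). You instead work globally in $\mathcal{P}_n/K$: since $\prod\mathcal{G}=I$ and $\langle\mathcal{G}\rangle=\mathcal{P}_n/K$ (order $2^{2n}$ by Theorem~\ref{Thm:AnticommutingBasics}(f)), every element has exactly two subset representations, complementary in $\mathcal{G}$; matching the commutativity pattern of $\sigma_\ell\otimes I$ against the pattern of $\prod\mathcal{H}$ for an anticommuting $\mathcal{G}$ forces both complementary halves of the decomposition to multiply to $\sigma_\ell\otimes I$, and reading off the first tensor factor yields all three odd parities and all pairwise identities $\prod\mathcal{C}_a\prod\mathcal{C}_b=I$ in one stroke. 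This is arguably cleaner: it derives (d) and (e) first and obtains (a)--(c) as immediate consequences, avoiding the paper's case analysis and the auxiliary generating-set construction, at the cost of relying on the single-relation structure of $\mathcal{G}$ as a generating set. Two small points you should make explicit: in (a) you need $\mathcal{C}_i\cap\mathcal{C}_\ell=\emptyset$ (Theorem~\ref{thm-anticommuting-structure}(iii)) to write $\prod(\mathcal{C}_i\cup\mathcal{C}_\ell)=\prod\mathcal{C}_i\cdot\prod\mathcal{C}_\ell$, and in the second half of (d) the direction ``$\mathcal{C}_i=\emptyset$ implies the products equal $I$'' should be stated, though it is immediate from $\prod\mathcal{C}_x=\prod\mathcal{C}_i=\prod\emptyset=I$.
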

\begin{proof}
\textit{(a)} Let $(u,v,w)$ be an arbitrary permutation of $(x,y,z)$.
By Theorem~\ref{thm-anticommuting-structure}(i), the set
$\mathcal{T} = \mathcal{C}_i\cup\mathcal{C}_u$ is anticommuting in
$P_{n-1}/K$ and we therefore have $\vert\mathcal{T}\vert \leq
2n-1$.
If this holds with equality we have an anticommuting set of maximum
size, which implies $\prod\mathcal{T} = I$. Otherwise we have
$\vert\mathcal{C}_v\vert+\vert\mathcal{C}_w\vert \geq 3$.  Suppose
that $\prod\mathcal{T} \neq I$, then $\mathcal{T}$ forms a minimal
generator for a subgroup of $P_{n-1}/K$. Since $\mathcal{G}$
generates $P_n/K$, it follows that we can find
$\mathcal{D} \subseteq (\mathcal{C}_v\cup\mathcal{C}_w)$ with
$\vert\mathcal{D}\vert=3$, such that
$\mathcal{G}' := \mathcal{T} \cup
((\mathcal{C}_v\cup\mathcal{C}_w)\setminus\mathcal{D})$
generates $\mathcal{P}_{n-1}/K$. Because $\vert\mathcal{D}\vert =3$,
it follows that $\vert\mathcal{D}\cap\mathcal{C}_{\ell}\vert \geq 2$
for exactly one $\ell\in\{v,w\}$, and we can therefore find distinct
elements $P,Q \in (\mathcal{D}\cap\mathcal{C}_\ell)$. Both elements
are generated by $\mathcal{G}'$ and have the same commutativity map
with the elements in $\mathcal{G}'$. From
Lemma~\ref{lem-pattern-generators} we know that each commutativity
map with respect to $\mathcal{G}'$ occurs exactly once, which give a
contradiction. It must therefore hold that $\prod\mathcal{T}=I$, and
the result follows by noting that the choice of $u$ was arbitrary.

\textit{(b)} This follows from (a) because
$\mathcal{C}_i\cup\mathcal{C}_{\ell}$ is an anticommuting set by
Theorem \ref{thm-anticommuting-structure}(i).

\textit{(c)} Consider the
decomposition~\eqref{Eq:DecomposeUVW}. Then from
Theorem~\ref{thm-anticommuting-structure}(ii) we have that
$\mathcal{C}_u$ and $\mathcal{C}_v$ must be non-empty, otherwise
$\vert\mathcal{G}\vert < 2n$. By maximality of $\mathcal{G}$ it
follows from Corollary~\ref{lem-anticommuting-prod} that
$\prod\mathcal{G} = I$. Now suppose $\mathcal{C}_w = \emptyset$,
then
$\prod(\mathcal{C}_i\cup\mathcal{C}_u)\cdot\prod\mathcal{C}_v =
I$.
Using the fact $\prod(\mathcal{C}_i\cup\mathcal{C}_u) = I$ for (a),
we conclude that $\prod\mathcal{C}_v = I$, which means that
$\mathcal{C}_v$ is maximally anticommuting. From maximality of
$\mathcal{C}_i\cup\mathcal{C}_v$, we must therefore have that
$\mathcal{C}_i = \emptyset$. However,
Theorem~\ref{thm-anticommuting-structure}(ii) shows that maximally
anticommuting sets cannot have
$\mathcal{C}_i=\mathcal{C}_w=\emptyset$. This contradiction shows
that $\mathcal{C}_w$ cannot be empty, and consequently all three
sets $\mathcal{C}_x$, $\mathcal{C}_y$, and $\mathcal{C}_z$ are
non-empty.

\textit{(d)} If $\mathcal{C}_i \neq \emptyset$, then combining with
(c) we have that all four sets
$\mathcal{C}_i, \mathcal{C}_x, \mathcal{C}_y, \mathcal{C}_z$ are
non-empty. The result then again follows from (a) because it implies
that $\prod \mathcal{C}_i = \prod \mathcal{C}_\ell$ for
$\ell\in\{x,y,z\}$. If $\mathcal{C}_i = \emptyset$, the result is
true from (a) by substituting $\mathcal{C}_i =
\emptyset$.
Conversely, if $\prod\mathcal{C}_{\ell}=I$ for any
$\ell\in\{x,y,z\}$, then $\mathcal{C}_{\ell}$ is maximally
anticommuting, and because $\mathcal{C}_i\cup\mathcal{C}_{\ell}$ is
also maximally anticommuting by (b),
Theorem~\ref{thm-anticommuting-structure}(iii) then implies that
$\mathcal{C}_i=\emptyset$.

\textit{(e)} This follows from (d).
\end{proof}

\subsection{Extending an anticommuting set to its maximum size}\label{Sec:AnticommutingGeneration}

Given an anticommuting set $\mathcal{S} \subseteq \mathcal{P}_n / K$,
an interesting question is whether it can be extended to the maximum
possible size of $2n + 1$. By definition this cannot be done if
$\mathcal{S}$ is maximally anticommuting, or when
$|\mathcal{S}| = 2n+1$, in which case there is nothing to do. So the
interesting case is when $\prod \mathcal{S} \neq I$. We show that this
is possible in the following lemma.

\begin{lemma}
\label{lem-extend-anticommuting-sets}
Let $\mathcal{S} \subseteq \mathcal{P}_n / K$ be an anticommuting set
that is not maximally anticommuting. Then $\mathcal{S}$ can be
extended to a maximally anticommuting set of cardinality $2n+1$.
\end{lemma}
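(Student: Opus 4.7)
The plan is to extend $\mathcal{S}$ one element at a time, maintaining the invariant $\prod\mathcal{S}\neq I$, until $|\mathcal{S}|$ reaches $2n$, and then to add one last element. By Corollary~\ref{lem-anticommuting-prod} the invariant is equivalent to $\mathcal{S}$ not being maximally anticommuting, so it holds at the start by hypothesis. (The degenerate case $\mathcal{S}=\emptyset$ can be handled separately by invoking Corollary~\ref{cor-max-anticommuting-paulis-exist} to produce a maximally anticommuting set of the required size directly.) In turn, Theorem~\ref{Thm:AnticommutingBasics}(\ref{Thm:AnticommutingBasics-minimal-generating}) ensures that whenever the invariant holds, $\mathcal{S}$ is a minimal generating set of the subgroup $\langle\mathcal{S}\rangle$ of order $2^{|\mathcal{S}|}$, and hence $|\mathcal{S}|\leq 2n$ by Lemma~\ref{lem-generator-abelian-pauli1}(c).

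The engine of the proof is Lemma~\ref{lem-pattern-generators}: with $k=|\mathcal{S}|$, the all-anticommuting commutativity map with respect to the minimal generating set $\mathcal{S}$ is realized by exactly $4^n/2^k$ elements of $\mathcal{P}_n/K$; call this pool $\mathcal{A}$. Any $P\in\mathcal{A}$ yields an anticommuting extension $\mathcal{S}\cup\{P\}$, and to preserve the invariant I need $\prod(\mathcal{S}\cup\{P\})=P\cdot\prod\mathcal{S}\neq I$, which, because every element of $\mathcal{P}_n/K$ is self-inverse, reduces to the single condition $P\neq\prod\mathcal{S}$.

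For $k\leq 2n-1$ one has $|\mathcal{A}|\geq 4^n/2^{2n-1}=2$, so I can always select $P\in\mathcal{A}\setminus\{\prod\mathcal{S}\}$ and grow $|\mathcal{S}|$ by one with the invariant preserved. Iterating until $k=2n$ leaves $|\mathcal{A}|=1$; adding this unique remaining element yields a set of cardinality $2n+1$, which is anticommuting and hence maximally anticommuting by Lemma~\ref{lem-max-anticommuting-paulis}. The one delicate point is controlling the parity of $k$: when $k$ is even, Theorem~\ref{Thm:AnticommutingBasics}(\ref{Thm:AnticommutingBasics-Q}) places $\prod\mathcal{S}$ inside $\mathcal{A}$, so the bound $|\mathcal{A}|\geq 2$ is exactly what guarantees an alternative choice; when $k$ is odd, $\prod\mathcal{S}$ commutes with $\mathcal{S}$ and therefore lies outside $\mathcal{A}$ entirely, so any element of $\mathcal{A}$ works. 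Both parities are absorbed by the single counting estimate in Lemma~\ref{lem-pattern-generators}, which is really the substance of the argument.
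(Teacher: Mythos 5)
Your proof is correct, but it reaches the key existence step by a genuinely different route than the paper. Both arguments share the outer structure: grow $\mathcal{S}$ one element at a time while keeping it non-maximal (equivalently, by Corollary~\ref{lem-anticommuting-prod}, keeping $\prod\mathcal{S}\neq I$) until the size reaches $2n$, then append the product. Where you differ is in producing the next element. The paper splits on the parity of $|\mathcal{S}|$: for odd size it appeals directly to non-maximality, and for even size it gives an explicit construction --- pick $Q\notin\langle\mathcal{S}\rangle$, partition $\mathcal{S}$ into the parts commuting and anticommuting with $Q$, and multiply $Q$ by the product of the appropriate part to flip the offending signs, with $R\prod\mathcal{S}\neq I$ following precisely from $Q\notin\langle\mathcal{S}\rangle$. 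You instead observe that the invariant makes $\mathcal{S}$ a minimal generating set of size $k$ (Theorem~\ref{Thm:AnticommutingBasics}(\ref{Thm:AnticommutingBasics-minimal-generating})), so Lemma~\ref{lem-pattern-generators} puts exactly $4^n/2^k\geq 2$ elements in the all-anticommuting pool whenever $k\leq 2n-1$, and at least one of them differs from $\prod\mathcal{S}$. Your counting argument is shorter and absorbs both parities at once; the paper's explicit recipe is what powers Algorithm~\ref{alg-aec}, and your pool $\mathcal{A}$ is essentially what Theorem~\ref{thm-anticommuting-commutativity-pattern} later dissects coset by coset. Two minor points to make explicit: (i) $I\notin\mathcal{S}$ throughout (it cannot lie in an anticommuting set of size at least $2$, and $\mathcal{S}=\{I\}$ is excluded as maximal), so the hypothesis $\mathcal{G}\neq\{I\}$ of Lemma~\ref{lem-pattern-generators} is satisfied; (ii) any $P$ in the pool is automatically outside $\mathcal{S}$, since no element anticommutes with itself. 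With those remarks the argument is complete.
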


\begin{proof}
As $\mathcal{S}$ is not maximally anticommuting,
$\mathcal{S} \neq \{I\}$ and $|\mathcal{S}| < 2n+1$. We note that
the lemma is proved if we can show that if $|\mathcal{S}| < 2n$,
then $\mathcal{S}$ can be extended to an anticommuting set of
cardinality $|\mathcal{S}| + 1$ that is not maximally
anticommuting. Repeating this process we can extend $\mathcal{S}$ to
a set of cardinality $2n$, after which the construction in Theorem
\ref{Thm:AnticommutingBasics}(\ref{Thm:AnticommutingBasics-Q}) gives
the desired result.

If $|\mathcal{S}|$ is odd, there exists an element
$P \notin \mathcal{S}$, such that $\mathcal{S} \cup \{P\}$ is
anticommuting, and because $|\mathcal{S} \cup \{P\}|$ is even, by
Theorem \ref{Thm:AnticommutingBasics}
(\ref{Thm:AnticommutingBasics-n-odd}) $\mathcal{S} \cup \{P\}$ is
not maximally anticommuting. Now assume that $|\mathcal{S}|$ is
even, and $|\mathcal{S}| < 2n$. Pick an element
$Q \notin \langle \mathcal{S} \rangle$, and partition
$\mathcal{S} = \mathcal{C} \sqcup \mathcal{A}$ (one of the sets
possibly empty), such that $Q$ commutes with all elements in
$\mathcal{C}$, and $Q$ anticommutes with all elements in
$\mathcal{A}$. Choosing $R$ as
\begin{equation}
R = 
\begin{cases}
Q (\prod \mathcal{A}) & \;\; \text{if } |\mathcal{C}| \text{ is odd}, \\
Q (\prod \mathcal{C}) & \;\; \text{if } |\mathcal{C}| \text{ is even}, \; \mathcal{C} \neq \emptyset, \\
Q & \;\;  \text{if } \mathcal{C} = \emptyset, \\
\end{cases}
\end{equation}
we find that $R$ anticommutes with all elements in $\mathcal{S}$, and
so $\mathcal{S} \cup \{R\}$ is anticommuting. Moreover
$R (\prod \mathcal{S}) \neq I$, because otherwise we would have
$Q \in \langle \mathcal{S} \rangle$. This implies that
$\mathcal{S} \cup \{R\}$ is not maximally anticommuting, which
finishes the proof.
\end{proof}

Lemma \ref{lem-extend-anticommuting-sets} raises some interesting questions:
\begin{enumerate}[\itshape(1)]
\item \textit{Given an anticommuting set
    $\mathcal{S} \subseteq \mathcal{P}_n / K$ that is not maximally
    anticommuting, in how many distinct ways can we extend it to a
    bigger size?}

\item \textit{Is there an efficient algorithm to perform the
    extension?}
\end{enumerate}

In order to answer the above questions, we need to develop a better
understanding of the cosets of $\langle \mathcal{S} \rangle$. The
first question is answered in Section
\ref{Sec:anticommuting-counting}. For the second question, it turns
out that there exists an efficient randomized algorithm to extend
$\mathcal{S}$ to a bigger anticommuting set. We start by
characterizing a simple function that is important in the proof of the
following theorem.

\begin{lemma}
\label{lem-set-function-prop}
Let $\mathcal{S}$ be a set with $|\mathcal{S}| = m \geq 1$, and let $v : \mathcal{S} \rightarrow \{1,-1\}$ be any arbitrary map. Define a map $F_v : 2^\mathcal{S} \times \mathcal{S} \rightarrow \{1,-1\}$ by
\begin{equation}
\label{eq:set-function}
F_v (\mathcal{T}, x) = 
\begin{cases}
v(x)(-1)^{|\mathcal{T}| - 1} & \;\; \text{if } x \in \mathcal{T}, \\
v(x)(-1)^{|\mathcal{T}|} & \;\; \text{if } x \notin \mathcal{T},
\end{cases}
\end{equation}

and also define
\begin{equation}
f(v) := \left( \sum \limits_{x \in \mathcal{S}} (1 + v(x)) / 2 \right) = \sum \limits_{x \in \mathcal{S}} \mathbb{1}_{\{v(x) = 1\}}.
\label{eq:commuting-count-notation}
\end{equation}

If $m$ is even, then for every map
$q: \mathcal{S} \rightarrow \{1,-1\}$, there exists a unique subset
$\mathcal{U} \subseteq S$ (possibly empty), such that
$F_v (\mathcal{U}, \cdot) = q$. If $m$ is odd, then we have the
following cases:

\begin{enumerate}[(a)]
\setmargins
\item If $q: \mathcal{S} \rightarrow \{1,-1\}$ is a map such that
  $f(q) \equiv f(v) \mod 2$, then there exist exactly two subsets
  $\mathcal{V}, \; \mathcal{S} \setminus \mathcal{V} \in
  2^{\mathcal{S}}$
  (possibly with one empty), such that
  $F_v (\mathcal{V}, \cdot) = F_v (\mathcal{S} \setminus \mathcal{V},
  \cdot) = q$.

\item If $q: \mathcal{S} \rightarrow \{1,-1\}$ is a map such that
  $f(q) \not\equiv f(v) \mod 2$, then there does not exist any subset
  $\mathcal{V}$ of $\mathcal{S}$ such that
  $F_v (\mathcal{V}, \cdot) = q$.
\end{enumerate}
\end{lemma}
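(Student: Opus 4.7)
The plan is to reduce the defining equation $F_v(\mathcal{T},x) = q(x)$ to a pointwise membership condition on $\mathcal{T}$. Since $v(x), q(x) \in \{1,-1\}$, dividing~\eqref{eq:set-function} by $v(x)$ gives the equivalent requirement $(-1)^{|\mathcal{T}|-\mathbb{1}_{x \in \mathcal{T}}} = v(x)q(x)$, so once the parity of $|\mathcal{T}|$ is fixed, membership of each $x$ is forced. Concretely, if $|\mathcal{T}|$ is even then $\mathcal{T}$ must equal $\mathcal{T}^{(0)} := \{x \in \mathcal{S} : q(x) \neq v(x)\}$, and if $|\mathcal{T}|$ is odd then $\mathcal{T}$ must equal $\mathcal{T}^{(1)} := \{x \in \mathcal{S} : q(x) = v(x)\}$. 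These are the only two candidates for $\mathcal{U}$, they partition $\mathcal{S}$, and each is a valid solution exactly when its own cardinality has the parity that was assumed to derive it.

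The rest of the lemma reduces to a parity calculation. Writing $N_0 = |\mathcal{T}^{(1)}|$, $N_1 = |\mathcal{T}^{(0)}|$, and $a = |\{x : q(x) = v(x) = 1\}|$, a routine partition of $\mathcal{S}$ by the value of the pair $(q(x), v(x))$ yields $N_1 = f(q) + f(v) - 2a$, hence $N_1 \equiv f(q) + f(v) \pmod 2$; together with $N_0 + N_1 = m$, this pins down the parities of both $N_0$ and $N_1$ in terms of $m$, $f(q)$, and $f(v)$.

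Assembling the cases is then mechanical. If $m$ is even, $N_0$ and $N_1$ share the same parity, so exactly one of the required conditions ``$N_1$ even'' or ``$N_0$ odd'' holds, giving a unique $\mathcal{U}$. If $m$ is odd, $N_0$ and $N_1$ have opposite parities; when $f(q) \equiv f(v) \pmod 2$ we get $N_1$ even and $N_0$ odd, so both $\mathcal{T}^{(0)}$ and its complement $\mathcal{T}^{(1)} = \mathcal{S} \setminus \mathcal{T}^{(0)}$ are valid, producing the two complementary solutions claimed in (a); when $f(q) \not\equiv f(v) \pmod 2$, both parity constraints fail and no subset works, yielding (b). I do not foresee any genuine obstacle here: the essential step is the observation that fixing the parity of $|\mathcal{T}|$ determines $\mathcal{T}$ uniquely, after which everything is bookkeeping of two integers modulo $2$.
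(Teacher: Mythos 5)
Your proof is correct, and it takes a genuinely different route from the paper's. You observe that $F_v(\mathcal{T},x) = v(x)(-1)^{|\mathcal{T}|-\mathbb{1}_{x\in\mathcal{T}}}$, so that once the parity of $|\mathcal{T}|$ is fixed the membership of every $x$ is forced pointwise: the only candidates are $\mathcal{T}^{(0)}=\{x: q(x)\neq v(x)\}$ (valid iff its cardinality is even) and its complement $\mathcal{T}^{(1)}$ (valid iff its cardinality is odd), and the rest is parity bookkeeping via $|\mathcal{T}^{(0)}|\equiv f(q)+f(v)\pmod 2$. The paper instead establishes the composition identity $F_w(\mathcal{B},\cdot)=F_v\bigl((\mathcal{A}\cup\mathcal{B})\setminus(\mathcal{A}\cap\mathcal{B}),\cdot\bigr)$ for $w=F_v(\mathcal{A},\cdot)$, constructs a solution by passing through the all-ones map as an intermediate, and then deduces uniqueness (resp.\ the exactly-two and nonexistence claims) by counting: $2^m$ subsets against $2^m$ maps in the even case, and $2^m$ subsets against $2^{m-1}$ parity-compatible maps hit in complementary pairs in the odd case. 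The two arguments ultimately identify the same solution sets --- the paper's symmetric difference $(\mathcal{T}\cup\mathcal{R})\setminus(\mathcal{T}\cap\mathcal{R})$ of $\{x:v(x)=1\}$ and $\{x:q(x)=1\}$ is exactly your $\mathcal{T}^{(0)}$ --- but your version is more elementary and self-contained: existence and uniqueness come out of the same local computation, with no global counting argument needed. What the paper's route buys is the reusable identity \eqref{eq:set-function-identity}, which makes the group structure (composition of updates corresponds to symmetric difference of subsets) explicit; your route buys directness and an explicit closed form for the solutions.
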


\begin{proof}
We need a fact that for arbitrary subsets
$\mathcal{A}, \mathcal{B} \subseteq \mathcal{S}$, the function in
\eqref{eq:set-function} satisfies
\begin{equation}
F_w (\mathcal{B}, \cdot) = F_v ((\mathcal{A} \cup \mathcal{B}) \setminus (\mathcal{A} \cap \mathcal{B}), \cdot), \text{ where } w = F_v (\mathcal{A}, \cdot).
\label{eq:set-function-identity}
\end{equation}
This is true because
\begin{equation}
F_w (\mathcal{B}, x) = 
\begin{cases}
v(x) (-1)^{|\mathcal{A}| + |\mathcal{B}|} & \;\; \text{ if } x \in (\mathcal{A} \cap \mathcal{B}) \sqcup (\mathcal{S} \setminus (\mathcal{A} \cup \mathcal{B})), \\
v(x) (-1)^{|\mathcal{A}| + |\mathcal{B}| - 1} & \;\; \text{ if } x \in (\mathcal{A} \setminus \mathcal{B}) \sqcup (\mathcal{B} \setminus \mathcal{A}),
\end{cases}
\end{equation}
from the definitions, and moreover
$|\mathcal{A}| + |\mathcal{B}| = |\mathcal{A} \setminus \mathcal{B}| +
|\mathcal{B} \setminus \mathcal{A}| + 2 |\mathcal{A} \cap \mathcal{B}|
= |(\mathcal{A} \cup \mathcal{B}) \setminus (\mathcal{A} \cap
\mathcal{B})| + 2 |\mathcal{A} \cap \mathcal{B}|$.

We first assume that $m$ is even. Let
$\mathcal{S} = \mathcal{S}_1 \sqcup \mathcal{S}_2$, such that
$v(x) = 1$ for all $x \in \mathcal{S}_1$, and $v(x) = -1$ for all
$x \in \mathcal{S}_2$. If we define
$\mathcal{T} \subseteq \mathcal{S}$ by
\begin{equation}
\mathcal{T} = 
\begin{cases}
\mathcal{S}_1 \;\; & \text{ if } |\mathcal{S}_1| \text{ is odd}, \\
\mathcal{S}_2 \;\; & \text{ if } |\mathcal{S}_1| \text{ is even},
\end{cases}
\end{equation}
then $F_v (\mathcal{T},x) = 1$ for all $x \in \mathcal{S}$. For any
map $q : \mathcal{S} \rightarrow \{1,-1\}$, let
$\mathcal{S} = \mathcal{T}_1 \sqcup \mathcal{T}_2$, such that
$q(x) = 1$ for all $x \in \mathcal{T}_1$, and $q(x) = -1$ for all
$x \in \mathcal{T}_2$. Then if we define
$\mathcal{R} \subseteq \mathcal{S}$ by
\begin{equation}
\mathcal{R} = 
\begin{cases}
\mathcal{T}_1 \;\; & \text{ if } |\mathcal{T}_1| \text{ is odd}, \\
\mathcal{T}_2 \;\; & \text{ if } |\mathcal{T}_1| \text{ is even},
\end{cases}
\end{equation}
we have that $F_{F_v (\mathcal{T},\cdot)} (\mathcal{R},\cdot) = q$,
and then \eqref{eq:set-function-identity} implies that
$q = F_v (\mathcal{U}, \cdot)$, where
$\mathcal{U} = (\mathcal{T} \cup \mathcal{R}) \setminus (\mathcal{T}
\cap \mathcal{R})$.
Uniqueness of $\mathcal{U}$ follows because there are exactly $2^m$
subsets of $\mathcal{S}$, and $2^m$ distinct maps
$r: \mathcal{S} \rightarrow \{1,-1\}$.

We now assume that $m$ is odd, and prove the two subcases. Suppose
$q: \mathcal{S} \rightarrow \{1,-1\}$ is a map such that
$f(q) \equiv f(v) \mod 2$, as stated in the lemma. If we define the
sets
\begin{equation}
\mathcal{T} = \{x \in \mathcal{S} : v(x) = 1\}, \; \mathcal{R} = \{x \in \mathcal{S} : q(x) = 1\}, \; \text{and} \; \mathcal{V} = (\mathcal{T} \cup \mathcal{R}) \setminus (\mathcal{T} \cap \mathcal{R}),
\end{equation}
then it follows that
$q = F_{F_v (\mathcal{T},\cdot)} (\mathcal{R},\cdot) = F_v
(\mathcal{V}, \cdot)$.
It is also easily checked that because $m$ is odd, we have
$F_v (\mathcal{V}, \cdot) = F_v (\mathcal{S} \setminus \mathcal{V},
\cdot)$.
The fact that $\mathcal{W} \subseteq \mathcal{S}$,
$\mathcal{W} \notin \{ \mathcal{V}, \mathcal{S} \setminus
\mathcal{V}\}$
implies $q \neq F_v (\mathcal{W}, \cdot)$, follows because there are
exactly $2^{m-1}$ distinct maps $r: \mathcal{S} \rightarrow \{1,-1\}$
with $f(r) \equiv f(v) \mod 2$. Since this exhausts all possible
subsets of $\mathcal{S}$, it also means that there does not exist any
subset $\mathcal{W} \subseteq \mathcal{S}$ with
$r = F_v (\mathcal{W}, \cdot)$, for every map
$r: \mathcal{S} \rightarrow \{1,-1\}$ such that
$f(r) \not\equiv f(v) \mod 2$. This finishes the proof.
\end{proof}

Given an anticommuting minimal generating set $\mathcal{G}$, we can
now prove the following theorem that completely characterizes the
commutativity maps on the cosets of $\langle\mathcal{G}\rangle$.

\begin{theorem}
\label{thm-anticommuting-commutativity-pattern}
Let $\mathcal{G}$ be an anticommuting minimal generating set with
$|\mathcal{G}| = m$. If $\mathcal{G} = \{I\}$, all elements of
$\mathcal{P}_n / K$ commute with $I$. Otherwise the commutativity maps
with respect to $\mathcal{G}$ of the elements in the cosets of
$\langle \mathcal{G} \rangle$, have the following structure:

\begin{enumerate}[(a)]
\setmargins
\item If $m$ is even, then in every coset of
  $\langle \mathcal{G} \rangle$, for every commutativity map
  $q : \mathcal{G} \rightarrow \{1,-1\}$ there exists exactly one
  element $P$, such that $\zet{P}{\mathcal{G}} = q$.

\item If $m$ is odd and $\mathcal{T}$ is a coset of
  $\langle \mathcal{G} \rangle$, then for all
  $Q_1,Q_2 \in \mathcal{T}$,
  $f(\zet{Q_1}{\mathcal{G}})\equiv f(\zet{Q_2}{\mathcal{G}}) \mod 2$,
  using the notation in~\eqref{eq:commuting-count-notation}. Moreover
  if $P \in \mathcal{T}$, then for every commutativity map
  $q : \mathcal{S} \rightarrow \{1,-1\}$ such that
  $f(q) \equiv f(\zet{P}{\mathcal{G}}) \mod 2$, there exist exactly
  two elements $Q_1, Q_2 \in \mathcal{T}$ with
  $Q_2 = Q_1 (\prod \mathcal{G})$, such that
  $\zet{Q_1}{\mathcal{G}} = \zet{Q_2}{\mathcal{G}} = q$; while for
  every commutativity map $q : \mathcal{G} \rightarrow \{1,-1\}$ such
  that $f(q) \not\equiv f(\zet{P}{\mathcal{G}}) \mod 2$,
  $\zet{Q}{\mathcal{G}} \neq q$ for all $Q \in \mathcal{T}$.

\item If $m$ is odd, the cosets of $\langle \mathcal{G} \rangle$ can
  be grouped into two disjoint sets $\mathcal{F}_0$ and
  $\mathcal{F}_1$, with
  $|\mathcal{F}_0| = |\mathcal{F}_1| = 2^{2n-m-1}$, such that for all
  $\mathcal{T}_0 \in \mathcal{F}_0$ and all $P_0\in\mathcal{T}_0$ it
  holds that $f(\zet{P_0}{\mathcal{G}}) \equiv 0 \mod 2$, while for
  all $\mathcal{T}_1 \in \mathcal{F}_1$ and all $P_1\in\mathcal{T}_1$
  it holds that $f(\zet{P_1}{\mathcal{G}}) \equiv 1 \mod 2$.
\end{enumerate}
\end{theorem}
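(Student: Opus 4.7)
The plan is to reduce the statement to Lemma~\ref{lem-set-function-prop}. The bridge is an identity relating $\zet{PQ}{\mathcal{G}}$ to $\zet{P}{\mathcal{G}}$ whenever $Q\in\langle\mathcal{G}\rangle$. Since commutativity is multiplicative, $\comm{PQ}{P_j}=\comm{P}{P_j}\comm{Q}{P_j}$, and because $\mathcal{G}$ is anticommuting while every $P_j$ self-commutes, a direct computation with $Q=\prod\mathcal{H}$ gives $\comm{Q}{P_j}=(-1)^{|\mathcal{H}|-1}$ if $P_j\in\mathcal{H}$ and $(-1)^{|\mathcal{H}|}$ otherwise. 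Thus $\zet{P\prod\mathcal{H}}{\mathcal{G}}(\cdot)=F_{\zet{P}{\mathcal{G}}}(\mathcal{H},\cdot)$ in the notation of that lemma. Because $\mathcal{G}$ is a minimal generating set with $\mathcal{G}\neq\{I\}$, Lemma~\ref{lem-generator-abelian-pauli1}(b) ensures that $\mathcal{H}\mapsto\prod\mathcal{H}$ is a bijection $2^{\mathcal{G}}\to\langle\mathcal{G}\rangle$, and in particular $\prod\mathcal{G}\neq I$; hence for any fixed $P\in\mathcal{T}$ the coset is enumerated without repetition as $\{P\prod\mathcal{H}:\mathcal{H}\subseteq\mathcal{G}\}$, with $|\mathcal{T}|=2^m$.

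\textbf{Parts (a) and (b).} For (a), with $m$ even, the even case of Lemma~\ref{lem-set-function-prop} gives, for each target map $q$, a unique $\mathcal{H}\subseteq\mathcal{G}$ with $F_{\zet{P}{\mathcal{G}}}(\mathcal{H},\cdot)=q$; composing with the bijection above produces a unique element of $\mathcal{T}$ realizing each of the $2^m$ commutativity maps. For (b), with $m$ odd, I first need to show that $f(\zet{\cdot}{\mathcal{G}})\bmod 2$ is constant on each coset. A short parity argument on $F_v(\mathcal{H},\cdot)$ suffices: exactly $|\mathcal{H}|$ signs flip when $|\mathcal{H}|$ is even and exactly $m-|\mathcal{H}|$ signs flip when $|\mathcal{H}|$ is odd, and both counts are even since $m$ is odd. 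The odd case of Lemma~\ref{lem-set-function-prop} then yields, for each admissible $q$, exactly the two subsets $\mathcal{V}$ and $\mathcal{G}\setminus\mathcal{V}$; the corresponding coset elements are $Q_1=P\prod\mathcal{V}$ and $Q_2=P\prod(\mathcal{G}\setminus\mathcal{V})$, and using $P_i^2=I$ for all $P_i\in\mathcal{G}$ gives $\prod(\mathcal{G}\setminus\mathcal{V})=(\prod\mathcal{V})(\prod\mathcal{G})$, so $Q_2=Q_1\prod\mathcal{G}$. Since $\prod\mathcal{G}\neq I$, the two elements are distinct, and the lemma also gives that no other element of $\mathcal{T}$ realizes $q$; inadmissible $q$ (wrong parity of $f$) are not realized at all.

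\textbf{Part (c) and main obstacle.} The total number of cosets is $4^n/2^m=2^{2n-m}$, and by (b) each lies entirely in $\mathcal{F}_0$ or entirely in $\mathcal{F}_1$. Of the $2^m$ commutativity maps $q$, exactly $2^{m-1}$ have $f(q)$ even; Lemma~\ref{lem-pattern-generators} says each such $q$ is the commutativity map of $4^n/2^m=2^{2n-m}$ elements of $\mathcal{P}_n/K$, yielding $2^{m-1}\cdot 2^{2n-m}=2^{2n-1}$ elements with $f$ even. Dividing by the coset size $2^m$ gives $|\mathcal{F}_0|=2^{2n-m-1}$, and $|\mathcal{F}_1|=2^{2n-m-1}$ follows by subtraction from the total. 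The main obstacle I anticipate is the verification that $f\bmod 2$ is coset-invariant in (b): once that parity bookkeeping is done and the identification $\zet{P\prod\mathcal{H}}{\mathcal{G}}=F_{\zet{P}{\mathcal{G}}}(\mathcal{H},\cdot)$ is justified, every part of the theorem follows by a direct appeal to Lemmas~\ref{lem-set-function-prop} and~\ref{lem-pattern-generators}.
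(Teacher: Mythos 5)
Your proposal is correct and follows essentially the same route as the paper: establish $\zet{P\prod\mathcal{H}}{\mathcal{G}}=F_{\zet{P}{\mathcal{G}}}(\mathcal{H},\cdot)$ via the bijection $2^{\mathcal{G}}\to\mathcal{T}$, invoke Lemma~\ref{lem-set-function-prop} for (a) and (b) together with the identity $P(\prod\mathcal{V})(\prod\mathcal{G})=P(\prod(\mathcal{G}\setminus\mathcal{V}))$, and count via Lemma~\ref{lem-pattern-generators} and the coset size $2^m$ for (c). Your explicit parity check that $f\bmod 2$ is coset-invariant is a minor elaboration of what the paper obtains directly from the odd case of Lemma~\ref{lem-set-function-prop}, but it is not a different argument.
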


\begin{proof}
The case $\mathcal{G} = \{I\}$ is obvious.

\begin{description}
\setmargins
\item{\textit{(a)}, \textit{(b)}} If $\mathcal{T}$ is a coset of
  $\langle \mathcal{G} \rangle$, then choosing any element
  $P \in \mathcal{T}$, we have
  $\mathcal{T} = P \ast \langle \mathcal{G} \rangle$. Because
  $\mathcal{G}$ is a minimal generating set, this induces a bijection
  $h : 2^{\mathcal{G}} \rightarrow \mathcal{T}$, defined by
  $h(\mathcal{U}) = P (\prod\mathcal{U})$. Given any element
  $Q \in \mathcal{T}$, we have $Q = P (\prod \mathcal{U})$ for some
  $\mathcal{U} \subseteq \mathcal{G}$. Moreover, the commutativity map
  $\zet{Q}{\mathcal{G}}$, can be expressed in terms of the
  commutativity map $\zet{P}{\mathcal{G}}$ as
\begin{equation}
\zet{Q}{\mathcal{G}}(x) = 
\begin{cases}
\zet{P}{\mathcal{G}}(x) (-1)^{|\mathcal{U}| - 1} & \;\; \text{ if } x \in \mathcal{U}, \\
\zet{P}{\mathcal{G}}(x) (-1)^{|\mathcal{U}|} & \;\; \text{ if } x \notin \mathcal{U},
\end{cases}
\end{equation}
for all $x \in \mathcal{G}$. The results then follow by applying Lemma
\ref{lem-set-function-prop}, with $v(x) =
\zet{P}{\mathcal{G}}(x)$.
For all commutativity maps $q$ that satisfy
$f(q)\equiv f(\zet{P}{\mathcal{G}}) \mod 2$, we can find the
corresponding sets using the constructions in Lemma
\ref{lem-set-function-prop}, and additionally for the odd case, by
noting that for any $\mathcal{R} \subseteq \mathcal{G}$,
$P(\prod \mathcal{R}) (\prod \mathcal{G}) = P (\prod (\mathcal{G}
\setminus \mathcal{R}))$.

\item{\textit{(c)}} Lemma \ref{lem-pattern-generators} guarantees that
  $\mathcal{P}_n / K = \mathcal{E} \sqcup \mathcal{O}$, with
  $|\mathcal{E}| = |\mathcal{O}| = 4^n / 2$, where
\[
\mathcal{E} = \{y \in \mathcal{P}_n / K : f(\zet{y}{\mathcal{G}})
\equiv 0 \mod 2 \}
\]
and
\[
\mathcal{O} = \{y \in \mathcal{P}_n / K : f(\zet{y}{\mathcal{G}}) \equiv 1 \mod 2 \}.
\]
As the cardinality of each coset is $2^m$, we get that
$|\mathcal{F}_0| = |\mathcal{F}_1| = (4^n/2) / 2^m = 2^{2n - m -
  1}$. The desired result then follows by using (b).
\end{description}\vspace*{-12pt}
\end{proof}

We now have all the necessary ingredients for an efficient randomized
algorithm to create a minimal generating set $\mathcal{G}$ that is
anticommuting and has a maximum cardinality of $2n$. The algorithm is
summarized in Algorithms~\ref{alg-egs} and~\ref{alg-aec}. We
initialize the set $\mathcal{T}$ with a given initial anticommuting
minimal generating set $\mathcal{G}$. As long as the size of
$\mathcal{T}$ is less than $2n$ we add feasible elements to it.  Once
this size has been reached, we can use
Theorem~\ref{Thm:AnticommutingBasics}(\ref{Thm:AnticommutingBasics-Q})
to get a maximally anticommuting set of size $2n+1$.

We now explain a single step of the process of adding one feasible
element to $\mathcal{T}$ --- the general case then follows by
iterating this step. We first draw an element $U$ from
$\mathcal{P}_n/K$ uniformly at random, and determine the coset
$\mathcal{S}$ of $\langle \mathcal{T} \rangle$ such that
$U \in \mathcal{S}$. We then use the results from
Theorem~\ref{thm-anticommuting-commutativity-pattern} to efficiently
find a feasible element in $\mathcal{S}$ that can be added to
$\mathcal{T}$, such that the new set is still anticommuting and a
minimal generating set.

When the size of the current set $\mathcal{T}$ is odd, such a feasible
element can be generated only if the number of elements in
$\mathcal{T}$ that commute with $U$ is even. In this case
Theorem~\ref{thm-anticommuting-commutativity-pattern} guarantees that
there are exactly two feasible elements in $\mathcal{S}$,
$\mathcal{S} \neq \mathcal{\langle \mathcal{T} \rangle}$ to choose
from, and so we choose the element that is cheaper to compute. When
the current size of $\mathcal{T}$ is even, we can always find exactly
one element in $\mathcal{S}$ that anticommutes with all elements in
$\mathcal{T}$ by
Theorem~\ref{thm-anticommuting-commutativity-pattern}. However, in
order for it to be a feasible element we must have that
$\mathcal{S}\neq \langle \mathcal{T} \rangle$. This restriction exists
to prevent the set from becoming maximal prematurely: if
$\mathcal{S} = \langle \mathcal{T} \rangle$, then the only element in
$\mathcal{S}$ that anticommutes with $\mathcal{T}$ is
$\prod\mathcal{T}$. If we succeed in finding a new element we add it
to $\mathcal{T}$, otherwise we simply redraw a new random element from
$\mathcal{P}_n/K$ and repeat the process until a feasible element is
found.
%
The complexity of determining set $\mathcal{C}$ in
Algorithm~\ref{alg-egs} is $\mathcal{O}(n\vert\mathcal{T}\vert)$, and
matches the worst-case complexity to evaluate the subsequent products
$\prod\mathcal{C}$ and $\prod(\mathcal{T}\setminus\mathcal{C})$.  The
computational complexity of Algorithm~\ref{alg-egs} is therefore
$\mathcal{O}(n\vert\mathcal{T}\vert)$.  Randomly sampling an element
from $\mathcal{P}_n/K$ takes $\mathcal{O}(n)$ time, and the element is
accepted with probability at least $1/2$. Therefore, if
$2n - \vert\mathcal{G}\vert$ is $\mathcal{O}(n)$, then the expected
runtime of Algorithm~\ref{alg-egs} is $\mathcal{O}(n^3)$.

\begin{algorithm}[t!]
\caption{Extend anticommuting minimal generating set to cardinality $2n$}\label{alg-egs}

\begin{algorithmic}[1]

    \Procedure{Extend\_Generating\_Set}{$\mathcal{G}$} \Comment{$\mathcal{G}$ is an anticommuting minimal generating set}

        \State Set $\mathcal{T} \leftarrow \mathcal{G}$, $P \leftarrow \prod \mathcal{G}$, and  $k \leftarrow |\mathcal{G}|$
        
        \While{$k < 2n$}
            
            \State $U \leftarrow$ Sample uniformly from $\mathcal{P}_n / K$ 
            \State $V \leftarrow$ {\small{ANTICOMMUTING\_ELEMENT\_COSET}}($\mathcal{T}, U$) 
            \If{(($k$ even and $V \neq P$) or ($k$ odd and $V \neq 0$))} \Comment{Acceptance criteria}
                \State $\mathcal{T} \leftarrow \mathcal{T} \cup \{V\}$, $P \leftarrow PV$
                \State $k \leftarrow k + 1$
            \EndIf
        \EndWhile
        
        \State \textbf{return} $\mathcal{T}$
    
    \EndProcedure

\end{algorithmic}
\end{algorithm}

\begin{algorithm}[t!]
\caption{Find anticommuting Pauli in coset}\label{alg-aec}

\begin{algorithmic}[1]

    \Procedure{Anticommuting\_Element\_Coset}{$\mathcal{T}, P$}
        
        \State Set $\mathcal{C} \leftarrow \{x \in \mathcal{T}: \zet{P}{\mathcal{T}}(x) = 1\}$ \Comment{Find elements in $\mathcal{T}$ that commute with $P$}
        \If{$\vert\mathcal{T}\vert$ odd and $\vert\mathcal{C}\vert$ odd}
            \State $U \leftarrow 0$ \Comment{Anticommuting element does not exist in coset}
        \ElsIf{($\vert\mathcal{T}\vert$ even and
          $\vert\mathcal{C}\vert$ even) or ($\vert\mathcal{T}\vert$
          odd and $|\mathcal{C}| \leq \lfloor |\mathcal{T}| / 2\rfloor$)}
            \State $U \leftarrow P(\prod \mathcal{C})$
        \Else
            \State $U \leftarrow P(\prod (\mathcal{T} \setminus \mathcal{C}))$
        \EndIf
        \State \textbf{return} $U$
\EndProcedure

\end{algorithmic}
\end{algorithm}

\section{Number of unique sets}\label{Sec:Counting}

In this section we consider the number of unique sets that are
maximally commuting or anticommuting, for which we derive explicit
expressions.

\subsection{Commuting sets}\label{Sec:commuting-counting}

The first lemma gives the number of ways a commuting minimal
generating set can be extended to a larger commuting minimal
generating set.

\begin{lemma}
\label{lem-counting-extensions-commuting}
Let $\mathcal{G} \subseteq \mathcal{P}_n / K$ be a commuting minimal
generating set, possibly empty, with $|\mathcal{G}| = m$. Consider the
extension to a larger commuting minimal generating set
$\mathcal{G}' \subseteq \mathcal{P}_n / K$, such that
$\mathcal{G} \subset \mathcal{G}'$, and $|\mathcal{G}'| = m'$ with
$m' \leq n$. If $\mathcal{G} = \{I\}$, no extensions are
possible. When $m' = 1$ and $m = 0$, there are, $4^n$ distinct ways to perform the
extension. Otherwise there are
$\left( \prod_{k=m}^{m'-1} (4^n / 2^k - 2^k) \right) / (m' - m)!$
distinct extensions.
\end{lemma}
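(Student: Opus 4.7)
The plan is to count extensions in two stages: first count \emph{ordered} extensions (i.e., sequences $P_{m+1},\ldots,P_{m'}$ added one element at a time), and then divide by $(m'-m)!$ to remove the ordering. The main tool for counting at each step is Lemma~\ref{lem-pattern-generators}, which tells us exactly how many elements of $\mathcal{P}_n/K$ commute with a given minimal generating set.

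I would first dispatch the edge cases. If $\mathcal{G}=\{I\}$, then any extension $\mathcal{G}'\supset\mathcal{G}$ would have $|\mathcal{G}'|\geq 2$ and contain $I$, contradicting Lemma~\ref{lem-generator-abelian-pauli1}(c); so no extensions exist. If $m=0$ and $m'=1$, we are choosing a single element $Q\in\mathcal{P}_n/K$ to form $\mathcal{G}'=\{Q\}$. Every such singleton is trivially a minimal generating set (including $\{I\}$), giving exactly $4^n$ extensions. This is the one case where the product formula undercounts, precisely because it forbids $Q=I$ at step $k=0$ even though $\{I\}$ is a permissible minimal generating set.

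For the main case, suppose we have already chosen an intermediate commuting minimal generating set $\mathcal{G}_k$ of cardinality $k$, with $m\leq k<m'$, and we wish to count the valid choices for the next element $P_{k+1}$. By Lemma~\ref{lem-generator-abelian-pauli1}(c), appending $P_{k+1}$ preserves minimality iff $P_{k+1}\notin\langle\mathcal{G}_k\rangle$, and by definition the set remains commuting iff $P_{k+1}$ commutes with all elements of $\mathcal{G}_k$. The commuting condition says $\zet{P_{k+1}}{\mathcal{G}_k}$ is the all-commuting map, and by Lemma~\ref{lem-pattern-generators} exactly $4^n/2^k$ elements of $\mathcal{P}_n/K$ satisfy this. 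All $2^k$ elements of $\langle\mathcal{G}_k\rangle$ satisfy the commuting condition (products of pairwise commuting generators commute with the generators), so they must be subtracted, yielding exactly $4^n/2^k-2^k$ admissible choices at step $k$. Note that when $m'\geq 2$ this also correctly rules out $P_{k+1}=I$ at every step, since $I\in\langle\mathcal{G}_k\rangle$. Multiplying over $k=m,\ldots,m'-1$ gives the number of ordered extensions.

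Finally, observe that any target minimal generating set $\mathcal{G}'$ of size $m'$ with $\mathcal{G}\subset\mathcal{G}'$ arises from exactly $(m'-m)!$ orderings of its new elements $\mathcal{G}'\setminus\mathcal{G}$, and each ordering is a valid sequence of extensions at every intermediate step (since any subset of a commuting minimal generating set is itself a commuting minimal generating set, so the intermediate sets $\mathcal{G}_k$ are always legitimate). Dividing the ordered count by $(m'-m)!$ gives the claimed formula. The main subtlety I expect is justifying that every intermediate $\mathcal{G}_k$ really is a commuting minimal generating set (so that the Lemma~\ref{lem-pattern-generators} count applies at every step), and carefully handling the boundary $m=0,\,m'=1$ separately from the product formula.
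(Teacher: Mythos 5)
Your proposal is correct and follows essentially the same route as the paper's own proof: count ordered extensions step by step, using Lemma~\ref{lem-pattern-generators} to get $4^n/2^k$ elements commuting with the current set, subtract the $2^k$ elements of $\langle\mathcal{G}_k\rangle$ to preserve minimality, and divide by $(m'-m)!$. Your explicit remarks on why every intermediate set is a legitimate commuting minimal generating set and on why the $m=0,\ m'=1$ case must be handled separately are slightly more careful than the paper's treatment, but the argument is the same.
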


\begin{proof}
If $\mathcal{G} = \{I\}$, then it cannot be extended to a minimal
generating set because of Lemma
\ref{lem-generator-abelian-pauli1}(b). For the case
$m' = 1, \; m = 0$, every singleton set is a commuting minimal
generating set, and so there are $4^n$ ways to extend $\mathcal{G}$.

For the rest of the proof we assume that $m' > 1$,
$\mathcal{G} \neq \{I\}$, and fix any arbitrary ordering of the
elements in $\mathcal{G}$. By Lemma
\ref{lem-generator-abelian-pauli1}(b), we then also have that
$I \notin \mathcal{G}$. First consider the case $m' = k + 1$, and
$m = k$. Lemma \ref{lem-pattern-generators} implies that the
cardinality of the set
$\mathcal{H} = \{P \in \mathcal{P}_n / K : \comm{P}{Q} = 1, \;
\forall \; Q \in \mathcal{G}\}$
is $4^n / 2^k$, and clearly
$\langle \mathcal{G} \rangle \subseteq \mathcal{H}$ with
$\vert\langle\mathcal{G}\rangle\vert = 2^k$. Thus the number of
distinct ways to extend $\mathcal{G}$ to $\mathcal{G'}$ is
$4^n / 2^k - 2^k$, because we can choose any element of
$\mathcal{H} \setminus \langle \mathcal{G} \rangle$.

Returning now to the case of a general $m'$, we can first order the
elements in $\mathcal{G}'$ such that the first $m$ elements are
always those of $\mathcal{G}$ in the fixed order. If we count all
the possible orderings of the remaining elements in $\mathcal{G}'$,
it follows from the previous paragraph and by noting that if
$\mathcal{G} = \emptyset$, then there are $4^n - 1$ ways to extend
$\mathcal{G}$ by one element without including $I$, that the number
of ways to extend $\mathcal{G}$ to $\mathcal{G}'$ is given by
$\prod_{k=m}^{m'-1} (4^n / 2^k - 2^k)$. Since there are exactly
$(m'-m)!$ permutations of the newly added elements, the number of
distinct extensions of $\mathcal{G}$ to $\mathcal{G}'$ is given by
$\left( \prod_{k=m}^{m'-1} (4^n / 2^k - 2^k) \right) / (m' - m)!$.
\end{proof}

The next lemma counts the number of commuting minimal generating sets
that generate the same commuting subgroup.

\begin{lemma}\label{lem-counting-commuting-generating-sets}
Let $\mathcal{S} \subseteq \mathcal{P}_n / K$, be a subgroup such
that all elements commute. By Lemma
\ref{lem-generator-abelian-pauli1}(b) and Theorem
\ref{Theorem:maximally-commuting-maximum}, $|\mathcal{S}| = 2^{m}$,
for $0 \leq m \leq n$. Then the number $N_m$ of distinct commuting
minimal generating sets $\mathcal{G}$ such that
$\langle \mathcal{G} \rangle = \mathcal{S}$ is given by
\begin{equation}
N_m = \frac{1}{m!} \prod_{k=0}^{m -1} (2^m - 2^k).
\end{equation}
\end{lemma}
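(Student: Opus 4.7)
The plan is to count the number of \emph{ordered} tuples $(P_1,\ldots,P_m)$ of elements of $\mathcal{S}$ that form a minimal generating set, and then divide by $m!$ at the end, since Lemma~\ref{lem-generator-abelian-pauli1}(b) guarantees that every such tuple consists of $m$ distinct elements (as $I \notin \mathcal{G}$ when $m\geq 1$), so each unordered set corresponds to exactly $m!$ orderings.

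The edge case $m=0$ gives $\mathcal{S}=\{I\}$, for which the unique minimal generating set is $\{I\}$ itself, matching $N_0=1/0!=1$ (empty product). For $m\geq 1$, I would construct the ordered tuples one element at a time. Pick $P_1$ to be any element of $\mathcal{S}\setminus\{I\}$, giving $2^m-1 = 2^m-2^0$ choices. Having chosen independent $P_1,\ldots,P_k$ with $k<m$, we have by Lemma~\ref{lem-generator-abelian-pauli1}(b) that $|\langle P_1,\ldots,P_k\rangle|=2^k$. Choose $P_{k+1}$ from $\mathcal{S}\setminus\langle P_1,\ldots,P_k\rangle$, which has exactly $2^m-2^k$ elements. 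By Lemma~\ref{lem-generator-abelian-pauli1}(c) applied with $\mathcal{G}'=\{P_1,\ldots,P_k\}$, the resulting set $\{P_1,\ldots,P_{k+1}\}$ is still a minimal generating set. Multiplying the number of choices across the $m$ stages yields $\prod_{k=0}^{m-1}(2^m-2^k)$ ordered tuples.

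I then need to verify two things: first, that every ordered tuple produced this way generates exactly $\mathcal{S}$ (not a proper subgroup); and second, that every minimal generating set of $\mathcal{S}$, in every ordering, arises in the construction. The first holds because $\langle P_1,\ldots,P_m\rangle$ has order $2^m=|\mathcal{S}|$ by Lemma~\ref{lem-generator-abelian-pauli1}(b), and is contained in $\mathcal{S}$, so equality holds. The second is immediate: given any ordering $(P_1,\ldots,P_m)$ of a minimal generating set of $\mathcal{S}$, the prefix $\{P_1,\ldots,P_k\}$ is itself a minimal generating set of $\langle P_1,\ldots,P_k\rangle$, so by the second part of Lemma~\ref{lem-generator-abelian-pauli1}(c) the element $P_{k+1}$ lies outside $\langle P_1,\ldots,P_k\rangle$, matching the constraint in the construction. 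Dividing by $m!$ then gives the claimed formula.

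The only subtle point, which is the main thing to check carefully, is the application of Lemma~\ref{lem-generator-abelian-pauli1}(c) to ensure that extending an independent set by an element outside the generated subgroup preserves minimality; everything else is bookkeeping. I expect no serious obstacle beyond this.
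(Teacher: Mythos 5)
Your proposal is correct and follows essentially the same route as the paper: build the generating set one element at a time, with $2^m-2^k$ choices at step $k+1$ (the complement of the order-$2^k$ subgroup generated so far), then divide by $m!$ to pass from ordered tuples to sets. The one point you flag — that adjoining an element outside $\langle P_1,\ldots,P_k\rangle$ preserves minimality — is also left implicit in the paper's proof and follows quickly from Lemma~\ref{lem-generator-abelian-pauli1}(b), since the enlarged subgroup has order $2^{k+1}$ and so cannot be generated by fewer than $k+1$ elements.
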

\begin{proof}
The case $\mathcal{S} = \{I\}$ is obvious, so assume that
$|\mathcal{S}| \geq 2$. By Lemma
\ref{lem-generator-abelian-pauli1}(b), if $\mathcal{G}$ is a minimal
generating set of $\mathcal{S}$, then $I \notin \mathcal{G}$, and so
the first element of $\mathcal{G}$ can be chosen in $(2^m - 1)$
ways. Now suppose the first $k < m$ elements of $\mathcal{G}$ have
already been chosen. These $k$ elements form a minimal generating
set that generates a subgroup of $\mathcal{S}$ of order $2^k$. Thus
the $(k+1)$st element can be chosen in $(2^m - 2^k)$ ways. Iterating
over $0 \leq k \leq m -1$, the number of ways to form the minimal
generating set $\mathcal{G}$ using this process is
$\prod_{k=0}^{m -1} (2^m - 2^k)$. Since we do not want to count the
permutations of the elements in $\mathcal{G}$, the number of
distinct commuting minimal generating sets $\mathcal{G}$ is
$\left( \prod_{k=0}^{m -1} (2^m - 2^k) \right) / m!$.
\end{proof}

We can now easily count the number of commuting subgroups of a fixed order.

\begin{lemma}
\label{lem-counting-commuting-subgroups}
The number $N_m$ of distinct commuting subgroups of $\mathcal{P}_n / K$ of order $2^m$, for $0 \leq m \leq n$, is
\begin{equation}
N_m = \prod_{k=0}^{m-1} \frac{(4^n / 2^k - 2^k)}{(2^m - 2^k)}.
\label{eq:counting-commuting-subgroups}
\end{equation}
\end{lemma}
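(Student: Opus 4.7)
The plan is a standard double-counting argument: count the total number of commuting minimal generating sets of size $m$ in $\mathcal{P}_n/K$ in two different ways, then solve for $N_m$.

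First, let $T_m$ denote the total number of commuting minimal generating sets of cardinality $m$ in $\mathcal{P}_n/K$. I would obtain $T_m$ by applying Lemma~\ref{lem-counting-extensions-commuting} with $\mathcal{G}=\emptyset$ (so $|\mathcal{G}|=0$ and $m'=m$): this counts the number of ways to build such a set from scratch. For $m\geq 1$ this gives
\[
T_m \;=\; \frac{1}{m!}\prod_{k=0}^{m-1}\!\left(\tfrac{4^n}{2^k}-2^k\right),
\]
while $T_0=1$ (the empty generating set), consistent with the edge case $\mathcal{G}=\emptyset$ in that lemma. I should note that this uses the observation from the proof of Lemma~\ref{lem-counting-extensions-commuting} that extending the empty set by one element gives $4^n-1$ choices (equivalently, the $k=0$ term of the product), so the formula applies uniformly.

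Next, by Lemma~\ref{lem-counting-commuting-generating-sets}, every commuting subgroup $\mathcal{S}\subseteq\mathcal{P}_n/K$ of order $2^m$ is generated by exactly
\[
R_m \;=\; \frac{1}{m!}\prod_{k=0}^{m-1}(2^m-2^k)
\]
distinct commuting minimal generating sets. Since each commuting minimal generating set of size $m$ generates a unique commuting subgroup of order $2^m$ (by Lemma~\ref{lem-generator-abelian-pauli1}(b)), the total count $T_m$ equals $N_m\cdot R_m$. Solving for $N_m$ and cancelling the $m!$ factors yields
\[
N_m \;=\; \frac{T_m}{R_m} \;=\; \prod_{k=0}^{m-1}\frac{4^n/2^k-2^k}{2^m-2^k},
\]
which is precisely \eqref{eq:counting-commuting-subgroups}. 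For $m=0$ the product is empty and gives $N_0=1$, matching the only commuting subgroup $\{I\}$.

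There is no real obstacle beyond bookkeeping: the main thing to double-check is that the $m=0$ and $m=1$ boundary cases are handled consistently in both Lemma~\ref{lem-counting-extensions-commuting} (where the case $m=0,m'=1$ is treated specially) and Lemma~\ref{lem-counting-commuting-generating-sets}, and that the constraint $m\leq n$ on commuting minimal generating sets (from Theorem~\ref{Theorem:maximally-commuting-maximum}) is the reason the formula is only asserted in that range.
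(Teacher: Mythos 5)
Your proposal is correct and follows essentially the same double-counting argument as the paper: count commuting minimal generating sets of size $m$ via Lemma~\ref{lem-counting-extensions-commuting} applied from the empty set, divide by the per-subgroup count from Lemma~\ref{lem-counting-commuting-generating-sets}, and handle $m=0$ separately. Your explicit remarks on the $m'=1$ boundary case and on why each minimal generating set determines a unique subgroup only make explicit what the paper leaves implicit.
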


\begin{proof}
If $m = 0$, $\mathcal{S} = \{I\}$ is the only commuting subgroup of
order $1$, and so the statement is true. Now assume that $m > 0$,
and so by Lemma \ref{lem-generator-abelian-pauli1}(b), if
$\mathcal{G}$ is a minimal generating set of a commuting subgroup
$\mathcal{S}$ of order $2^m$, then $I \notin \mathcal{G}$. By Lemma
\ref{lem-counting-extensions-commuting}, the number of distinct ways
to form a commuting minimal generating set of cardinality $m$ is
$\left( \prod_{k=0}^{m-1} (4^n / 2^k - 2^k) \right) / m!$, where we
note that the formula is correct even when $m=1$. These generate all
possible commuting subgroups of $\mathcal{P}_n / K$ of order $2^m$,
but as each such subgroup is generated exactly by
$\left( \prod_{k=0}^{m -1} (2^m - 2^k) \right) / m!$ distinct
commuting minimal generating sets by Lemma
\ref{lem-counting-commuting-generating-sets}, we have that
\begin{equation}
N_m = \frac{\left( \prod_{k=0}^{m-1} (4^n / 2^k - 2^k) \right) / m!}{\left( \prod_{k=0}^{m -1} (2^m - 2^k) \right) / m!} = \prod_{k=0}^{m-1} \frac{(4^n / 2^k - 2^k)}{(2^m - 2^k)}.
\end{equation}
\end{proof}

We therefore have the following result (see also~\cite{SAN2007Pa} and references therein):

\begin{corollary}
\label{cor-counting-maximum-commuting-subgroups}
The number of distinct maximally commuting subgroups of
$\mathcal{P}_n / K$ is $\prod_{k=0}^{n-1} (1 + 2^{n-k})$.
\end{corollary}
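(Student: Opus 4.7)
The plan is to derive this count as an immediate consequence of Lemma~\ref{lem-counting-commuting-subgroups} specialized to the value of $m$ corresponding to maximal commuting subgroups, together with a short algebraic simplification. By Theorem~\ref{Theorem:maximally-commuting-maximum}, any maximally commuting subgroup of $\mathcal{P}_n/K$ has order $2^n$, and conversely any commuting subgroup of order $2^n$ is maximally commuting. Hence the count of maximally commuting subgroups is exactly $N_n$ in the notation of Lemma~\ref{lem-counting-commuting-subgroups}, i.e.
\[
N_n \;=\; \prod_{k=0}^{n-1} \frac{4^n/2^k - 2^k}{2^n - 2^k}.
\]

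The task then reduces to showing that each factor in this product equals $1+2^{n-k}$. First I would pull a common factor of $2^k$ out of both numerator and denominator, writing the numerator as $2^k(2^{2(n-k)}-1)$ and the denominator as $2^k(2^{n-k}-1)$. The $2^k$ cancels, and then the factorization $2^{2(n-k)}-1 = (2^{n-k}-1)(2^{n-k}+1)$ collapses the $k$th factor to $2^{n-k}+1$. Taking the product over $k=0,\ldots,n-1$ yields exactly $\prod_{k=0}^{n-1}(1+2^{n-k})$, as desired.

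There is essentially no obstacle here; the result is a one-line reindexing once Lemma~\ref{lem-counting-commuting-subgroups} is invoked. The only thing to be careful about is the edge case $n=0$, where the product over an empty range is $1$, matching the fact that $\mathcal{P}_0/K = \{I\}$ has exactly one maximally commuting subgroup. Everything else is elementary algebraic manipulation.
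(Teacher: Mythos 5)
Your proposal is correct and follows the paper's own route exactly: the paper also obtains the corollary by setting $m=n$ in Lemma~\ref{lem-counting-commuting-subgroups}, and your factorization $(2^{2(n-k)}-1)/(2^{n-k}-1)=2^{n-k}+1$ is precisely the (unstated) simplification the paper relies on. No gaps.
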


\begin{proof}
The proof follows from Lemma \ref{lem-counting-commuting-subgroups}
by setting $m = n$ in \eqref{eq:counting-commuting-subgroups}.
\end{proof}

\subsection{Anticommuting sets}\label{Sec:anticommuting-counting}

We now return to the question of how many ways it is possible to
extend a minimal generating set of anticommuting abelian Paulis, which
was originally raised in
Section~\ref{Sec:AnticommutingGeneration}. The following theorem
specifies in how many ways this can be achieved so that the larger set
is still a minimal generating set.

\begin{theorem}
\label{thm-counting-extensions-anticommuting}
Let $\mathcal{G} \subseteq \mathcal{P}_n / K$ be an anticommuting
minimal generating set, possibly empty, and $|\mathcal{G}| = m$.
Consider the extension of $\mathcal{G}$ to a larger anticommuting
minimal generating set $\mathcal{G}' \subseteq \mathcal{P}_n / K$,
such that $\mathcal{G} \subset \mathcal{G}'$, with
$|\mathcal{G}'| = m'$ and $m' \leq 2n$.  If $\mathcal{G} = \{I\}$,
then it cannot be extended. When $m' = 1$ and $m = 0$, there are $4^n$
distinct ways to perform the extension. Otherwise there are
$\left( \prod_{k=m}^{m'-1} s(k) \right) / (m' - m)!$ distinct ways to
extend the set, where
\begin{equation}
s(k) = 
\begin{cases}
4^n / 2^k & \;\; \text{if } k \text{ is odd}, \\
4^n / 2^k - 1 & \;\; \text{if } k \text{ is even}.
\end{cases}
\label{eq:counting-func-extensions}
\end{equation}
\end{theorem}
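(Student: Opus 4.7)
The plan is to follow the template of the commuting analogue (Lemma~\ref{lem-counting-extensions-commuting}): first dispose of the degenerate cases, then for an arbitrary anticommuting minimal generating set $\mathcal{T}$ of a fixed size $k$ count the number of valid one-step extensions, and finally multiply these counts across $k = m, m+1, \ldots, m'-1$ and divide by $(m'-m)!$ to account for the orderings of the newly added elements.

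For the degenerate cases, $\mathcal{G}=\{I\}$ cannot be extended, because by Lemma~\ref{lem-generator-abelian-pauli1}(c) any minimal generating set of size at least $2$ cannot contain $I$. In the case $m=0$, $m'=1$, every singleton is trivially an anticommuting minimal generating set (the generated subgroup is either $\{I\}$ or a two-element subgroup), giving the stated count of $4^n$.

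The heart of the argument is a single-step count: given an anticommuting minimal generating set $\mathcal{T}$ with $|\mathcal{T}|=k$ and $\mathcal{T}\neq\{I\}$, how many $P\in\mathcal{P}_n/K$ satisfy that $\mathcal{T}\cup\{P\}$ is again an anticommuting minimal generating set? Such a $P$ must anticommute with every element of $\mathcal{T}$ and must lie outside $\langle\mathcal{T}\rangle$, otherwise minimality fails by Lemma~\ref{lem-generator-abelian-pauli1}(c). I split on the parity of $k$. When $k$ is even and $k\geq 2$, Theorem~\ref{thm-anticommuting-commutativity-pattern}(a) gives exactly one element with the all-anticommuting map in each of the $4^n/2^k$ cosets; one of these sits in $\langle\mathcal{T}\rangle$ itself, namely $\prod\mathcal{T}$ by Theorem~\ref{Thm:AnticommutingBasics}(a), so there are $4^n/2^k - 1 = s(k)$ valid choices. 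For $k=0$ (empty $\mathcal{T}$) the anticommutativity condition is vacuous and we only need $P\neq I$, again giving $4^n-1=s(0)$. When $k$ is odd, the coset $\langle\mathcal{T}\rangle$ lies in the family $\mathcal{F}_1$ of Theorem~\ref{thm-anticommuting-commutativity-pattern}(c), since $\zet{I}{\mathcal{T}}$ is the all-commuting map and thus $f(\zet{I}{\mathcal{T}})=k\equiv 1 \pmod 2$; the all-anticommuting map has $f=0$, so it is realized only in the $\mathcal{F}_0$ cosets, contributing $2\cdot 2^{2n-k-1}=4^n/2^k=s(k)$ elements, and none of these lie in $\langle\mathcal{T}\rangle$, so no exclusion is needed.

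Finally, because $s(k)$ depends only on the current size $k$ and not on the particular intermediate $\mathcal{T}$, the number of ordered extensions $(P_1,\ldots,P_{m'-m})$ from $\mathcal{G}$ to $\mathcal{G}'$ is $\prod_{k=m}^{m'-1} s(k)$. Since every subset of a minimal anticommuting generating set is again a minimal anticommuting generating set (once more by Lemma~\ref{lem-generator-abelian-pauli1}(c)), every permutation of the $m'-m$ added elements yields a valid ordered extension, and dividing by $(m'-m)!$ produces the claimed formula. The main subtlety I anticipate is the odd-$k$ case, which requires carefully tracking the parity of $f(\zet{\cdot}{\mathcal{T}})$ across cosets via Theorem~\ref{thm-anticommuting-commutativity-pattern}(c) to verify that $\langle\mathcal{T}\rangle$ itself contributes no candidate to exclude, in contrast with the even case where exactly one such candidate must be subtracted.
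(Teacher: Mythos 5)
Your proposal is correct and follows essentially the same route as the paper: degenerate cases first, then a one-step count split on the parity of $k$ using Theorem~\ref{thm-anticommuting-commutativity-pattern}, then a product over $k$ divided by $(m'-m)!$. The only cosmetic difference is that you certify the coset $\langle\mathcal{T}\rangle$ lies in $\mathcal{F}_1$ via $f(\zet{I}{\mathcal{T}})=k$, whereas the paper uses $f(\zet{P}{\mathcal{G}})=1$ for $P\in\mathcal{G}$; both are valid.
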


\begin{proof}
If $\mathcal{G} = \{I\}$, then it is maximally anticommuting and
cannot be extended. For the case $m' = 1, \; m = 0$, every singleton
set is an anticommuting minimal generating set, and so there are
$4^n$ ways to extend $\mathcal{G}$.

For the rest of the proof we assume $m' > 1$, and
$\mathcal{G} \neq \{I\}$. Fix any arbitrary ordering of the elements
in $\mathcal{G}$. We first consider the case when $m' = k + 1$ and
$m = k$. There are exactly $4^n / 2^k$ cosets of
$\langle \mathcal{S} \rangle$. When $k$ is odd, half of the cosets
contain exactly $2$ elements that anticommute with all the elements
in $\mathcal{G}$, and the other half contain none, by
Theorem~\ref{thm-anticommuting-commutativity-pattern}(b)
and~(c). Moreover none of these elements are in
$\langle \mathcal{G}\rangle$: it follows from
equation~\eqref{eq:commuting-count-notation} that for any element
$P \in \mathcal{G} $ we have $f(\mathcal{Z}_P) \equiv 1 \mod 2$, and
so Theorem~\ref{thm-anticommuting-commutativity-pattern}(b)
applies. This gives exactly $4^n / 2^k$ distinct ways to extend
$\mathcal{G}$. If $k$ is even, then by Theorem
\ref{thm-anticommuting-commutativity-pattern}(a) each coset contains
exactly one element that anticommutes with all the elements in
$\mathcal{G}$, and so there are exactly $4^n / 2^k - 1$ distinct
ways to extend $\mathcal{G}$ (counting every element in each coset,
except $\prod \mathcal{G}$), because if we include
$\prod \mathcal{G} \in \langle \mathcal{G} \rangle$, then
$\mathcal{G} \cup \{\prod \mathcal{G}\}$ is not a minimal generating
set.

We now return to the case of a general $m'$. If
$\mathcal{G} = \emptyset$, then there are $4^n - 1$ ways to
initialize $\mathcal{G}'$, since we have to exclude $I$. It follows
from the previous paragraph that adding the $(k+1)$-th element for
$k > 0$, can be done in $s(k)$ distinct ways, with $s(k)$ defined as
in~\eqref{eq:counting-func-extensions}.  Since there are exactly
$(m'-m)!$ permutations of the new elements, the number of distinct
extensions of $\mathcal{G}$ to $\mathcal{G}'$ is given by
$\left( \prod_{k=m}^{m'-1} s(k) \right) / (m' - m)!$.
\end{proof}

As a result of the previous theorem, we can now immediately count the
number of maximally anticommuting sets. This is carried out in the
next corollary.
\begin{corollary}
If $N_m$ is the number of maximally anticommuting subsets of
$\mathcal{P}_n / K$ of cardinality $m$, then using $s(k)$ as defined
in \eqref{eq:counting-func-extensions}
\begin{equation}
\label{eq:counting-func-maximally-anticommuting}
N_m = 
\begin{cases}
\frac{1}{m!} \prod\limits_{k=0}^{m-2} s(k) & \;\; \text{if } m \text{ odd, and } m \leq 2n+1, \\
0 & \;\; \text{otherwise}.
\end{cases}
\end{equation}
\end{corollary}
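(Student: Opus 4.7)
The plan is to reduce this counting problem to Theorem~\ref{thm-counting-extensions-anticommuting} via Corollary~\ref{lem-anticommuting-prod} and Theorem~\ref{Thm:AnticommutingBasics}(\ref{Thm:AnticommutingBasics-Q})--(\ref{Thm:AnticommutingBasics-minimal-generating-prod-I}). I would first dispatch the ``otherwise'' clause: by Theorem~\ref{Thm:AnticommutingBasics}(\ref{Thm:AnticommutingBasics-n-odd}) any maximal set has odd cardinality, and by Lemma~\ref{lem-max-anticommuting-paulis} no anticommuting set can exceed $2n+1$, so $N_m = 0$ whenever $m$ is even or $m > 2n+1$. The case $m=1$ is also immediate: a singleton $\{P\}$ can fail to be extendable only when no element anticommutes with $P$, which by Lemma~\ref{lem-pattern-generators} forces $P = I$; hence $N_1 = 1$, matching the empty-product formula.

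For the main case, where $m \geq 3$ is odd and $m \leq 2n+1$, I would set up a two-step bijective count. Let $A_{m-1}$ denote the collection of anticommuting minimal generating sets of cardinality $m - 1$ in $\mathcal{P}_n/K$, and let $M_m$ denote the collection of maximally anticommuting sets of cardinality $m$. Since $m - 1$ is even and nonzero, Theorem~\ref{Thm:AnticommutingBasics}(\ref{Thm:AnticommutingBasics-Q}) associates to each $\mathcal{G}' \in A_{m-1}$ the maximally anticommuting set $\Phi(\mathcal{G}') := \mathcal{G}' \cup \{\prod \mathcal{G}'\} \in M_m$; conversely, given $\mathcal{H} \in M_m$, we have $\prod \mathcal{H} = I$ by Corollary~\ref{lem-anticommuting-prod}, so Theorem~\ref{Thm:AnticommutingBasics}(\ref{Thm:AnticommutingBasics-minimal-generating-prod-I}) guarantees that $\mathcal{H} \setminus \{P\} \in A_{m-1}$ for every choice of $P \in \mathcal{H}$. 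This exhibits $\Phi$ as an $m$-to-$1$ surjection $A_{m-1} \twoheadrightarrow M_m$, so $|M_m| = |A_{m-1}|/m$.

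It then remains to count $|A_{m-1}|$, which is exactly the number of extensions of the empty anticommuting minimal generating set to one of size $m-1$. Applying Theorem~\ref{thm-counting-extensions-anticommuting} with the initial set $\mathcal{G} = \emptyset$ (i.e.\ $m=0$) and target size $m' = m-1 \geq 2$, we obtain
\[
|A_{m-1}| \;=\; \frac{1}{(m-1)!} \prod_{k=0}^{m-2} s(k),
\]
so that
\[
N_m \;=\; |M_m| \;=\; \frac{|A_{m-1}|}{m} \;=\; \frac{1}{m!} \prod_{k=0}^{m-2} s(k),
\]
which is the claimed formula. The only subtle point is verifying that the correspondence $\Phi$ is genuinely $m$-to-$1$: one must check both that every fibre has size exactly $m$ (no element of $\mathcal{H} \in M_m$ can be ``distinguished'' as the one produced by $\Phi$, since every one-element removal lands in $A_{m-1}$) and that distinct $\mathcal{H}, \mathcal{H}'$ give disjoint fibres (which is obvious set-theoretically). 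Once that sanity check is made, the rest is a direct substitution into the formula from Theorem~\ref{thm-counting-extensions-anticommuting}.
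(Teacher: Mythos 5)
Your proposal is correct and follows essentially the same route as the paper: the ``otherwise'' cases are dispatched by Theorem~\ref{Thm:AnticommutingBasics}(\ref{Thm:AnticommutingBasics-n-odd}) and Lemma~\ref{lem-max-anticommuting-paulis}, and the main case counts anticommuting minimal generating sets of cardinality $m-1$ via Theorem~\ref{thm-counting-extensions-anticommuting} and divides by $m$ through the remove-an-element / adjoin-the-product correspondence. Your explicit verification that the map $\Phi$ is an $m$-to-$1$ surjection is a welcome elaboration of a step the paper's proof states only implicitly.
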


\begin{proof}
By Theorem
\ref{Thm:AnticommutingBasics}(\ref{Thm:AnticommutingBasics-n-odd})
and Lemma \ref{lem-max-anticommuting-paulis}, the assertion is
clearly true when $m$ is even, or when $m > 2n+1$. The statement is
also true for $m = 1$, as $\{I\}$ is the only such set. Now let
$\mathcal{G} \subseteq \mathcal{P}_n / K$ be a maximally
anticommuting set with $|\mathcal{G}| = m$, $m$ odd, and
$3 \leq m \leq 2n+1$. Then there are $m$ distinct ways to remove an
element from $\mathcal{G}$ to obtain a minimal generating
set. Noting that $I \notin \mathcal{G}$ when $m \geq 3$, the
statement then follows as there are
$\left( \prod_{k=0}^{m-2} s(k) \right)/ (m-1)!$ distinct
anticommuting minimal generating sets set of cardinality $m - 1$, by
Theorem \ref{thm-counting-extensions-anticommuting}.
\end{proof}

\section{Discussion}\label{Sec:Conclusions}

An interesting class of commuting and anticommuting Paulis is one in
which all Paulis are formed exclusively by Kronecker products of the
three Pauli matrices $\sigma_x$, $\sigma_y$, and
$\sigma_y$. Obviously, this restriction limits the maximum possible
size of commuting sets; at the very least, the $I$ element is no
longer present. The size of anticommuting sets is also affected, and
whereas the maximum size of $2n+1$ can be attained for $n$-Paulis for
$n=1$ and $n=4$:
\[
\mathcal{M}_1 = \left[\begin{array}{ccc}
x&y&z\end{array}\right],\qquad
\mathcal{M}_4 = \left[\begin{array}{ccccccccc}
x&x&x&y&y&y&z&z&z\\
x&y&z&x&y&z&x&y&z\\
x&y&z&y&z&x&z&x&y\\
x&y&z&z&x&y&y&z&x\\
\end{array}\right],
\]
it is easy to show that this is not the case for $n=2$ and
$n=3$. Indeed, it follows from Theorem~\ref{thm-maximum-anticommuting}
that any anticommuting set $\mathcal{G}$ of maximum size consisting of
only $\sigma_x$, $\sigma_y$, and $\sigma_z$ terms, must satisfy
$\vert\mathcal{C}_{\ell}\vert \geq 3$ for all $\ell\in\{x,y,z\}$,
since otherwise these sets could not multiply to $I$. For the set to
be maximal, we therefore require $k \geq 9$, but this exceeds the
maximum possible size of $2n+1$ for these values of $n$.

An alternative formulation of commutativity and anticommutativity is
to ask for sets of vectors in $\mathrm{GF}(3)^n$ such that the Hamming
distance between every pair of vectors is even or odd,
respectively. The question of how large such sets can be is studied
in~\cite{ALO2007La,ALO2007Lb}. They show that the asymptotic size is
$\Theta(2^n)$ for commuting sets, and $\Theta(n)$ for anticommuting
sets, but also provide more specific bounds. For instance,
\cite[Corollary 2.10]{ALO2007La} shows that for $n\geq 1$, any
commuting subset $\mathcal{G}_{\mathrm{xyz}} \subset \mathcal{P}_n/K$
satisfies
\[
\vert\mathcal{G}_{\mathrm{xyz}}\vert \leq \frac{2^n}{1 - \left(-\frac{1}{3}\right)^n + \left(\frac{2}{3}\right)^n}.
\]
Numerical values for the maximum possible sizes of restricted
anticommuting sets with $n\in[8]$ are given in~\cite{A128036}. For
$n = 8$ this shows that it is again possible to attain the maximum
size of $2n+1$, and indeed we have
\[
\mathcal{M}_8 = \left[\begin{array}{ccccccccccccccccc}
x&x&x&x&x&x&x&y&y&y&z&z&z&z&z&z&z\\
x&x&x&x&x&y&z&x&y&z&x&y&z&z&z&z&z\\
x&x&x&x&x&y&z&y&z&x&z&x&y&y&y&y&y\\
x&x&x&x&y&x&z&z&y&x&x&z&y&z&z&z&z\\
x&x&x&y&x&z&x&x&y&z&z&x&z&y&z&z&z\\
x&y&z&x&y&x&y&y&z&x&x&y&x&y&x&y&z\\
x&y&z&y&z&y&z&z&x&y&y&z&y&z&z&x&y\\
x&y&z&z&x&z&x&x&y&z&z&x&z&x&y&z&x\\
\end{array}\right].
\]
The values of $n$ for which these restricted anticommuting sets can
attain the maximum size remains an open question.

\section*{Acknowledgments}

The authors would like to thank Sergey Bravyi, Kristan P.~Temme, and
Theodore J.~Yoder for useful discussions. R.S.~would like to thank IBM
T.J.~Watson Research Center for facilitating the research.

\bibliographystyle{unsrt}
\bibliography{bibliography}

\end{document}